\definecolor{shadecolor}{rgb}{0.9,0.9,0.9}
\newtheorem{definition}{Definition}
\newtheorem{proposition}{Proposition}
\newtheorem{lemma}{Lemma}
\newtheorem{theorem}[proposition]{Theorem}
\newtheorem{corollary}[proposition]{Corollary}
\def\squareforqed{\hbox{\rlap{$\sqcap$}$\sqcup$}}
\def\qed{\ifmmode\squareforqed\else{\unskip\nobreak\hfil
\penalty50\hskip1em\null\nobreak\hfil\squareforqed
\parfillskip=0pt\finalhyphendemerits=0\endgraf}\fi}
\def\endenv{\ifmmode\;\else{\unskip\nobreak\hfil
\penalty50\hskip1em\null\nobreak\hfil\;
\parfillskip=0pt\finalhyphendemerits=0\endgraf}\fi}
\newenvironment{proof}{\noindent \textbf{{Proof~} }}{\hfill $\blacksquare$}
\newcounter{remark}
\newcounter{example}
\mathchardef\ordinarycolon\mathcode`\:
\def\vcentcolon{\mathrel{\mathop\ordinarycolon}}
\newmdenv[skipabove=7pt,
skipbelow=7pt,
backgroundcolor=darkblue!15,
innerleftmargin=5pt,
innerrightmargin=5pt,
innertopmargin=5pt,
leftmargin=0cm,
rightmargin=0cm,
innerbottommargin=5pt,
linewidth=1pt]{tBox}
\newmdenv[skipabove=7pt,
skipbelow=7pt,
backgroundcolor=red!15,
innerleftmargin=5pt,
innerrightmargin=5pt,
innertopmargin=5pt,
leftmargin=0cm,
rightmargin=0cm,
innerbottommargin=5pt,
linewidth=1pt]{rBox}
\newmdenv[skipabove=7pt,
skipbelow=7pt,
backgroundcolor=blue2!25,
innerleftmargin=5pt,
innerrightmargin=5pt,
innertopmargin=5pt,
leftmargin=0cm,
rightmargin=0cm,
innerbottommargin=5pt,
linewidth=1pt]{dBox}
\newmdenv[skipabove=7pt,
skipbelow=7pt,
backgroundcolor=darkkblue!15,
innerleftmargin=5pt,
innerrightmargin=5pt,
innertopmargin=5pt,
leftmargin=0cm,
rightmargin=0cm,
innerbottommargin=5pt,
linewidth=1pt]{sBox}
\definecolor{darkblue}{RGB}{0,76,156}
\definecolor{darkkblue}{RGB}{0,0,153}
\definecolor{blue2}{RGB}{102,178,255}
\definecolor{darkred}{RGB}{195,0,0}
\newcommand{\nc}{\newcommand}
\nc{\rnc}{\renewcommand}
\nc{\lbar}[1]{\overline{#1}}
\nc{\bra}[1]{\langle#1|}
\nc{\ket}[1]{|#1\rangle}
\nc{\ketbra}[2]{|#1\rangle\!\langle#2|}
\nc{\braket}[2]{\langle#1|#2\rangle}
\nc{\proj}[1]{| #1\rangle\!\langle #1 |}
\nc{\avg}[1]{\langle#1\rangle}
\nc{\liebracket}[1]{\left\langle #1\right\rangle_{\Op{Lie}}}
\nc{\smfrac}[2]{\mbox{$\frac{#1}{#2}$}}
\nc{\tr}{\operatorname{Tr}}
\nc{\ox}{\otimes}
\nc{\dg}{\dagger}
\nc{\dn}{\downarrow}
\nc{\cA}{{\cal A}}
\nc{\cB}{{\cal B}}
\nc{\cC}{{\cal C}}
\nc{\cD}{{\cal D}}
\nc{\cE}{{\cal E}}
\nc{\cF}{{\cal F}}
\nc{\cG}{{\cal G}}
\nc{\cH}{{\cal H}}
\nc{\cI}{{\cal I}}
\nc{\cJ}{{\cal J}}
\nc{\cK}{{\cal K}}
\nc{\cL}{{\cal L}}
\nc{\cM}{{\cal M}}
\nc{\cN}{{\cal N}}
\nc{\cO}{{\cal O}}
\nc{\cP}{{\cal P}}
\nc{\cQ}{{\cal Q}}
\nc{\cR}{{\cal R}}
\nc{\cS}{{\cal S}}
\nc{\cT}{{\cal T}}
\nc{\cU}{{\cal U}}
\nc{\cV}{{\cal V}}
\nc{\cX}{{\cal X}}
\nc{\cY}{{\cal Y}}
\nc{\cZ}{{\cal Z}}
\nc{\cW}{{\cal W}}
\nc{\csupp}{{\operatorname{csupp}}}
\nc{\qsupp}{{\operatorname{qsupp}}}
\nc{\var}{{\operatorname{var}}}
\nc{\rar}{\rightarrow}
\nc{\lrar}{\longrightarrow}
\nc{\polylog}{{\operatorname{polylog}}}
\nc{\wt}{{\operatorname{wt}}}
\nc{\av}[1]{{\left\langle {#1} \right\rangle}}
\nc{\supp}{{\operatorname{supp}}}
\nc{\spn}{{\operatorname{span}}}
\nc{\argmin}{{\operatorname{argmin}}}
\nc{\Adj}[1]{\Op{Ad}_{#1}}
\nc{\ad}{{\operatorname{ad}}}
\nc{\Ad}{{\operatorname{Ad}}}
\nc{\liedqnn}{{\frak{g}_{\Op{QRENN}}}}
\nc{\polysub}{{\cC_{\Op{polySub}}}}
\nc{\BPbar}{{\overline{\Op{polySub}}}}
\def\a{\alpha}
\def\x{\xi}
\def\O{\Omega}
\nc{\RR}{{{\mathbb R}}}
\nc{\CC}{{{\mathbb C}}}
\nc{\FF}{{{\mathbb F}}}
\nc{\NN}{{{\mathbb N}}}
\nc{\ZZ}{{{\mathbb Z}}}
\nc{\PP}{{{\mathbb P}}}
\nc{\QQ}{{{\mathbb Q}}}
\nc{\UU}{{{\mathbb U}}}
\nc{\EE}{{{\mathbb E}}}
\nc{\id}{{\operatorname{id}}}
\nc{\CHSH}{{\operatorname{CHSH}}}
\newcommand{\Op}{\operatorname}
\newcommand{\Var}{\Op{Var}}
\nc{\be}{\begin{equation}}
\nc{\ee}{{\end{equation}}}
\nc{\bea}{\begin{eqnarray}}
\nc{\eea}{\end{eqnarray}}
\nc{\rU}{\mbox{U}}
\nc{\ob}[1]{#1}
\nc{\norm}[1]{\left\|{#1}\right\|}
\nc{\adj}[1]{\Op{ad}_{#1}}
\nc{\SEP}{{\text{\rm SEP}}}
\nc{\NS}{{\text{\rm NS}}}
\nc{\LOCC}{{\text{\rm LOCC}}}
\nc{\PPT}{{\text{\rm PPT}}}
\nc{\EXT}{{\text{\rm EXT}}}
\nc{\Sym}{{\operatorname{Sym}}}
\nc{\ERLO}{{E_{\text{r,LO}}}}
\nc{\ERLOCC}{{E_{\text{r,LOCC}}}}
\nc{\ERPPT}{{E_{\text{r,PPT}}}}
\nc{\ERLOCCinfty}{{E^{\infty}_{\text{r,LOCC}}}}
\nc{\Aram}{{\operatorname{\sf A}}}
\def\grd@save@target#1{%
  \def\grd@target{#1}}
\def\grd@save@start#1{%
  \def\grd@start{#1}}
\tikzset{
  grid with coordinates/.style={
    to path={%
      \pgfextra{%
        \edef\grd@@target{(\tikztotarget)}%
        \tikz@scan@one@point\grd@save@target\grd@@target\relax
        \edef\grd@@start{(\tikztostart)}%
        \tikz@scan@one@point\grd@save@start\grd@@start\relax
        \draw[minor help lines,magenta] (\tikztostart) grid (\tikztotarget);
        \draw[major help lines] (\tikztostart) grid (\tikztotarget);
        \grd@start
        \pgfmathsetmacro{\grd@xa}{\the\pgf@x/1cm}
        \pgfmathsetmacro{\grd@ya}{\the\pgf@y/1cm}
        \grd@target
        \pgfmathsetmacro{\grd@xb}{\the\pgf@x/1cm}
        \pgfmathsetmacro{\grd@yb}{\the\pgf@y/1cm}
        \pgfmathsetmacro{\grd@xc}{\grd@xa + \pgfkeysvalueof{/tikz/grid with coordinates/major step}}
        \pgfmathsetmacro{\grd@yc}{\grd@ya + \pgfkeysvalueof{/tikz/grid with coordinates/major step}}
        \foreach \x in {\grd@xa,\grd@xc,...,\grd@xb}
        \node[anchor=north] at (\x,\grd@ya) {\pgfmathprintnumber{\x}};
        \foreach \y in {\grd@ya,\grd@yc,...,\grd@yb}
        \node[anchor=east] at (\grd@xa,\y) {\pgfmathprintnumber{\y}};
      }
    }
  },
  minor help lines/.style={
    help lines,
    step=\pgfkeysvalueof{/tikz/grid with coordinates/minor step}
  },
  major help lines/.style={
    help lines,
    line width=\pgfkeysvalueof{/tikz/grid with coordinates/major line width},
    step=\pgfkeysvalueof{/tikz/grid with coordinates/major step}
  },
  grid with coordinates/.cd,
  minor step/.initial=.2,
  major step/.initial=1,
  major line width/.initial=2pt,
}
\def\problem@s{}
\newcounter{problems@cnt}
\newcommand{\allproblems}{\problem@s}
\pgfplotsset{compat=1.18}
\definecolor{colortwo}{rgb}{0.4,0.77,0.17}
\definecolor{colorthree}{rgb}{0.01,0.51,0.93}
\newcommand{\update}[1]{\textcolor{magenta}{#1}}
\newcommand{\hardware}{\textcolor{black}{\textit{AMD EPYC 7542 32-Core Processor (RAM 512GB)}}}
\begin{document}
\title{Quantum Recurrent Embedding Neural Network}

\author[1]{Mingrui Jing\thanks{mjing638@connect.hkust-gz.edu.cn}}
\author[1]{Erdong Huang\thanks{ehuang652@connect.hkust-gz.edu.cn}}
\author[1]{Xiao Shi\thanks{xiaoshi@hkust-gz.edu.cn}}
% \affiliation{Thrust of Artificial Intelligence, Information Hub,
% The Hong Kong University of Science and Technology (Guangzhou), Guangdong 511453, China}

\author[2]{Shengyu Zhang\thanks{shengyzhang@tencent.com}}
% \affiliation{Tencent Quantum Laboratory}

\author[1]{Xin Wang\thanks{felixxinwang@hkust-gz.edu.cn}}

\affil[1]{\small Thrust of Artificial Intelligence, Information Hub,\par The Hong Kong University of Science and Technology (Guangzhou)}

\affil[2]{\small Tencent Quantum Laboratory}

\date{\today}
\maketitle

\begin{abstract}
Quantum neural networks have emerged as promising quantum machine learning models, leveraging the properties of quantum systems and classical optimization to solve complex problems in physics and beyond. However, previous studies have demonstrated inevitable trainability issues that severely limit their capabilities in the large-scale regime. In this work, we propose a quantum recurrent embedding neural network (QRENN) inspired by fast-track information pathways in ResNet and general quantum circuit architectures in quantum information theory. By employing dynamical Lie algebras, we provide a rigorous proof of the trainability of QRENN circuits, demonstrating that this deep quantum neural network can avoid barren plateaus. Notably, the general QRENN architecture resists classical simulation as it encompasses powerful quantum circuits such as QSP, QSVT, and DQC1, which are widely believed to be classically intractable. Building on this theoretical foundation, we apply our QRENN to accurately classify quantum Hamiltonians and detect symmetry-protected topological phases, demonstrating its applicability in quantum supervised learning. Our results highlight the power of recurrent data embedding in quantum neural networks and the potential for scalable quantum supervised learning in predicting physical properties and solving complex problems.
\end{abstract}

\tableofcontents

\newpage
%%%%%%%%%%%%%%%%%%%%%%%%%%%%%%%%%%%%%%%%%%%%%%%%%%%%%%%%%%%%%%%%%%%%%%%%%%%

\section{Introduction}
Quantum technology offers the potential to tackle computational challenges once deemed intractable on classical devices, opening new avenues in optimization, simulation, and data analysis. In recent years, quantum processors have demonstrated clear quantum advantages in specialized tasks~\cite{riste2017demonstration,bravyi2018quantum, bravyi2020quantum, huang2022quantum,daley2022practical,acharya2024quantum}, underscoring the rapidly expanding horizon for advanced computations and fostering growing interest in merging quantum methods with artificial intelligence. This synergy has led to the emergence of \textit{quantum machine learning} (QML)~\cite{schuld2015introduction,biamonte2017quantum,zhang2020toward,li2022recent,qian2022dilemma,caro2022generalization,cerezo2022challenges,tian2023recent,jerbi2023quantum}, wherein classical machine learning techniques are augmented by quantum phenomena such as superposition and entanglement, advancing data processing~\cite{li2021vsql,Yu2022power} and pattern recognition~\cite{liu2022representation,senokosov2024quantum} in ways that may unlock unprecedented performance gains.

A foundational element of QML is the use of \textit{quantum neural networks} (QNNs)~\cite{benedetti2019parameterized,ostaszewski2021structure}, wherein quantum gates with tunable parameters are iteratively optimized to carry out specific computational tasks. Although modern noisy intermediate-scale quantum (NISQ) technologies~\cite{Preskill2018quantum} do not yet support fully universal quantum computation, they support \textit{variational quantum algorithms} (VQAs)~\cite{cerezo2021variational,bittel2021training}, effectively coupling quantum hardware with classical optimizers in a feedback loop. These VQAs have shown promise in domains such as quantum chemistry~\cite{bauer2020quantum,hempel2018quantum}, combinatorial optimization~\cite{farhi2014quantum,zhou2020quantum}, and machine learning~\cite{chen2020variational,takaki2021learning,zhang2024statistical}, and they continue to evolve alongside advances in circuit and hardware design.

Despite significant progress, training deeper or more expressive quantum neural networks (QNNs) continues to be hindered by the phenomenon of \textit{barren plateaus} (BP)—regions of parameter space where gradients vanish, making optimization inefficient~\cite{Larocca2025,Mcclean2018barren}. This gradient suppression poses a major bottleneck for scaling quantum machine learning (QML) models to larger sizes. The barren plateau problem in QNNs bears similarities to the vanishing gradient problem encountered in classical deep neural networks (DNNs)~\cite{Sze2017efficient}, where gradients can exponentially diminish as they backpropagate through many layers, leading to stagnated training. Classical BPs were largely overcome by innovations such as residual neural networks (ResNets)~\cite{He2016deep}, which introduce skip connections to facilitate gradient flow and enable the training of very deep architectures. In contrast, quantum BP arises due to the exponential concentration of measure in high-dimensional Hilbert spaces and the expressiveness of quantum circuits, making it both analogous to and fundamentally distinct from its classical counterpart.

The sources of BP in QNNs are diverse, including circuit expressiveness~\cite{ortiz2021entanglement,zhao2021analyzing,pesah2021absence,holmes2022connecting}, initialization and measurement choices~\cite{Cerezo2021cost,abbas2021power,holmes2021barren}, and quantum noise~\cite{wang2021noise,schumann2024emergence,garcia2024effects}. To address these challenges, recent efforts have introduced refined initialization strategies, alternative training heuristics, and QNN architectures specifically designed to suppress BPs~\cite{Zhang2022,wang2024trainability,shi2024avoiding}. Examples include hardware-efficient ansatzes tailored to device constraints, and problem-informed designs like the Hamiltonian variational ansatz~\cite{park2024hamiltonian,wiersema2020exploring} and QAOA~\cite{farhi2014quantum,zhou2020quantum}. A unifying theoretical foundation for these approaches has emerged through the lens of dynamical Lie algebras (DLA)~\cite{Fontana2024characterizing,Ragone2024lie}, where the gradient variance of trace-form loss functions is governed by the DLA structure and the projections of the initial state and measurement operator onto its components. This perspective connects various BP phenomena and aligns with recent Lie-algebraic trainability analyses in structured QML settings~\cite{Allcock2024dynamical,kazi2024analyzing,Mao2024towards}.

\begin{figure}[h!]
    \centering
    \includegraphics[width=0.9\linewidth]{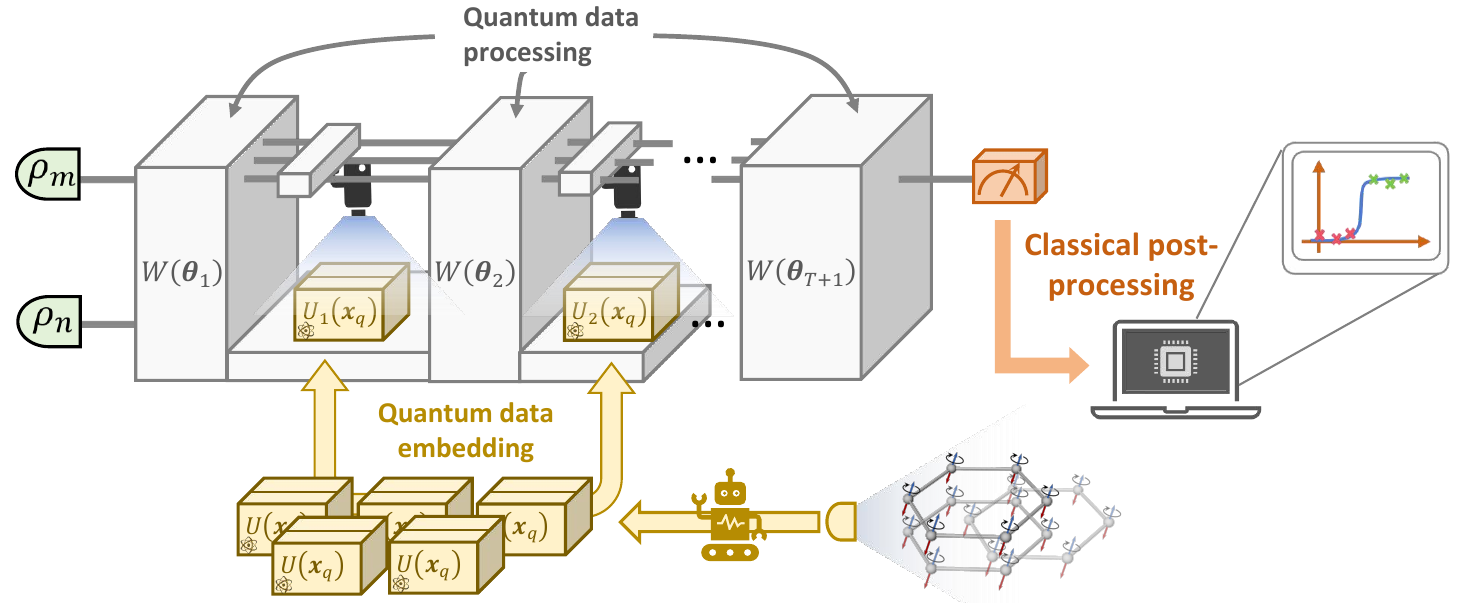}
    \caption{\textbf{The general framework of quantum supervised learning from a view of quantum circuit architecture.} The quantum data, represented as physical evolution $U_l(\bm{x}_q)$, can be repeatedly embedded into the processing blocks made of QNN layers $W(\bm{\theta}_t)$. Measurements acting on a few systems followed by classical post-processing are required in order to deliver accurate predictions.}
    \label{fig:main_fig}
\end{figure}

Nonetheless, many investigations into QNNs' design restrict their generators to individual or carefully chosen Pauli matrices within a periodic circuit framework, with little algorithmic and architectural concerns. Besides, existing design paradigms lack the proper integration and utilization of data, either classical or quantum, thereby overlooking its impact on the trainability of QNNs and their capacity to perform tasks, even though the data reuploading~\cite{perez2020data} has already demonstrated significant advantages in various computational tasks within the realm of quantum supervised learning~\cite{Havlivcek2019supervised,Schuld2021supervised}.

{In this work, we introduce the quantum recurrent embedding neural network (QRENN), a versatile framework grounded in quantum circuit architecture theory~\cite{Chiribella2008} that tightly integrates tunable QNN components with quantum data embedding. We propose a novel training strategy inspired by quantum hypothesis testing~\cite{Chen2024hypothesis,Regula2023postselected}, and analyze trainability through the DLA formalism. Our theoretical analysis proves that QRENN avoids barren plateaus under control embeddings, due to the gentle expansion of the Lie algebra via a direct sum structure with bounded dimensions. Importantly, QRENN resists dequantization arguments such as those proposed by Cerezo et. al.~\cite{Cerezo2023does}, as it naturally encompasses classically intractable primitives such as QSP and QSVT~\cite{Gilyen2019quantum,Low2017optimal,Rossi2022multivariable}. These connections underscore the model's robustness against classical surrogates, while maintaining trainability under realistic assumptions.}

{We validate our theoretical claims with numerical experiments showing polynomially decaying gradient variance under random initialization. We then apply QRENN to quantum supervised learning tasks, including Hamiltonian classification and symmetry-protected topological (SPT) phase detection. Our results demonstrate high accuracy and strong robustness against classical and quantum noise. Taken together, these findings highlight the power of data-aware QNN architectures and suggest that QRENN offers a scalable and trainable path forward for quantum machine learning and quantum many-body physics.}

\section{Preliminaries}
\paragraph{Quantum neural networks (QNNs).}
{Quantum neural networks (QNNs)~\cite{gupta2001quantum} have emerged as a promising subclass of variational quantum algorithms (VQAs)~\cite{cerezo2021variational}, standing at the intersection of quantum computing and artificial intelligence, with their potential to harness quantum advantages in learning and inference tasks~\cite{abbas2021power}. A general QNN model is composed of three essential components: a designed input quantum state $\rho$ and an observable $O$, the data embedding register through the set of unitary $\{U_l\}_l$ and the data processing register through parameterized circuit template~\cite{Schuld2020} $W(\bm{\theta})$. The total QNN can be represented as,
\begin{equation}
\bm{U}(\bm{x};\bm{\theta}) = \prod_l U_l(\bm{x}) W(\bm{\theta}_l),
\end{equation}
where $\bm{x}$ is the data embedded in the $l$-th layer and $\bm{\theta}_l$ represents the tunable parameters in the $l$-th processing layer. The QNN is then optimized for a particular task, analogous to the classical machine learning models~\cite{Dunjko2018} and the final output is extracted from the measurements on the state $\bm{U}(\bm{x};\bm{\theta})\rho \bm{U}(\bm{x};\bm{\theta})^{\dagger}$ with respect to Hermitian $O$. These measurements, as the predictions from the model, are often classically post-processed and passed to a loss function, and the goal is to minimize the loss via varying $\bm{\theta}$.}

\paragraph{Barren plateau.}
A major challenge in training QNNs is the phenomenon of \textit{barren plateaus}~\cite{Mcclean2018barren,holmes2021barren}, where the gradient, of high probability, can vanish exponentially with system size, 
which induces a tremendous growth in the complexity of accurately evaluating the gradient values. Besides, exponentially small gradient magnitudes make it difficult to accurately determine the correct optimization direction, leading to a challenge for assessing convergence during the practical optimization process. 
% \update{(to check and complement: exponentially small convergence rate by BP?)} 
Given a fixed pair of an input quantum state $\rho$ and an observable operator $O$, the trace form loss function $\ell(\bm{x};\bm{\theta}) = \tr(\bm{U}(\bm{x};\bm{\theta}) \rho \bm{U}(\bm{x};\bm{\theta})^{\dagger} O)$ is often used to compose the total loss in QML. The structure of $\bm{U}(\bm{x};\bm{\theta})$ directly affects the behaviors during training.

Formally, we say that the QNN experiences a \textit{Barren Plateaus} (BP) if the variance of the loss function gradient with respect to any parameter $\theta_\mu$ in the model decays exponentially with the system size~\cite{Mcclean2018barren}, i.e.,
\begin{equation*}
    \Var_{\bm{\theta}\sim\nu}[\partial_{\theta_{\mu}}\ell(\bm{x};\bm{\theta})]\in \cO\left(\frac{1}{b^n}\right),
\end{equation*}
where $n$ denotes the number of qubits in the system and $b>1$. Typically, $\nu$ is the uniform distribution over the range of the parameters. We view the QNN as \textit{untrainable} if it experiences a BP regarding a random initialization. Conversely, if the variance of its gradient remains polynomially bounded from below with respect to the number of qubits (i.e., in the order of $\Omega(1/\Op{poly}(n))$), then the model does not experience BP, one of the biggest challenges in training of parametrized quantum circuits. 

\paragraph{Dynamical Lie algebras.} \textit{Dynamical Lie algebra} (DLA)~\cite{Alessandro2008,larocca2022diagnosing} was originally introduced in the field of quantum control theory. In recent years, it has gained significant traction in the study of VQAs' trainability issues, particularly in addressing challenges related to BP~\cite{Fontana2024characterizing,Ragone2024lie}. {Consider a sufficiently deep, $L$-layer periodic parametrized quantum circuit in the form}  
\begin{equation*}
    \bm{U}(\bm{\theta}) =\prod_{l=1}^L \prod_{k=1}^K e^{-i\theta_{l,k}H_k}.
\end{equation*}
The ensemble of all unitaries $\bm{U}(\bm{\theta})$ regarding different choices of $\bm{\theta}$ is a connected subgroup of $\Op{SU}(2^n)$~\cite{Alessandro2008}. Such a subgroup is associated with a real span of the Lie closure (i.e., closure under taking nested commutators), defined as,
\begin{equation*}
    \frak{g}:=\spn_{\RR}\liebracket{iH_1, iH_2, \cdots, iH_K},
\end{equation*}
called the DLA of the circuit~\cite{Dalessandro2021Introduction,Alessandro2008}.

The DLA completely characterizes the expressiveness of the QNN by delineating all possible unitary operations it can generate. It is known that a random initialization yields an approximately 2-design of the corresponding dynamical group $e^{\frak{g}}$ for a polynomially-sized periodic ansatz~\cite{Ragone2024lie,Fontana2024characterizing}. This enables a replacement of the parameter gradient with an abstract gradient~\cite{Fontana2024characterizing} defined as follows.
\begin{definition}[Abstracted Gradient~\cite{Fontana2024characterizing}]
    Let $G$ be a compact, connected Lie group represented as unitary matrices in the unitary group $\cU(V)$ over the vector space $V$. In addition, let $H\in \frak{g}$, and $iO, iA\in \frak{u}(V)$, the Lie algebra of $\cU(V)$. We define the abstracted gradient as:
    \begin{equation*}
        \partial_H \ell(A,O) := \tr\left(U^{\dagger}_{g^-} A U_{g^-} [H, U_{g^+} O U^{\dagger}_{g^+}]\right),
    \end{equation*}
    where $U_{g^{\pm}}$ represents arbitrary $g^{\pm} \in G$. %In order to prevent misleading, we can omit writing subscript $H$ to the differential operator.
\end{definition}
The gradient can be replaced by this abstracted version for any Lie subalgebra of %This replacement has been demonstrated for circuits generated by 
$\frak{su}(d)$ or $\frak{so}(d)$~\cite{Haah2024efficient}. Recent results have proven that for \textit{Lie Algebra Supported Ansatz} (LASA) with polynomially-scaled DLAs, a rapid mixing can occur towards an $\epsilon$-approximate 2-design of $e^{\frak{g}}$ after $\cO(\Op{poly}(n)\log(1/\epsilon))$ layers~\cite{Fontana2024characterizing}. {This yields an efficient estimation of the variance of gradient. More precisely,} {suppose $\frak{g} = \frak{c} \oplus \bigoplus_{j} \frak{g}_j$, where $\frak{c}$ is the center of the $\frak{g}$, an each $\frak{g}_j$ is a simple Lie subalgebra of $\frak{g}$. The gradient variance can be analytically evaluated via the following formula ~\cite{Fontana2024characterizing}} 
\begin{equation}
    \Var[\partial_H\ell(\rho,O)] = \sum_{j} \frac{\|H_{\frak{g}_j}\|_K^2\norm{O_{\frak{g}_j}}_F^2\|\rho_{\frak{g}_j}\|_F^2}{d^2_{\frak{g}_j}},
\end{equation}
where $O_{\frak{g}_j}$ represents the orthogonal projection of $O$ onto the Lie algebra $\frak{g}_j$, $d_{\frak{g}_j}$ gives the dimension of $\frak{g}_j$, and $\|\cdot\|_K$ and $\|\cdot\|_F$ represent the Killing norm and the Frobenius norm, respectively.

% \update{For polynomially-sized periodic ansatz, random initialization can yield approximately 2-designs of the corresponding dynamical group in the middle regions of the circuit ~\cite{Ragone2024lie,Fontana2024characterizing}. This makes it valid to replace the parameter gradient with an abstract gradient~\cite{Fontana2024characterizing}, and the behavior has been shown for systems generated by $\frak{su}(d)$ or $\frak{so}(d)$ with Pauli operators~\cite{Haah2024efficient}. It extends to various DLAs based on numerical evidence, and recent results confirm it for \textit{Lie Algebra Supported Ansatz} (LASA) with polynomial DLA, demonstrating rapid mixing to 2-design after $\cO(\Op{poly}(n)\log(1/\epsilon))$ layers, ensuring uniform exploration of parameter space and justifying the abstract gradient approach~\cite{Fontana2024characterizing}.}

\section{Supervised Learning with Quantum Data Embedding}

\subsection{Quantum supervised learning}\label{subsec:quantum_supervised_learning}

Quantum supervised learning is a pivotal subfield of quantum machine learning, focusing on using quantum devices to address regression, classification, and other learning tasks by exploiting quantum resources, such as coherence and entanglement of interested systems~\cite{Havlivcek2019supervised,Wang2024quantum}. 
% It is believed to have the potential to achieve better expressivity and, in certain settings, more efficient training compared to the classical approaches. 
The choice of data embedding strategies and how frequently data is “reuploaded” into the networks has a profound influence on the model’s representational power and generalization performance~\cite{Du2021learnability,Havlivcek2019supervised,Wang2024quantum}.
% Data occupies a central role in this framework: either classical or quantum data, the first step is to encode them into quantum systems via static or dynamic methods, which are then subsequently processed through quantum neural networks (QNNs)~\cite{Du2021learnability,Havlivcek2019supervised}. 
Therefore, designing flexible and efficient QNNs capable of extracting essential features from the input data is a crucial step for achieving reliable and scalable quantum supervised learning.

Consider a total training set cut into batches each of the form $\cT:=\{(\bm{x}_q, y_q): q = 1,2,\cdots, Q\}$, where $\bm{x}_q\in\RR^d$ is the input data, represented in either the static way as an initial state $\rho(\bm{x}_q)$ or a dynamic way through a quantum evolution $U(\bm{x}_q)$. In this work, we particularly focus on the latter. The aim is to learn a function $f: \RR^d \rightarrow \RR$ (or a discrete label set) by optimizing a QNN, interleaved with recurrent embeddings of $\bm{x}_q$, such that $f(\bm{x}_q) \approx y_q$. Concretely, let the entire network be represented as,
\begin{equation*}
    \bm{U}(\bm{x}_q;\bm{\theta}) = \prod_{l} g(U_l(\bm{x}_q))W(\bm{\theta}_l), 
\end{equation*}
where $g$ is the map that embeds data into the model, and $W(\bm{\theta}_l)$ is some trainable template QNN~\cite{Havlivcek2019supervised} with $\bm{\theta} = (\bm{\theta}_l)$. One can use the measurement outcomes from executing the QNN to represent the predicted labels or make the prediction by post-processing the measurement results. This framework has natural advantages in learning quantum data, for example, the Hamiltonian classification. In this work, we concentrate on classifying different features of quantum Hamiltonians by considering $\bm{x}_q = X_q$ for some Hermitian operators $X_q$ labelled with their corresponding features $y_q$. {In a more general case, $U_l$ can be seen as some layerwise function that maps $X_q$ to a unitary operator for further embedding.} %\syz{We haven't defined ``the left and right attached unitaries''.} \MR{Want to keep it general here.}

% This framework has {natural advantages in handling the supervised learning tasks on quantum data, for example, the unitary classification by embedding $\{U_q\}_q$ individually into the model.}
% This automatically incorporates more advanced quantum transformations, including quantum singular value transformation~\cite{Gilyen2019quantum}, higher-order quantum operations~\cite{Taranto2025higher} and quantum phase processing~\cite{Wang2023quantum}, offering a unifying view of quantum supervised learning models, encompassing both explicit circuit-based strategies and extensions involving kernel-based methods~\cite{Schuld2021supervised}.

\subsection{Quantum recurrent embedding neural network}\label{subsec:qrenn}
The embedding methods can significantly influence learning performance. In particular, QSVT- and QPP-based structures offer powerful approaches to approximate functions~\cite{Gilyen2019quantum,Wang2023quantum}. Meanwhile, the linear combination of unitaries (LCU) technique systematically composes complex operators from simpler blocks. All of these highlight the power of control embedding in driving versatile quantum transformations~\cite{Childs2012hamiltonian}. {Besides, the efficient implementation of control operations has been extensively studied in the context of quantum algorithms~\cite{zhou2020quantum,Low2017optimal}, providing important theoretical underpinnings for control embedding.}

\begin{figure}[h!]
    \centering
    \includegraphics[width=0.9\linewidth]{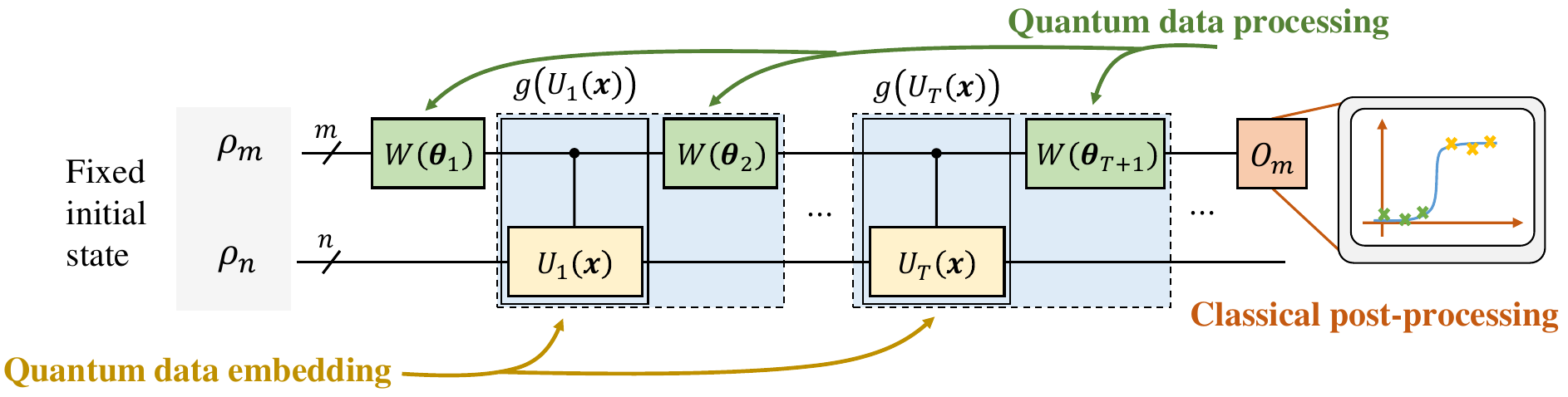}
    \caption{\textbf{The circuit diagram of a $T$-slot QRENN.} The input states and the measurement observable are fixed to $\rho_m\ox \rho_n$ and $O_m$, respectively. The upper $m$ qubits with green parameterized gates make up the data processing register. The lower $n$ qubits make up the data embedding register, where the data $\bm{x}$ is embedded through $g(U_t(\bm{x}))$.}
    \label{fig:general_frame_dqnn}
\end{figure}

{Inspired from the above, we design} the \textit{quantum recurrent embedding neural network} (QRENN) represented as a periodic sequence of unitary operators acting on an initially separated bipartite ($m+n$)-qubit quantum system as shown in Figure~\ref{fig:general_frame_dqnn}. We call the first $m$-qubit system the {\textit{data processing register}} and the next $n$-qubit system the {\textit{data embedding register}}, respectively. Let $\bm{\theta} = (\bm{\theta}_t)$ be the tunable parameters, where $\bm{\theta}_t = (\theta_{t,l})$ is a collection of all parameters in $W(\bm{\theta}_t)$, and 
$\bm{x}$ be the data vectors. A $T$-slot QRENN can be expressed as,
\begin{equation}\label{eq:dqnn_circuit}
    \bm{U}(\bm{x};\bm{\theta}) = (W(\bm{\theta}_{T+1})\ox I_{2^n})\left(\prod_{t=1}^{T} g(U_t({\bm{x}}))(W(\bm{\theta}_t)\ox I_{2^n})\right).
\end{equation}
For notational convenience, we will drop the last term $(W(\bm{\theta}_{T+1}) \ox I_{2^n})$ without loss of generality by incorporating it into the measurement operator.
% \update{Add a comment that we can drop the last term $(W(\bm{\theta}_{T+1})\ox I_{2^n})$ without loss of generality by adding a $U_0(0)$, and in the following we will only keep the rest $T$ shots for simplicity.} 
We specifically choose $g$ to be the control embedding in this work. Let $U$ be some $n$-qubit unitary matrix, the control embedding map $g$ has the following explicit form,
\begin{equation}\label{eq:encoding_map}
    g(U) := \ketbra{c}{c} \ox U + (I_{2^m} - \ketbra{c}{c})\ox I_{2^n},
\end{equation}
where $\ket{c}$, without loss of generality, is set to $\ket{1_m} = \ket{1}^{\ox m}$. {Let $U_t(\bm{x}) = e^{iH_t(\bm{x})}$ for some Hermitian generator $H_t:\RR^d \rightarrow i\frak{u}(2^n)$, embedded with data $\bm{x}$}. One can rewrite $g(U_t(\bm{x}))$ as $e^{i\ketbra{c}{c} \ox H_t(\bm{x})}$. Provided $W(\bm{\theta}_{t}) = \prod_{l=1}^L e^{i\theta_{t,l} \Omega_{l}}$ defined by some $m$-qubit $L$-layer periodic QNN templates~\cite{Ragone2024lie}. The expression~\eqref{eq:dqnn_circuit} can be written in the generator form,
\begin{equation}\label{eq:periodic_form_dqnn_circuit}
    \bm{U}(\bm{x};\bm{\theta}) = \prod_{t=1}^{T} \left(e^{i\ketbra{c}{c} \ox H_t(\bm{x})} \prod_{l=1}^L e^{i\theta_{t,l} \Omega_{l} \ox I_{2^n}}\right).
\end{equation}

This work focuses on quantum supervised learning tasks, specifically the classification of various types of quantum data using the QRENN framework. The proposed approach is inspired by quantum hypothesis testing and quantum discrimination tasks~\cite{Chen2024hypothesis,Regula2023postselected,Li2022sequential}. The following training strategy is devised: Let $\{M_{y_q}\}_q$ denote a set of positive operator-valued measurement (POVM) operators~\cite{Nielsen2010quantum}, referred to as labeling operators, which assign labels to each data point $\bm{x}_q$ in $\cT$ according to its corresponding feature class $y_q$. For example, when $y_q \in \{0, 1\}$, the eigen-projections $M_{y_q} = \ketbra{y_q}{y_q}$ of the observable $Z$ can be selected for training. Define $\ell(M_{y_q},\bm{x}_q; \bm{\theta}) = \tr(\bm{U}(\bm{x}_q; \bm{\theta}) \rho_0 \bm{U}(\bm{x}_q; \bm{\theta})^{\dagger} M_{y_q})$, where the initial state is fixed as $\rho_0 = \rho_m \otimes \rho_n$. The total loss over batch $\cT$ is given by
\begin{equation}\label{eq:total_loss}
    \cL(\bm{\theta}) = 1 - \frac{1}{Q}\sum_{q=1}^Q \ell(M_{y_q},\bm{x}_q; \bm{\theta}).
\end{equation}
Training of the QRENN is facilitated by classical-quantum hybrid algorithms, in which the network parameters are iteratively adjusted to minimize Eq.~\eqref{eq:total_loss} (e.g., via gradient descent) until convergence is achieved~\cite{Havlivcek2019supervised}. {After training, the same POVM is employed to test unseen data $\bm{x}'$ for prediction. Additionally, simple post-processing techniques may be applied to the measurement results~\cite{wu2024learning} to improve predictive performance.}

\subsection{Algebraic structure of QRENN}\label{subsec:dla_qrenn}
In the theoretical limit, the Dynamical Lie Algebra (DLA) can fully characterize the stochastic behavior of the QNN model~\cite{Fontana2024characterizing,Ragone2024lie}, as it encompasses all possible Hermitian generators that may interact with any tunable parameter~\cite{Dalessandro2021Introduction}. Given any Hermitian operator $H \in i\mathfrak{u}(2^n)$, the Baker–Campbell–Hausdorff (BCH) formula yields
\begin{equation*}
    e^{i\ketbra{c}{c} \ox H}e^{i\theta_{t,l} \Omega_{l} \ox I_{2^n}} = e^{i\ketbra{c}{c} \ox H + i\theta_{t,l} \Omega_{l} \ox I_{2^n} + i^2\frac{\theta_{t,l}}{2}[\ketbra{c}{c}, \Omega_l] \ox H + \cdots}.
\end{equation*}
Although the generator $i\ketbra{c}{c} \otimes H$ is not associated with any trainable parameter, the interaction between non-commutative generators can propagate the effect of parameter variations throughout the network.

To ensure consistency with previous studies~\cite{Fontana2024characterizing,Ragone2024lie} and to account for the effects of noncommutative algebra in the QRENN, we generalize our model by introducing a parametrization of the data embedding register. Specifically, let $\bm{\varphi} = (\varphi_{t})$ denote additional tunable parameters that are associated with the generators $H_t(\bm{x})$, enabling adjustment of the ‘weight’ of each data component at the $t$-th position in the network. This is analogous to the weight parameters in classical neural networks~\cite{Gurney2018introduction}. {This parametrization can also be interpreted in physical terms as a controllable evolution time.} With this extension, the network can be expressed as

% In order to make it consistent with previous studies~\cite{Fontana2024characterizing,Ragone2024lie} as well as cover the effect from noncommutative algebra in the QRENN, we slightly generalize our model by allowing a parametrization on the data embedding register. In particular, let $\bm{\varphi} = (\varphi_{t})$ be additional tunable parameters, which are attach to the generators $H_t(\bm{x})$ for adjusting the `weight' of each data component at the $t$-slot of the network. This can be equivalent to the weight parameters functioning in the classical neural networks~\cite{Gurney2018introduction}. \update{This can also be understood in physical terms as a controllable evolution time.} With this extension, the network becomes
\begin{equation}\label{eq:weighted_qrenn}
    \bm{U}(\bm{x};\bm{\theta}, \bm{\varphi}) = \prod_{t=1}^{T}\left(e^{i\varphi_{t}\ketbra{c}{c} \ox H_t(\bm{x})}\cdot\prod_{l=1}^L e^{i\theta_{t,l} \Omega_{l} \ox I_{2^n}}\right).
\end{equation}
We now define the DLA of both the standard and the parameterized QRENN model in the same form in order to prevent confusion. Formally, the DLA of QRENN is defined as,
\begin{equation}\label{eq:dla_qrenn_raw}
    \liedqnn := \spn_{\RR}\liebracket{\; i\ketbra{c}{c}\ox H_t(\bm{x}), i\Omega_l \ox I_{2^n}: \;l\in [L], t\in [T]\;}
\end{equation}
where we adopt the notation $[n] = \{1,2,\ldots,n\}$. If the range $[L]$ of an index $\ell$ is clear from the context, then we may also drop the range and simply write $\forall \ell$ for $\ell\in [L]$. 
% \MR{Checked}\update{Consistent with the notation: consider to use colon instead of comma, and explicitly write the range of each index.}

For convenience, in the following discussion, we use the shorthand notation $H_t$ to denote the data-embedded Hermitian operator $H_t(\bm{x})$ at each slot of the network, for some fixed data $\bm{x}$. Let $\cS = \{H_t\}_t$ be the set of possible Hermitian matrices (not necessarily commuting) acting on the data embedding register, and denote by $\cH_{\cS} \subset \CC^{2^n}$ the subspace on which the $H_t$ act non-trivially. According to the Wedderburn-Artin theorem, $\cH_{\cS}$ admits a direct sum decomposition as $\cH_{\cS} \simeq \bigoplus_{\bm{\lambda}} V_{\bm{\lambda}}$, where the restrictions of $\{H_t\}_t$ to each invariant subspace $V_{\bm{\lambda}}$ form the irreducible representation of the algebra spanned by $\{H_t\}_t$ (finite-dimensional $C^*$-algebra). The label $\bm{\lambda}$ is usually named the \textit{quantum number} shared by $\{H_t\}_t$. These invariant subspaces are important in quantum mechanics, which reflect coarse-grained properties, and each operator $H_t$ remains incompatible within $V_{\bm{\lambda}}$.

In this work, we focus on the scenario where all $H_t$ are commutative with each other. This happens when, for example, the data is embedded with diagonal real matrices. The set $\{H_t\}_t$ then generates a commutative algebra whose irreducible representations are one-dimensional. Select one eigenvalue $\lambda_t$ from each $H_t$ to form a tuple of eigenvalues $\bm{\lambda} = (\lambda_t)$, which then defines a unique \textit{joint eigenspace}
\begin{equation}
    V_{\bm{\lambda}}:=\{\ket{v} \in \CC^{2^n}: H_t\ket{v} = \lambda_t \ket{v}, \;\forall t\}.
\end{equation}
On $V_{\bm{\lambda}}$, every $H_t$ acts as a scalar multiple of the identity matrix. 
To be clear, consider the case of two commuting Hermitian operators \( H_1 \) and \( H_2 \). Each operator has its own set of eigenvalues, and for any pair of eigenvalues \( \lambda_1 \) and \( \lambda_2 \) associated with \( H_1 \) and \( H_2 \), respectively, the joint eigenspace is defined as:
\[
V_{(\lambda_1, \lambda_2)} := \Big\{ \ket{v} \in \mathbb{C}^{2^n} : H_1 \ket{v} = \lambda_1 \ket{v}, \; H_2 \ket{v} = \lambda_2 \ket{v} \Big\}.
\]

The joint eigenspace \( V_{(\lambda_1, \lambda_2)} \) is the intersection of the eigenspaces of \( H_1 \) and \( H_2 \) corresponding to \( \lambda_1 \) and \( \lambda_2 \). Each such eigenspace is uniquely labeled by the tuple \( (\lambda_1, \lambda_2) \).
This example illustrates that the joint eigenspace is the intersection of the individual eigenspaces of the commuting operators, with the eigenvalue tuple 
$\bm{\lambda}$ uniquely identifying each common subspace on which every 
$H_t$ acts as multiplication by the corresponding scalar 
$\lambda_t$. By characterizing the projection $\Pi_{\bm{\lambda}}$ onto each subspace $V_{\bm{\lambda}}$, one can induce direct sum algebraic structures in $\liedqnn$, as demonstrated below.
\begin{tcolorbox}
\begin{proposition}\label{prop:dla_main_result}
    Let $\{\Omega_j\}_j$ be the set of Hermitian generators of the data processing register so that $\liebracket{i\Omega_j}$ spans $\frak{su}(2^m)$. Suppose a set of commutative Hermitian data matrices $\{H_t\}_t$ is embedded into the QRENN model. Then the DLA of the QRENN model can be decomposed into,
    \begin{equation*}
        \liedqnn \simeq \frak{c} \oplus \frak{su}(2^m)^{\oplus r}
    \end{equation*}
    where $\frak{c}:=\spn_{\RR}\liebracket{iI_{2^m}\ox\Pi_{\bm{\lambda}}\;:\forall \bm{\lambda}}$ is the center, $r$ is the number of distinct joint eigenspaces from all of $H_t$, and $\Pi_{\bm{\lambda}}$ is the projection onto the corresponding space such that $\sum_{\bm{\lambda}} \Pi_{\bm{\lambda}} = I_{2^n}$.
\end{proposition}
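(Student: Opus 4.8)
The plan is to fix the data $\bm{x}$, exploit that the commuting family $\{H_t\}$ is simultaneously diagonalizable, and reduce the Lie-closure computation to a block-diagonal problem indexed by the joint eigenspaces. First I would write $H_t = \sum_{\bm{\lambda}}\lambda_t\,\Pi_{\bm{\lambda}}$, so that every data generator becomes $i\ketbra{c}{c}\ox H_t = \sum_{\bm{\lambda}}\lambda_t\, i\ketbra{c}{c}\ox\Pi_{\bm{\lambda}}$ while every processing generator is $i\Omega_l\ox I_{2^n} = \sum_{\bm{\lambda}} i\Omega_l\ox\Pi_{\bm{\lambda}}$. Since the $\Pi_{\bm{\lambda}}$ are orthogonal, all generators are block diagonal with respect to $\bigoplus_{\bm{\lambda}}(\CC^{2^m}\ox V_{\bm{\lambda}})$ and act as the identity on the multiplicity factor $V_{\bm{\lambda}}$; hence $\liedqnn$ is contained in the set of operators $\sum_{\bm{\lambda}} X_{\bm{\lambda}}\ox\Pi_{\bm{\lambda}}$ with $X_{\bm{\lambda}}\in\frak{u}(2^m)$, i.e. in $\bigoplus_{\bm{\lambda}}\frak{u}(2^m)$. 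This supplies the a priori bound $\dim\liedqnn\le r\cdot 4^m = \dim(\frak{c}\oplus\frak{su}(2^m)^{\oplus r})$, so the task reduces to proving the reverse inclusion block by block.

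For the reverse inclusion I would proceed in two stages. Because $\liebracket{i\Omega_l}=\frak{su}(2^m)$ and $[\sum_{\bm{\lambda}} X\ox\Pi_{\bm{\lambda}},\sum_{\bm{\lambda}} Y\ox\Pi_{\bm{\lambda}}]=\sum_{\bm{\lambda}}[X,Y]\ox\Pi_{\bm{\lambda}}$, the processing generators close to the diagonally embedded copy $\{\sum_{\bm{\lambda}} Z\ox\Pi_{\bm{\lambda}}: Z\in\frak{su}(2^m)\}$. Bracketing these against $i\ketbra{c}{c}\ox H_t$ yields the eigenvalue-tagged families $\sum_{\bm{\lambda}}\lambda_t\,[Z,\ketbra{c}{c}]\ox\Pi_{\bm{\lambda}}$; since $\ketbra{c}{c}$ is not a scalar it fails to commute with some $\Omega_l$, and as $\frak{su}(2^m)$ is simple its adjoint action is irreducible, so iterating the diagonal action fills the second tensor slot to all of $\frak{su}(2^m)$ while keeping the weight vector $(\lambda_t)_{\bm{\lambda}}$ fixed. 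A further bracket of two such families multiplies their weight vectors entrywise, $[\sum_{\bm{\lambda}} a_{\bm{\lambda}}Z_1\ox\Pi_{\bm{\lambda}},\sum_{\bm{\lambda}} b_{\bm{\lambda}}Z_2\ox\Pi_{\bm{\lambda}}]=\sum_{\bm{\lambda}} a_{\bm{\lambda}}b_{\bm{\lambda}}[Z_1,Z_2]\ox\Pi_{\bm{\lambda}}$, so the set of attainable weight vectors for the traceless sector is closed under both linear combination and entrywise product and contains the all-ones vector together with each tuple $(\lambda_t)_{\bm{\lambda}}$.

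The key step is to promote this into full block resolution. Distinct joint eigenspaces mean the map $\bm{\lambda}\mapsto(\lambda_t)_t$ is injective, so these weight generators separate the $r$ points; a finite Stone--Weierstrass / polynomial-interpolation argument then forces the attainable weight vectors to be all of $\RR^r$. Selecting the indicator weight of each $\bm{\lambda}$ gives $\frak{su}(2^m)\ox\Pi_{\bm{\lambda}}\subseteq\liedqnn$ for every block, producing the semisimple part $\frak{su}(2^m)^{\oplus r}$. Finally, decomposing $i\ketbra{c}{c}=i(\ketbra{c}{c}-2^{-m}I_{2^m})+2^{-m}\,iI_{2^m}$ and subtracting the now-available traceless block pieces from $i\ketbra{c}{c}\ox H_t$ isolates the central combinations $\sum_{\bm{\lambda}}\lambda_t\, iI_{2^m}\ox\Pi_{\bm{\lambda}}$, from which one recovers $\frak{c}=\spn_{\RR}\{iI_{2^m}\ox\Pi_{\bm{\lambda}}\}$; assembling the two sectors gives $\liedqnn\simeq\frak{c}\oplus\frak{su}(2^m)^{\oplus r}$.

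I expect the block-resolution step to be the main obstacle, namely converting the set-theoretic distinctness of the joint spectra into a genuine algebraic separation of the $r$ blocks inside the Lie closure. Two points are delicate: (i) one must verify that entrywise products of weight vectors are realizable by nested commutators without collapsing the $\frak{su}(2^m)$ factor, which is exactly where simplicity and irreducibility of the adjoint action are used; and (ii) unlike the semisimple sector, the central sector admits no product mechanism (central elements bracket to zero), so it only receives linear combinations of the weights $\{(\lambda_t)_{\bm{\lambda}}\}_t$. Recovering the full $r$-dimensional $\frak{c}$ therefore requires these weight vectors to \emph{span} $\RR^r$, a condition one should check is implied by (or added to) the distinct-joint-eigenspace hypothesis, since mere point-separation guarantees the interpolation needed for $\frak{su}(2^m)^{\oplus r}$ but not automatically the linear spanning needed for the center.
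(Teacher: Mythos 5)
Your strategy for the semisimple sector is essentially the paper's own: simultaneous diagonalization into joint eigenblocks, closure of the processing generators to a diagonally embedded $\frak{su}(2^m)$, and a point-separation/interpolation argument to resolve the $r$ blocks. Where you invoke a finite Stone--Weierstrass argument on the attainable weight vectors (closed under linear combination and entrywise product, containing the all-ones vector and each $(\lambda_t(\bm{\lambda}))_{\bm{\lambda}}$), the paper writes each $\Pi_{\bm{\lambda}}$ as a multivariate Lagrange polynomial in $H_1,\dots,H_T$ and pushes the monomials through nested commutators; these are the same mechanism, and your delicate point (i) is indeed handled by perfectness of $\frak{su}(2^m)$ and irreducibility of its adjoint action. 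Your conclusion $\frak{su}(2^m)\ox\Pi_{\bm{\lambda}}\subseteq\liedqnn$ for every $\bm{\lambda}$ is correct and matches the paper.

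Your worry (ii) about the center is not a loose end you failed to tie off: it is a genuine gap, and it is present in the paper's own proof as well. Every commutator of block-diagonal elements is traceless in the $2^m$-dimensional factor on each block, so the central component of $\liedqnn$ can only be the linear span of the central components of the generators, namely $\spn_{\RR}\{iI_{2^m}\ox H_t : t\} = \spn_{\RR}\{\sum_{\bm{\lambda}}\lambda_t(\bm{\lambda})\, iI_{2^m}\ox\Pi_{\bm{\lambda}} : t\}$. The paper's proof of the reverse inclusion asserts that $iP_l\ox H_1^{k_1}\cdots H_T^{k_T}\in\liedqnn$ ``via Lie bracket expansion'' for every basis element $P_l$; this is true for traceless $P_l$ but false for $P_l = I_{2^m}$ unless the monomial has total degree one, precisely because no bracket can produce a central element. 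Consequently $\frak{c}=\spn_{\RR}\liebracket{iI_{2^m}\ox\Pi_{\bm{\lambda}}:\forall\bm{\lambda}}$ lies inside $\liedqnn$ only when the weight vectors $\{(\lambda_t(\bm{\lambda}))_{\bm{\lambda}}\}_{t\in[T]}$ span $\RR^r$ --- exactly the extra hypothesis you isolate, and one not implied by distinctness of the joint eigenspaces. A one-element family $\{H\}$ with $r\geq 2$ distinct eigenvalues already gives a counterexample: there the center is the line $\RR\,(iI_{2^m}\ox H)$, which is consistent with the paper's fixed-data result (Proposition~\ref{prop:dla_fixed_H}, one-dimensional center) but not with the $r$-dimensional $\frak{c}$ claimed in the statement under review. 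The corrected general statement is $\liedqnn = \frak{c}'\oplus\frak{su}(2^m)^{\oplus r}$ with $\frak{c}'=\spn_{\RR}\{iI_{2^m}\ox H_t : t\in[T]\}$ of dimension $\dim\spn\{(\lambda_t(\bm{\lambda}))_{\bm{\lambda}}\}\leq\min(T,r)$; note the semisimple part, which is all that enters the gradient-variance formula (the center does not contribute), is unaffected, so the downstream trainability results survive. Your proof, with the center claim weakened in this way, is complete and in this respect more careful than the paper's.
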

\end{tcolorbox}
{In our later applications, we set $m = \cO( \log n)$ which makes $2^m = poly(n)$, %\syz{This is not right because $2^{O(\log n)} = poly(n)$. In later applications we actually set $m = \lfloor \log n\rfloor$?} 
to avoid an exponentially large dimension of processing register. Besides, being able to generate the whole $\frak{su}(2^m)$ enables us to have full control of $m$ qubits, which allows arbitrary processing of the data.}
%\syz{Consider to add some justifications on the assumption of $\liebracket{i\Omega_j} = \frak{su}(2^m)$. I can think of two: (1) in later applications $m=\log(n)$ which makes $2^m = n$, not an exponentially large dimension; (2) being able to generate the whole $\frak{su}(2^m)$ enables us to have full control of $m$ qubits, which allows arbitrary processing of the data.} 
This proposition showcases that with commutative data embedding Hermitian generators, the DLA of our QRENN can be decomposed into simple Lie algebras with bounded dimensions depending on the dimensionality of the data processing register.
The formal proof can be found in Appendix \ref{sec:appendix_dla_modified_periodic_models}. Here, we provide a proof sketch by showing that the original expression~\eqref{eq:dla_qrenn_raw} can be rewritten as
\begin{equation}\label{eq:modified_dqnn_dla}
    \liedqnn = \spn_{\RR}\langle \; iI_{2^m} \ox H_p, iP_s\ox I_{2^n}, iP_q\ox H_r:\;\forall p,s,q,r \; \rangle_{\Op{Lie}},
\end{equation}
where $\liebracket{iP_j: \; \forall j}$ spans an orthonormal basis for \(\mathfrak{su}(2^m)\); for instance, the basis may be chosen as the $m$-qubit normalized Pauli matrices formed from the Kronecker product of single-qubit Pauli matrices, with normalization taken with respect to the Hilbert–Schmidt norm. Notably, the terms \(iI_{2^m}\otimes H_p\) commute with all other generators, thereby spanning a real, abelian center of the algebra. Owing to the spectral decomposition of each \(H_p\), the central terms \(iI_{2^m}\otimes H_p\) can be rewritten in terms of the projections \(\Pi_{\bm{\lambda}}\), so that \(\mathfrak{c} = \operatorname{span}_{\mathbb{R}}\langle iI_{2^m}\otimes \Pi_{\bm{\lambda}} \,:\, \forall\, \bm{\lambda}\rangle_{\operatorname{Lie}}\). Exploiting the spectral theorem, each Hermitian data operator \(H_p\) is decomposed into a linear combination of projection operators. Analyzing the nested commutators reveals that the commutator ideal \([\liedqnn,\liedqnn]\) decomposes into a direct sum of simple subalgebras, each isomorphic to \(\mathfrak{su}(2^m)\) and corresponding to a particular eigenspace. Consequently, the overall structure of the DLA is given by
\(\liedqnn\simeq \mathfrak{c}\oplus \mathfrak{su}(2^m)^{\oplus r}\),
with \(r\) denoting the number of distinct eigenspaces.

It can be observed that the first two terms in the Lie bracket expansion of Eq.~\eqref{eq:modified_dqnn_dla} correspond to the same generators as those in the QRENN model without control embedding. In contrast, the third term introduces interactions between the data processing register and the data embedding register, which play a dominant role in determining both the trainability and expressive power of the QRENN. This decomposition of the DLA further implies enhanced training efficiency, as it allows the number of qubits in the processing register to remain limited without compromising performance.

{Importantly, the use of control operations is central to query-based quantum algorithms~\cite{Gomes2024multivariable}, as they enable coherent amplification of nontrivial spectral features. However, simulating the dynamics of general quantum Hamiltonians is classically intractable, and simulating their controlled evolutions is even more challenging. In our discussion, we assume that such control operations can be efficiently implemented on quantum hardware through Hamiltonian simulation techniques~\cite{Low2017optimal,Low2019hamiltonian}, leveraging the capabilities of quantum devices. Additionally, even with recent advancements such as $\mathfrak{g}$-sim~\cite{Goh2023lie}, a Lie-algebraic simulation method for PQCs, a complete characterization of $\liedqnn$ still requires computing the full spectrum of each Hamiltonian component $H_p$, which incurs exponential classical computational cost. Consequently, although the dimensions of the relevant Lie algebra subspaces in $\liedqnn$ are polynomially bounded, simulating the evolution under the dynamic group $e^{\liedqnn}$ remains infeasible for classical devices.}

% It can be observed that the first two terms in the Lie bracket expansion of Eq.~\eqref{eq:modified_dqnn_dla} correspond to the same generators as in the QRENN model without control embedding. The last term introduces interactions between the data processing register and the data embedding register, which subsequently dominate the trainability and expressive power of the QRENN model. This decomposition of the DLA also suggests high training efficiency for the QRENN, as the number of qubits in the processing register can be limited. \update{Notably, the uses of control operations are central to query-based algorithms~\cite{Gomes2024multivariable} due to their ability to coherently amplify nontrivial spectral features. However, simulating the dynamics of a general quantum Hamiltonian is known to be classically hard. Simulating the action of the control evolution is even harder. In our discussion, we have assumed that control operations can be efficiently implemented in the circuit via Hamiltonian simulation methods~\cite{Low2017optimal,Low2019hamiltonian}, leveraging the capabilities of quantum computers. Additionally, even with the recent development of $\frak{g}$-sim~\cite{Goh2023lie}, a Lie-algebraic classical simulation strategy for PQCs, a complete description of $\liedqnn$ requires computing the entire spectrum of each $H_p$, which costs exponential classical resources. As a result, despite the dimension of subspaces in $\liedqnn$ being polynomially bounded, the simulation over the dynamic group $e^{\liedqnn}$ can hardly be captured by classical devices.}

\section{Absence of Barren Plateaus in QRENN}\label{sec:trainability_dqnn}
The barren plateau (BP) phenomenon arises from multiple interrelated factors, including the expressiveness of the quantum circuit, the initialization of the quantum state, and the choice of measurement operators, as recently unified within a comprehensive theoretical framework~\cite{Fontana2024characterizing,Ragone2024lie,Barnum2003Generalizations,Barnum2004A}. Extensive research has been devoted to developing strategies that mitigate barren plateaus~\cite{Grant2019initialization,Skolik2021layerwise,Rad2022surviving,Friedrich2022avoiding,Kulshrestha2022beinit,Zhang2022,Liu2024mitigating,Jing2024quantum}; however, the fundamental challenge remains: designing architectures that are inherently resistant to barren plateaus and that ensure reliable trainability.

\begin{figure}[t!]
    \centering
    \includegraphics[width=0.85\linewidth]{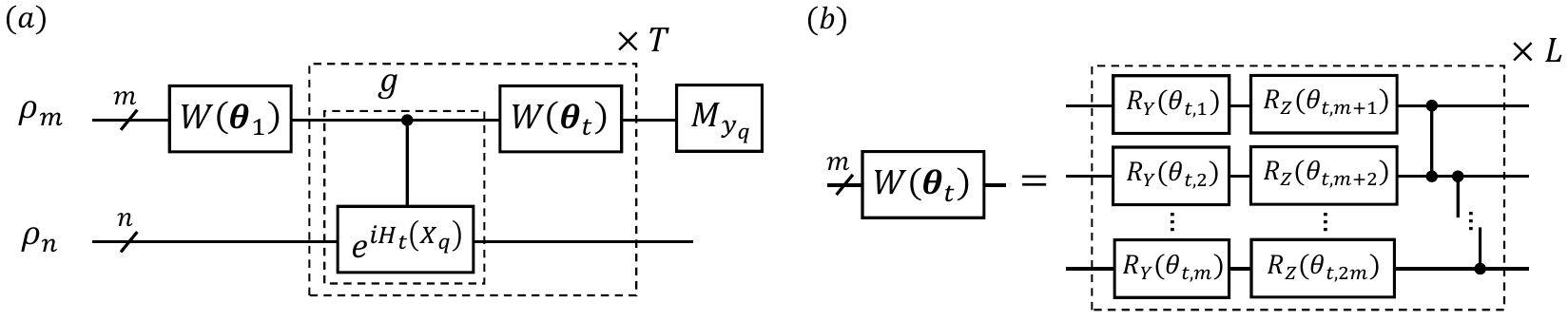}
    \caption{\textbf{The sketch of the QRENN circuit for quantum Hamiltonian supervised learning.} In (a), the model is initialized with a fixed initial state and embedded with the $q$-th data in $\cT$ via controlled-unitary evolutions $g(e^{i H_t(X_q)})$ followed by a measurement $M_{y_q}$; in (b), the PQC applied on the processing register is given by the template plotted on the right with $L$ repetitions. Within the block, $R_Y$ and $R_Z$ denote the single-qubit $Y$ and $Z$ rotations, respectively. {The entanglement between each pair of qubits in the processing register is generated via the linear-applied $CZ$ gates at the end of the block.}}
    \label{fig:QRENN_circuit}
\end{figure}

Relevant works have focused on designing symmetric architectures based on low-dimensional DLA structures~\cite{Schatzki2024theoretical,Nguyen2024theory}. In this work, we demonstrate that, through control embedding, our QRENN can avoid the BP phenomenon and represents a promising class of efficient solvers for quantum supervised learning problems. {We focus on the problem of classifying different types of quantum Hamiltonians. Consider a batch training set of Hamiltonians, each labeled with its corresponding feature}. In this setting, each Hamiltonian is treated directly as input data. A natural approach to embedding a Hamiltonian into the network is to utilize its associated unitary evolution. Adopting the notation introduced in previous sections, let $\cT = \{(X_q, y_q)\}$, where $X_q$ denotes the $q$-th data Hamiltonian. Consequently, the corresponding evolution $e^{iH_t(X_q)}$ is embedded into the QRENN via control embedding~\eqref{eq:encoding_map}.
% The embedding generator $H_t$ is taken to be identical throughout the network, 
%\syz{The general index for $H$ changes from $t$ in the above Definition 2 to $q$. Better to be consistent. If we like to avoid confusion with slot index $t\in [T]$, then maybe just consistently use $q$?}
The corresponding circuit diagram is presented in Figure~\ref{fig:QRENN_circuit}. {A natural example of $H_t$ is  $H_t(X_q) = X_q$ for all $t \in [T]$. In what follows, we will explicitly state the definition of $H_t$ before further discussion.} 
%\syz{In figure (a), should the loading layer be $e^{iX_q}$? In figure (b), should the parameters $\theta_t$ also appear in $R_Y$ and $R_Z$? } \MR{I have added a explanation for $H_t:\RR^d \rightarrow i\frak{u}(2^n)$. We want to keep it general until settled later in specific applications.}

To analyze the trainability of QRENN, we shall examine the stochastic behavior of the loss function partial derivatives $\partial_{t,\mu}\cL = \frac{\partial \cL}{\partial \theta_{t,\mu}}$ with respect to any $\mu$-th tunable parameter at the $t$-th slot of the model, where the loss function $\cL$ is defined in Eq.~\eqref{eq:total_loss}. Following established metrics~\cite{Mcclean2018barren,Cerezo2021cost}, the scaling of $\Var_{\bm{\theta},\bm{\varphi}}[\partial_{t,\mu} \cL]$ serves as the key indicator of the trainability. We say that the model has a BP in the $\theta_{t,\mu}$ direction if $\EE_{\bm{\theta},\bm{\varphi}}[\partial_{t,\mu} \cL] = 0$ and $\Var_{\bm{\theta},\bm{\varphi}}[\partial_{t,\mu} \cL]$ vanishes exponentially with system size.
\begin{tcolorbox}
\begin{proposition}\label{prop:zero_mean_gradient_lb_grad_var}
    Consider an ($m+n$)-qubit QRENN model with a set of Hermitian data generators $\cG = \{H_t(\bm{x}_q)\}_t$ embedded via Eq.~\eqref{eq:encoding_map} for each $q\in [Q]$, running on an input state $\rho$. %, and a tester set $\{M_{y_q}\}_q$. 
    Assume the model circuit is sufficiently deep to form an $\epsilon$-approximate $2$-design on the corresponding dynamic subgroup. Then the total loss function defined in Eq.~\eqref{eq:total_loss} has 
    \[
        \EE_{\bm{\theta},\bm{\varphi}}\left[\frac{\partial \cL}{\partial \theta_{t,\mu}}\right] = 0,
    \] 
    for all $t\in [T]$ and $\mu\in [L]$. Moreover, denoting the derivative generator as $\Omega$, then
    \begin{equation*}
        \Var_{\bm{\theta}, \bm{\varphi}}\left[\frac{\partial \cL}{\partial \theta_{t,\mu}}\right] \geq \frac{1}{Q^2}\sum_{q}\sum_{\bm{\lambda}_q}   \frac{\norm{(M_{y_p})_{\frak{g}_{\bm{\lambda}_q}}}_F^2\norm{\rho_{\frak{g}_{\bm{\lambda}_q}}}_F^2 \norm{\Omega_{\frak{g}_{\bm{\lambda}_q}}}_K^2}{d^2_{\frak{g}_{\bm{\lambda}_q}}},
    \end{equation*}
    where for each $\bm{x}_q\in\cT$, the summation is taking over all distinct joint eigenspace $\frak{g}_{\bm{\lambda}_q}$ with respect to the corresponding DLA, and the center $\frak{c}_q$ does not contribute to the variance.
\end{proposition}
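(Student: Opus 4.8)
The plan is to treat the two assertions separately, reducing each to the single-task abstracted-gradient machinery recalled in the Preliminaries. For the zero-mean claim, first note that by linearity $\partial_{\theta_{t,\mu}}\cL = -\frac{1}{Q}\sum_q \partial_{\theta_{t,\mu}}\ell(M_{y_q},\bm{x}_q;\bm{\theta})$, so it suffices to show each summand has vanishing expectation. I would pass to the abstracted gradient of the Definition, writing the derivative as $\tr\big(U_{g^-}^\dagger \rho\, U_{g^-}\,[\,\Omega\ox I_{2^n},\,U_{g^+}M_{y_q}U_{g^+}^\dagger\,]\big)$, where $U_{g^\pm}$ are the sub-circuits before and after the differentiated gate and the generator $\Omega\ox I_{2^n}\in\liedqnn$. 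Averaging the post-gate factor over the ($\epsilon$-approximate) $2$-design, the first Haar moment sends $U_{g^+}M_{y_q}U_{g^+}^\dagger$ to its projection onto the commutant of $\liedqnn$; since this projection commutes with every element of $\liedqnn$, and in particular with $\Omega\ox I_{2^n}$, the inner commutator vanishes and hence so does the expectation, up to an $\cO(\epsilon)$ error from the approximate design.

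For the variance, write $\Var_{\bm{\theta},\bm{\varphi}}[\partial_{\theta_{t,\mu}}\cL]=\frac{1}{Q^2}\Var_{\bm{\theta},\bm{\varphi}}\big[\sum_q \partial_{\theta_{t,\mu}}\ell(M_{y_q},\bm{x}_q;\bm{\theta})\big]$ and, using that the mean is zero, expand into diagonal and off-diagonal pieces as $\frac{1}{Q^2}\big(\sum_q \Var[\,\partial\ell_q\,]+\sum_{q\neq q'}\EE[\,\partial\ell_q\,\partial\ell_{q'}\,]\big)$, where $\partial\ell_q$ abbreviates the $q$-th summand. For each diagonal term I would invoke Proposition~\ref{prop:dla_main_result}, which gives $\liedqnn\simeq\frak{c}\oplus\frak{su}(2^m)^{\oplus r}$ with the joint eigenspaces $V_{\bm{\lambda}_q}$ indexing the simple factors, and then apply the DLA variance formula recalled in the Preliminaries. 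The center $\frak{c}$ drops out for two compatible reasons: it is abelian, so its Killing norm is identically zero, and the differentiated generator $\Omega\ox I_{2^n}$ has $\Omega$ traceless, so its projection onto $\frak{c}=\spn_\RR\liebracket{iI_{2^m}\ox\Pi_{\bm{\lambda}}}$ vanishes. Hence each diagonal term equals $\sum_{\bm{\lambda}_q}\norm{(M_{y_q})_{\frak{g}_{\bm{\lambda}_q}}}_F^2\norm{\rho_{\frak{g}_{\bm{\lambda}_q}}}_F^2\norm{\Omega_{\frak{g}_{\bm{\lambda}_q}}}_K^2/d^2_{\frak{g}_{\bm{\lambda}_q}}$, which reproduces exactly the claimed bound once summed over $q$.

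The step I expect to be the main obstacle is controlling the off-diagonal covariances $\EE[\,\partial\ell_q\,\partial\ell_{q'}\,]$ for $q\neq q'$: these involve the \emph{joint} second moment of two circuits that share $\bm{\theta},\bm{\varphi}$ but carry different data embeddings $H_t(\bm{x}_q)\neq H_t(\bm{x}_{q'})$, so they are not covered by the single-circuit $2$-design formula. The cleanest resolution I would attempt is to show these terms are non-negative, so that dropping them yields the stated $\geq$. Concretely, I would lift the batch average to a single trace-form loss on an enlarged space carrying a $Q$-dimensional data-index ancilla, with both the embedding and the measurement $\sum_q\ketbra{q}{q}\ox M_{y_q}$ controlled on the index, so that the whole batch becomes one circuit with one DLA and the Fontana formula applies verbatim. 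In that picture the claimed right-hand side is precisely the contribution of the intra-sample simple sectors $\frak{g}_{\bm{\lambda}_q}$, while the remaining inter-sample overlaps reduce to Frobenius inner products of projections onto common joint eigenspaces, which are non-negative. The delicate points to verify are that the enlarged DLA still decomposes as a direct sum of $\frak{su}(2^m)$ blocks labeled by $(q,\bm{\lambda}_q)$, and that the index-controlled embedding does not merge distinct sectors in a way that could flip the sign of the dropped terms.
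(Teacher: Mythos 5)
Your zero-mean argument and your diagonal-term computation are correct and coincide with what the paper does: the paper also kills the mean sample-by-sample (via the vanishing-mean lemma applied to each summand under independent per-sample integration), and its final bound is obtained exactly as the $p=q$ terms of the expanded square, matching your diagonal contribution. You have also correctly located the crux — the off-diagonal covariances $\EE[\partial\ell_q\,\partial\ell_{q'}]$ — but your proposed resolution has a genuine gap that cannot be repaired along the route you describe. Your enlarged circuit with a $Q$-dimensional index ancilla is itself a QRENN whose embedded data operator is $\widetilde{H}_t=\bigoplus_q\ketbra{q}{q}\ox H_t(\bm{x}_q)$, so by the paper's own structure theorem (Proposition~\ref{prop:dla_main_result}) its DLA decomposes into simple sectors labeled by the joint eigenvalue tuples $\bm{\lambda}$ of $\{\widetilde{H}_t\}_t$, with sector projections $\sum_q\ketbra{q}{q}\ox\Pi^{(q)}_{\bm{\lambda}}$. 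Sectors therefore \emph{merge} across all samples sharing a tuple; they are not labeled by pairs $(q,\bm{\lambda}_q)$. In a merged sector the variance formula produces squares of signed sums of the form $\sum_j\bigl(\sum_q\chi^{(q)}_{\bm{\lambda}}\tr(P_jM_{y_q})\bigr)^2$, and the inter-sample cross terms in these squares are not non-negative, so they cannot simply be dropped.

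A concrete counterexample kills the lifting argument: take $Q=2$ with the \emph{same} Hamiltonian embedded for both samples but opposite binary labels $M_0,M_1$ from Eq.~\eqref{eq:binary_labeling_op}. Then $(M_0)_{\frak{su}(2^m)}=-(M_1)_{\frak{su}(2^m)}$, every merged-sector projection of $\sum_q\ketbra{q}{q}\ox M_{y_q}$ vanishes, and indeed the total loss $1-\tfrac{1}{2}\tr\bigl(U\rho U^{\dagger}(M_0+M_1)\bigr)$ is constant, so $\Var[\partial_{t,\mu}\cL]=0$ — strictly below the positive diagonal expression your construction would need to recover. Hence no non-negativity argument for the ``inter-sample overlaps'' can close the proof in the enlarged picture; the single-circuit Fontana formula applied there faithfully reports the (possibly zero) joint variance. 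The paper proceeds differently: it expands $(\partial_{\Omega}\cL)^2$ as a double sum over $(p,q)$, relates circuits carrying different data through correction unitaries, $U_{g^+}(\bm{x}_q)=U_{g^+}(\bm{x}_p)V_{pq,+}$ and $U_{g^-}(\bm{x}_q)=V_{pq,-}U_{g^-}(\bm{x}_p)$, integrates each term over the $p$-th sample's dynamical group alone using a split-Casimir lemma (its Lemma~\ref{lem:adj_int_AoxB}), and only then discards the $p\neq q$ terms by a positivity claim. That last step in the paper is itself thin — Lie-algebra projections of PSD operators need not be PSD, and the same degenerate training set above strains it — but it is applied to per-sample integrals rather than to a merged joint DLA, which is precisely where your construction breaks down.
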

\end{tcolorbox}
The proof of this proposition is provided in Appendix~\ref{appendix_subsec:total_loss_gradient}.%\syz{Don't quite see where the proof is in the appendix. Suggest to add an explicit one.} 
It should be noted that the proposition does not guarantee trainability for arbitrary training sets and input states. If either $M_{y_q}$ or the input state $\rho$ has an exponentially small projection onto {$\liedqnn$ with respect to any data in $\cT$}, the bound becomes trivially decaying. {On the other hand, if $\cT$ is carefully selected and appropriate input states and measurements are used, the gradient variance can be lower bounded by $1/\Op{poly}(n)$.}
%\syz{Let $V$ be the span of the ranges of all $H_t$, then $R^2_{\cS}(O) := \tr(\Pi_V O)^2$.}
\begin{definition}[Joint eigenspace overlap]\label{def:joint_eigenspace_radius}
    Consider a set of commuting Hermitian matrices $\cS = \{H_t\}_t \subset i\mathfrak{u}(d)$, with joint eigenspaces characterized by the projections $\{\Pi_{\bm{\lambda}}\}_{\bm{\lambda}}$. The joint eigenspace overlap of a Hermitian operator $O \in i\mathfrak{u}(d)$ with respect to the set $\cS$ is defined as $R^2_{\cS}(O) := \sum_{\bm{\lambda}\ne \bm{0}} \tr(\Pi_{\bm{\lambda}} O)^2$. If $\cS = \{H\}$ has one element $H$, we write $R^2_H(O)$ by replacing $\cS$ with $H$.
\end{definition}

\update{
\begin{tcolorbox}
\begin{proposition}\label{prop:variance_dqnn_On_In}
    Under the same assumptions as in Proposition~\ref{prop:zero_mean_gradient_lb_grad_var}, let $m$ scale as $\cO(\log(n))$, input state $\rho = \ketbra{\psi}{\psi}\ox\rho_n$ for some $m$-qubit pure state $\ket{\psi}$, and $n$-qubit state $\rho_n$. Suppose that each processing Hermitian $\Omega_{\mu} \ox I_{2^n}$ and labeling measurement $M_{y_q} \ox I_{2^n}$ acting locally on the processing register, and $\norm{\Omega_{\mu}}^2_F, \norm{M_{y_q}}^2_F$ scales as $\Omega(1)$ for $q\in[Q]$. If the batch size $Q$ scales as $\cO(\Op{poly}(n))$ and at least one data $X_q$ offers $\Omega(1/\Op{poly}(n))$ scaled $R_{X_{q}}(\rho_n)$. Then,
    $$\Var_{\bm{\theta}, \bm{\varphi}}[\partial_{t,\mu} \cL] \geq \Omega(1/\Op{poly}(n)).$$
\end{proposition}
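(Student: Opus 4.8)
The plan is to start from the lower bound already established in Proposition~\ref{prop:zero_mean_gradient_lb_grad_var} and to evaluate each of its three projection factors explicitly under the product structure assumed here, invoking the algebraic decomposition of Proposition~\ref{prop:dla_main_result}. Since every summand in that bound is nonnegative, I would first discard all data points except the distinguished index $q^\star$ for which $R_{X_{q^\star}}(\rho_n)=\Omega(1/\Op{poly}(n))$, reducing the problem to lower-bounding $\frac{1}{Q^2}\sum_{\bm{\lambda}\neq\bm{0}}(\cdots)$ for that single $q^\star$. By Proposition~\ref{prop:dla_main_result} the relevant simple components are $\frak{g}_{\bm{\lambda}}\simeq\frak{su}(2^m)$, one per nonzero joint eigenspace $V_{\bm{\lambda}}$, realized concretely as $\{P\ox\Pi_{\bm{\lambda}}: P\in\frak{su}(2^m)\}$; in particular $d_{\frak{g}_{\bm{\lambda}}}=4^m-1$, which is $\Op{poly}(n)$ once $m=\cO(\log n)$.

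The core of the argument is the evaluation of the three norms. Choosing a Frobenius-orthonormal basis $\{Q_s\}$ of $\frak{su}(2^m)$, the operators $Q_s\ox\Pi_{\bm{\lambda}}/\sqrt{\dim V_{\bm{\lambda}}}$ form an orthonormal basis of $\frak{g}_{\bm{\lambda}}$. Projecting $\rho=\ketbra{\psi}{\psi}\ox\rho_n$ and using $\tr((\ketbra{\psi}{\psi}\ox\rho_n)(Q_s\ox\Pi_{\bm{\lambda}}))=\braandket{\psi}{Q_s}{\psi}\tr(\rho_n\Pi_{\bm{\lambda}})$ gives $\norm{\rho_{\frak{g}_{\bm{\lambda}}}}_F^2=\frac{\tr(\rho_n\Pi_{\bm{\lambda}})^2}{\dim V_{\bm{\lambda}}}(1-2^{-m})$, where the factor $1-2^{-m}=\sum_s\braandket{\psi}{Q_s}{\psi}^2$ is the squared Frobenius norm of the traceless part of $\ketbra{\psi}{\psi}$. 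Because $M_{y_{q^\star}}\ox I_{2^n}$ and $\Omega_\mu\ox I_{2^n}$ act trivially on the embedding register, the analogous computation yields $\norm{(M_{y_{q^\star}})_{\frak{g}_{\bm{\lambda}}}}_F^2=\dim V_{\bm{\lambda}}\,\norm{(M_{y_{q^\star}})_0}_F^2$ and, crucially, the generator factor uses the intrinsic Killing form of $\frak{su}(2^m)$, so that $\norm{\Omega_{\frak{g}_{\bm{\lambda}}}}_K^2=2\cdot 2^m\,\norm{(\Omega_\mu)_0}_F^2$ with no dependence on $\dim V_{\bm{\lambda}}$ (here $(\cdot)_0$ denotes the traceless part). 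The decisive observation is that $\dim V_{\bm{\lambda}}$ cancels between the state projection and the observable projection, leaving a per-eigenspace term proportional to $\tr(\rho_n\Pi_{\bm{\lambda}})^2$; summing over $\bm{\lambda}\neq\bm{0}$ then reconstructs exactly $R_{X_{q^\star}}^2(\rho_n)$ via Definition~\ref{def:joint_eigenspace_radius}.

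Collecting the factors, the surviving contribution takes the form $\frac{1}{Q^2}\cdot\frac{2\cdot 2^m(1-2^{-m})\norm{(M_{y_{q^\star}})_0}_F^2\norm{(\Omega_\mu)_0}_F^2}{(4^m-1)^2}\,R_{X_{q^\star}}^2(\rho_n)$. I would then feed in the hypotheses: $m=\cO(\log n)$ makes the prefactor $2\cdot 2^m/(4^m-1)^2=\Omega(1/\Op{poly}(n))$ and $1-2^{-m}=\Omega(1)$; the assumed $\Omega(1)$ Frobenius norms (with the understanding that the traceless parts $\norm{(M_{y_{q^\star}})_0}_F^2$ and $\norm{(\Omega_\mu)_0}_F^2$ remain $\Omega(1)$, which holds for traceless generators and for low-rank labeling projectors); $R_{X_{q^\star}}^2(\rho_n)=\Omega(1/\Op{poly}(n))$; and $1/Q^2=\Omega(1/\Op{poly}(n))$ since $Q=\cO(\Op{poly}(n))$. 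The product of these polynomially bounded quantities yields $\Var_{\bm{\theta},\bm{\varphi}}[\partial_{t,\mu}\cL]=\Omega(1/\Op{poly}(n))$. The main obstacle I anticipate is the norm bookkeeping in the middle step: one must carefully separate the Frobenius-type projections of the observable and state (which scale with the multiplicity $\dim V_{\bm{\lambda}}$) from the Killing-norm generator factor (intrinsic to $\frak{su}(2^m)$ and multiplicity-free), since it is precisely this asymmetry that produces the cancellation of $\dim V_{\bm{\lambda}}$ and prevents a spurious dependence on the possibly exponential eigenspace dimensions.
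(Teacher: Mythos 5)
Your proposal follows essentially the same route as the paper's own proof: it starts from the lower bound of Proposition~\ref{prop:zero_mean_gradient_lb_grad_var}, exploits the decomposition $\liedqnn \simeq \frak{c}\oplus\frak{su}(2^m)^{\oplus r}$ with components realized as $\{P\ox\Pi_{\bm{\lambda}}\}$, computes the same three projection factors (state, measurement, Killing norm of the generator) per simple ideal, uses the same cancellation of the eigenspace multiplicity to reconstruct $R^2_{X_q}(\rho_n)$, and concludes with the identical scaling argument in $m=\cO(\log n)$, $Q=\cO(\Op{poly}(n))$. Your explicit tracking of the traceless parts of $M_{y_q}$ and $\Omega_\mu$ is a minor refinement of bookkeeping that the paper glosses over, but it does not change the argument.
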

\end{tcolorbox}}

{The detailed proof of the above results can be found in Appendix~\ref{appendix_subsec:proof_of_them_no_BP}. The fundamental reason underlying the proposition is that the control embedding is sufficiently ``weak'', such that it only polynomially expands the DLA when $m \in \cO(\log(n))$, compared to the case with decoupled data processing and embedding registers. As long as $M_{y_q}$ acts locally, an appropriate choice of $\cT$ can reduce or even eliminate the barren plateau effect, provided the overlap between $\rho_n$ and the eigenspaces of each data $X_q$ is sufficiently large.} %\syz{Please also write the proof in the appendix with explicit reference to this corollary.}

% The choice of the initial state $\rho_n$, which serves as a ``probe'' state~\cite{Giovannetti2011advances}, is also crucial for efficient training in supervised learning tasks. When $\rho_n = I_{2^n}/2^n$, our QRENN reduces to the well-known \textit{one-clean-qubit} model (DQC1)~\cite{Morimae2017power,Kim2024expressivity,Xuereb2023deterministic}, which has been proven to be classically hard to simulate~\cite{Fujii2018impossibility,Morimae2014hardness}. In this scenario, the interaction between $\rho_n$ and the data generators becomes exponentially small, potentially inducing barren plateaus.

{Recent studies, notably Cerezo et. al.~\cite{Cerezo2023does}, have raised notable concerns that variational quantum circuits which provably avoid BP may also admit efficient classical simulation, particularly if their effective dynamics are restricted to polynomially small subspaces. This perspective suggests a potential trade-off between trainability and quantum advantage. However, our proposed QRENN architecture resists such dequantization. Specifically, when the data processing register is restricted, the QRENN model naturally reduces to powerful algorithmic primitives such as QSP, QSVT, and their generalized variants~\cite{Gilyen2019quantum,Low2017optimal,Rossi2022multivariable}, which are believed to be classically intractable to simulate in general.  Importantly, the assumption of a sufficiently large initial overlap with relevant eigenspaces, which underpins our trainability guarantees, is not overly restrictive. Similar assumptions underlie the success of many QSP- and QSVT-based quantum algorithms~\cite{Lee2023evaluating,Dong2021efficient,Lin2022heisenberg,Gilyen2019quantum} where the input state is often prepared to match the desired spectral support. On the other hand, in the extreme case where this overlap is exponentially small, e.g., by choosing the maximally mixed state $\rho_n = I_{2^n}/2^n$, our model reduces to the DQC1 (one-clean-qubit) setting~\cite{Morimae2017power,Kim2024expressivity,Xuereb2023deterministic}, which is known to be classically hard to simulate despite its simplicity~\cite{Fujii2018impossibility,Morimae2014hardness}. Therefore, QRENN resides in a regime that is both trainable (under reasonable assumptions), making it a compelling candidate for scalable quantum machine learning.}

% \MR{argument for, analysis, connection to other algorithms, classical simulability.}

{To further investigate the trainability of QRENN in quantum supervised learning, we focus on a binary classification task. Specifically, we consider distinguishing three types of Hamiltonians from random Hermitian matrices generating Haar unitaries: $n$-qubit Pauli matrices, involutory matrices, and real diagonal matrices. To construct $\cT$ with fixed size $|\cT| = 100$, we randomly select $50$ elements from the feature set and another $50$ from Haar random unitaries. The $n$-qubit Pauli matrices are the tensor product of random single-qubit Pauli matrices, resulting in $4^n - 1$ elements (excluding the $n$-qubit identity). Involutory matrices are generated by constructing diagonal matrices with entries randomly chosen from $\{+1, -1\}$, then applying the adjoint action of a Haar unitary. For random diagonal Hamiltonians, each diagonal entry is independently and uniformly sampled from $[0, \pi)$.} For the binary classification, we define the two-outcome POVM $\{M_{0}, M_{1}\}$ as
\begin{equation}\label{eq:binary_labeling_op}
    M_0 = \frac{I_{2^{m}} - Z^{\ox m}}{2} \ox I_{2^n}; \quad M_1 = \frac{I_{2^{m}} + Z^{\ox m}}{2} \ox I_{2^n}.
\end{equation}
With the above setup, we now demonstrate the trainability of our QRENN in learning these three classes of Hamiltonians. %\syz{For the supervised learning part, consider to (1) verify that the conditions of corollary \ref{prop:variance_dqnn_On_In} all hold, (2) explicitly compute the value of the $\Omega(1/poly(n))$ bound, and (3) conduct numerical experiment for the variance in the left hand side by sampling the parameters and compute the variance of the statistical data. It'll be great if (1) shows that all the conditions indeed hold, and (2) and (3) combined show that the bound is quite tight.}
\begin{corollary}\label{coro:binary_class_pauli_inv_diag}
    Under the conditions established {in Proposition~\ref{prop:variance_dqnn_On_In} with $\{M_0, M_1\}$ defined in Eq.~\eqref{eq:binary_labeling_op}}, %\syz{Please clarify which conditions}, 
    the QRENN can distinguish $n$-qubit Pauli, involutory, and diagonal quantum Hamiltonians from random Hermitian operators without encountering BP.
\end{corollary}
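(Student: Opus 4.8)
The plan is to obtain the corollary as a direct specialization of Proposition~\ref{prop:variance_dqnn_On_In}: for each of the three classification problems I would check that its hypotheses hold, after which the bound $\Var_{\bm{\theta},\bm{\varphi}}[\partial_{t,\mu}\cL]\geq\Omega(1/\Op{poly}(n))$ is immediate, which is precisely the statement that no barren plateau occurs. The hypotheses split into two groups: structural conditions on the processing register and the measurements, common to all three tasks, and the data-dependent joint-eigenspace overlap condition, which must be treated class by class.

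For the structural conditions, I would observe that the labeling operators of Eq.~\eqref{eq:binary_labeling_op} act as the identity on the embedding register, hence ``locally'' on the processing register, and that on the $m$-qubit factor each is the rank-$2^{m-1}$ projector $\tfrac{1}{2}(I_{2^m}\mp Z^{\ox m})$, so $\norm{M_{y_q}}_F^2=2^{m-1}=\Omega(1)$. The hardware-efficient template of Figure~\ref{fig:QRENN_circuit} has generators $\Omega_\mu$ equal to fixed single- and two-qubit Paulis, whence $\norm{\Omega_\mu}_F^2=\Omega(1)$; with $m=\cO(\log n)$ the simple summands of the DLA have dimension $d_{\frak{g}_{\bm{\lambda}_q}}=\dim\frak{su}(2^m)=4^m-1=\cO(\Op{poly}(n))$, and the batch size $Q=100$ is trivially $\cO(\Op{poly}(n))$. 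Thus every size condition of Proposition~\ref{prop:variance_dqnn_On_In} is met once the overlap condition is established.

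The core of the proof is to exhibit, for each task, at least one feature-class Hamiltonian $X_q$ in the batch with $R_{X_q}(\rho_n)=\Omega(1/\Op{poly}(n))$; this suffices because Proposition~\ref{prop:variance_dqnn_On_In} requires only a single such point, and each batch contains $50$ feature-class elements. For the Pauli family, any non-identity Pauli string $P$ satisfies $P^2=I$ and $\tr P=0$, so its eigenvalues $\pm1$ each have multiplicity $2^{n-1}$; taking $\rho_n=I_{2^n}/2^n$ gives $\tr(\Pi_{\pm1}\rho_n)=1/2$ and hence $R^2_{P}(\rho_n)=1/2=\Omega(1)$. For the involutory family, $X_q^2=I$ again forces eigenvalues $\pm1$, and a pigeonhole argument shows one eigenspace has dimension at least $2^{n-1}$, so with the same $\rho_n=I_{2^n}/2^n$ one gets $R^2_{X_q}(\rho_n)\geq 1/4=\Omega(1)$. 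The diagonal family is the delicate case: a generic diagonal matrix with entries drawn from $[0,\pi)$ has a non-degenerate spectrum, so the maximally mixed state would give the exponentially small overlap $R^2\approx 2^{-n}$ that characterizes the DQC1 regime; I would instead take $\rho_n=\ketbra{v}{v}$ for a fixed computational basis state $\ket{v}$, which is an eigenvector with eigenvalue $\lambda_v$ that is almost surely nonzero, yielding $\tr(\Pi_{\lambda_v}\rho_n)=1$ and $R^2_{X_q}(\rho_n)=1=\Omega(1)$.

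Combining the two groups of hypotheses, each of the three tasks satisfies all assumptions of Proposition~\ref{prop:variance_dqnn_On_In}, so $\Var_{\bm{\theta},\bm{\varphi}}[\partial_{t,\mu}\cL]\geq\Omega(1/\Op{poly}(n))$ and the QRENN trains free of barren plateaus. I expect the diagonal family to be the main obstacle: unlike the Pauli and involutory cases, whose large spectral degeneracies make the maximally mixed state work automatically, the diagonal family has no degeneracy, so the argument must use a class-dependent embedding-register state (a computational basis state rather than the maximally mixed state) and verify both that this choice remains admissible under Proposition~\ref{prop:variance_dqnn_On_In} and that the relevant eigenvalue is nonzero almost surely under the $[0,\pi)$ sampling.
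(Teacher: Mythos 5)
Your proof is correct, and it follows the paper's overall skeleton — reduce the corollary to Proposition~\ref{prop:variance_dqnn_On_In} by checking the norm/locality conditions on $\{M_0,M_1\}$ and the generators, then exhibit for each feature class a probe state $\rho_n$ with non-negligible joint eigenspace overlap — but your class-by-class overlap arguments take a genuinely different route. For the Pauli and involutory classes the paper chooses $\rho_n=\ketbra{+}{+}^{\ox n}$ and uses the identity $R^2_{P}(\rho_n)=\tr(\Pi_+\rho_n)^2+\tr(\Pi_-\rho_n)^2=\tfrac{1}{2}\bigl(1+\tr(P\rho_n)^2\bigr)\geq\tfrac12$, whereas you use the maximally mixed probe together with spectral degeneracy (exact multiplicity $2^{n-1}$ for traceless Pauli strings, pigeonhole for general involutions); both are sound, and the paper's choice merely has the cosmetic advantage of matching the pure probes used in its numerics. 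The substantive divergence is the diagonal class. The paper keeps a delocalized probe $\rho_n=\tfrac12\ketbra{+}{+}^{\ox n}+\tfrac12 I_{2^n}/2^n$ and claims $R^2_{X_q}(\rho_n)>1/4$ via Appendix~\ref{app:joint_eig_overlap}; however, a generic diagonal $X_q$ with entries sampled from $[0,\pi)$ has an almost surely non-degenerate spectrum, so every eigenprojector is rank one with $\tr(\Pi_j\rho_n)=2^{-n}$, giving the true overlap $\sum_j\tr(\Pi_j\rho_n)^2=2^{-n}$ — the appendix computation effectively replaces a sum of squares by the square of a sum, and with this probe the hypothesis of Proposition~\ref{prop:variance_dqnn_On_In} actually fails. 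Your choice of a computational basis state $\ketbra{v}{v}$, an exact eigenvector of every diagonal $X_q$ whose eigenvalue is almost surely nonzero, yields $R^2_{X_q}(\rho_n)=1$ and genuinely closes this step; your remark that the maximally mixed state would push the diagonal case into the exponentially-small-overlap DQC1 regime is precisely the failure mode described by the paper's own Corollary~\ref{coro:bp_exponential_overlap_rho_n}. In short, your route is not only valid but repairs the one fragile point in the paper's argument.
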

The proof of the corollary directly follows from Proposition~\ref{prop:variance_dqnn_On_In}. By choosing the labeling operators $M_0, M_1$ defined in Eq.~\eqref{eq:binary_labeling_op}, we can first verify,
\begin{equation*}
    \norm{\frac{I_{2^m} \pm Z^{\ox m}}{2}}_F^2 = \frac{1}{2}\tr(I_{2^m} \pm Z^{\ox m}) = 2^{m-1} \in \Omega(1/\Op{poly}(n)),  
\end{equation*}
as $m$ scales as $\cO(\log(n))$. It then suffices to show that for each of these three features, there always exists $X_q$ such that $R_{X_q}(\rho_n)$ is sufficiently large. Indeed, for the Pauli and involutory feature sets, we can choose $\rho_n = \ketbra{+}{+}^{\otimes n}$. For any $X_q$ corresponding to these features, one can show that $R^2_{X_q}(\rho_n) > 1/2$, satisfying the assumption of Proposition~\ref{prop:variance_dqnn_On_In}. For the diagonal feature set, we set $\rho_n = 0.5\ketbra{+}{+}^{\otimes n} + 0.5 I_{2^n} / 2^n$ to examine the performance of our QRENN model with a noisy input probe. In this case, $R^2_{X_q}(\rho_n) > 1/4$. As a result, since $\cT$ contains half of its elements generated from each corresponding feature set, we conclude that the conditions of Proposition~\ref{prop:variance_dqnn_On_In} are satisfied, completing the proof. Detailed calculations of the joint eigenspace overlap can be found in Appendix~\ref{app:joint_eig_overlap}. 

\begin{figure}[t!]
    \centering
    \includegraphics[width=1.0\linewidth]{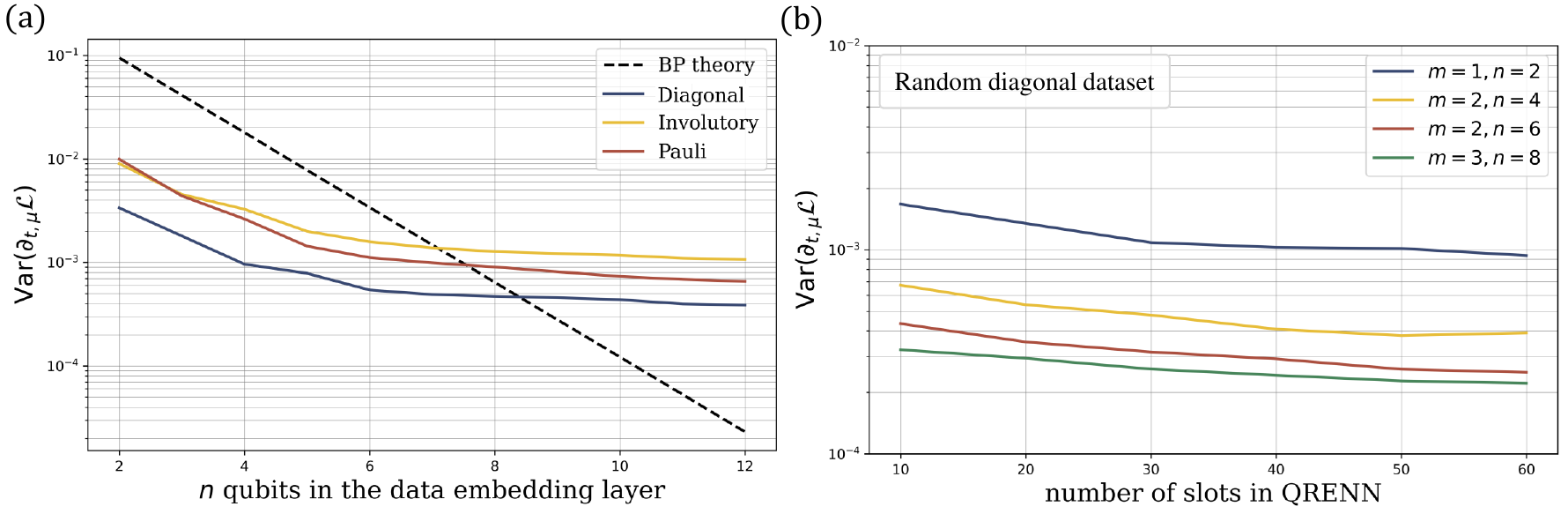}
    \caption{\textbf{Gradient statistics for QRENNs with various datasets embedded in with respect to (a) the number of qubits in the data embedding register and (b) the number of slots in the model.} In subfigure (a), the blue, yellow and red curves illustrate the results from embedding $\cT$ containing features of diagonal, involutory and Pauli Hamiltonians, respectively. The plot showcases our QRENN model experiences a polynomial decay in the gradient variance, diverging from BP. In subfigure (b), we concentrate on the diagonal Hamiltonian dataset by examining the gradient variance with respect to the circuit depth. The curves in different colors illustrate the results of $n = 2,4,6,8$. This setup illustrates the rapid mixing towards \(e^{\liedqnn}\) within the polynomial slots of the network.}
    \label{fig:gradient_test_dqnn}
\end{figure}

\subsection{Numerical demonstration on trainability}

In this section, we present numerical experiments on gradient statistics for learning the three datasets, demonstrating the trainability of QRENN. We also analyze the rate of concentration as the network transitions towards the fully mixing phase in the represented unitary group, as the number of slots in the model increases.

To construct the data processing register, we employ the $R_Y$-$R_Z$ circuit template~\cite{Sim2019expressibility} with multiple layers in each block $W(\bm{\theta}_t)$, as illustrated in Figure~\ref{fig:QRENN_circuit}(b). {To generate the entanglement between pairs of qubits in the processing register, we apply $CZ$ gates linearly after those single-qubit rotations.} %\MR{I have also added the explanation for the $CZ$ applied in the block in the caption of Figure~\ref{fig:QRENN_circuit}} \syz{Please give more details on the circuit block $W$: What are the gates after $R_Y$ and $R_Z$? Which pairs of qubits do we apply those gates?} 
For each sampling experiment, the initial state is fixed as $\ketbra{0}{0}^{\otimes m} \otimes \rho_n$, {with the corresponding probe states defined in Section~\ref{sec:trainability_dqnn} for each feature set.} All other conditions, including the measurement, circuit construction, size of $\cT$, and embedding method, are set as in Proposition~\ref{prop:variance_dqnn_On_In}.

The gradient sampling experiments are performed using the \textit{quairkit} Python toolkit~\cite{quairkit} on a physical workstation with~\hardware. For each experiment, every component of $\bm{\theta}$ is initialized uniformly over $[0,2\pi)$. The gradient statistics in Figure~\ref{fig:gradient_test_dqnn} are plotted for a fixed parameter $\theta_{1,1}$ with respect to (a) the number of qubits $n$ in the data embedding register and (b) the number of slots in QRENNs. Each data point is obtained by sampling $500$ randomly initialized networks.

In Figure~\ref{fig:gradient_test_dqnn}(a), the blue, yellow, and red curves correspond to the results for the diagonal, involutory, and Pauli Hermitian embedding datasets, respectively. In all three cases, the gradient variance decreases slowly with system size, representing a significant departure from the exponential decay (shown by the black dashed line) characteristic of barren plateaus.

Additionally, in Figure~\ref{fig:gradient_test_dqnn}(b), we focus on the random diagonal Hermitian dataset for different $n$. A rapid convergence in the variance values is observed as the number of slots in the QRENN increases. This demonstrates convergence to the second moment of the variance as a function of network depth and confirms the use of the abstracted gradient. Furthermore, the gradually converging curves coincide with the polynomial decay of the variance with $n$, as shown in Figure~\ref{fig:gradient_test_dqnn}(a).

\section{Application of QRENN in quantum supervised learning}
{Classifying quantum Hamiltonians is a fundamental task in quantum many-body physics and quantum information, as it enables the identification of distinct physical phases~\cite{Schuld2021supervised,Li2022}, symmetries~\cite{laborde2022quantum,Chen2025hypothesis}, and computational properties of quantum systems~\cite{huang2022quantum}. Such classification is vital for understanding emergent behaviors like topological order~\cite{wu2024learning}, for designing efficient simulation algorithms, and for verifying quantum hardware by distinguishing structured from random or noisy dynamics.} In this section, we show that recursive data embedding can enhance the classification power of quantum machine learning models, particularly for features of Hamiltonians previously examined in the gradient statistics. Furthermore, we numerically demonstrate that our model can be trained to predict symmetry-protected topological (SPT) phase ordering~\cite{wu2024learning}, highlighting the potential of QRENN for studying the underlying properties of condensed matter systems.

\subsection{Supervised learning on classes of Hamiltonian}
We first examine the performance of our QRENN in classifying different types of quantum Hamiltonians, as studied in the gradient statistics from Section~\ref{sec:trainability_dqnn}. To construct the batch training set $\cT$ and the test set for different numbers of qubits $n$ (ranging from $3$ to $8$), we first generate $600$ raw data points, with half generated from the feature sets and half from random Hermitian matrices that generate Haar unitaries. {Each data point is pre-labeled by $y_q = 1$ or $0$, indicating whether or not the data $X_q$ possesses the corresponding feature. We then uniformly select $100$ elements from the raw dataset to form the batch training set $\cT$, with the remaining data constituting the test set.} %\syz{Is there a reason why you used only 100 out of 600 for training?} \MR{No specific reason for the numbers, one may say that the generalization of our model is not bad? because the test set is relatively large than the training size.}%\syz{why shuffling? the test set shouldn't be seen by the model.} 
The model is first trained on the batch training set $\cT$ and then evaluated on the test set to assess its classification performance.
\begin{figure}[t!]
    \centering
    \includegraphics[width=0.85\linewidth]{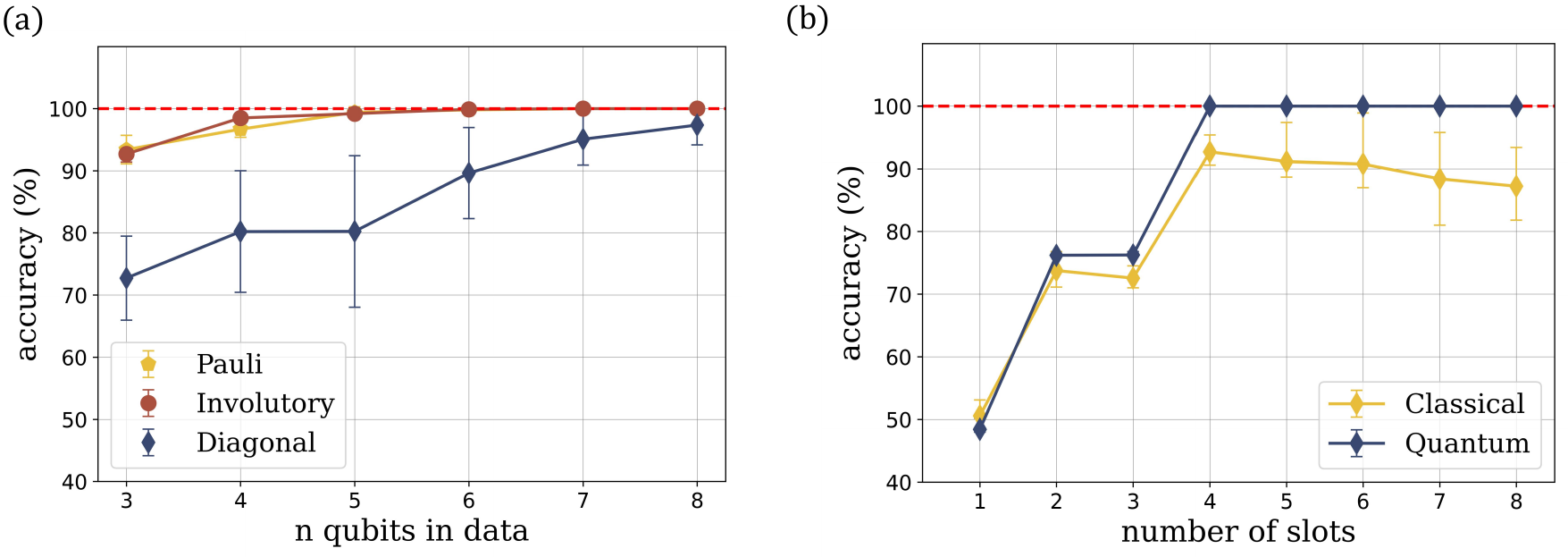}
    \caption{\textbf{The test accuracy of QRENN supervised learning on different feature sets of quantum Hamiltonian, with respect to (a) the number of qubits of the data embedded in, and (b) the number of slots for querying data.} In (a), we perform ideal simulations to examine the accuracy of QRENN in learning Pauli, involutory and diagonal features. The number of slots is fixed to $4$ for the Pauli and involutory feature sets, and $6$ for the diagonal feature set, respectively. In (b), we fix $n=6$ and examine the robustness of QRENN against both classical and quantum noise by increasing the number of slots.}
    \label{fig:application_ham_classification}
\end{figure}

We fix $m=2$ in the data processing register for all data sizes $n$. The embedding generator for Pauli and Involutory datasets is defined throughout the entire model as $H_t(X_q) = \frac{\pi}{2}(I_{2^n} - X_q)$ while for diagonal datasets, $H_t(X_q) = X_q$. Measurements on the data processing register are defined in Eq.~\eqref{eq:binary_labeling_op}, corresponding to the binary class labels. The total loss function is given in Eq.~\eqref{eq:total_loss}. During prediction, classification is performed by estimating the expectation value of the observable $Z^{\otimes 2} \otimes I_{2^n}$. If the expectation value is larger than zero. We treat the predicted label as $\Tilde{y}_q = 1$. Otherwise, $\Tilde{y}_q = 0$. Denote $\hat{f}(Z \otimes I_{2^8},X_q; \bm{\theta}^{\star}) = \tr(\bm{U}(X_q; \bm{\theta}^{\star}) \rho_0 \bm{U}(X_q; \bm{\theta}^{\star})^{\dagger} Z \otimes I_{2^8})$, we do the following processing rule,
\begin{equation}
\tilde{y}_q = 
\begin{cases} 
0, & \text{if } \hat{f}(Z \otimes I_{2^8},X_q; \bm{\theta}^{\star}) < 0, \\
1, & \text{otherwise}.
\end{cases}
\end{equation}
The accuracy is then quantitatively assessed as: $1- \frac{1}{Q} \sum_{q=1}^Q \left|y_q - \tilde{y}_q\right|$.

As shown in Figure~\ref{fig:application_ham_classification}(a), our model achieves nearly perfect accuracy in identifying the Pauli and Involutory Hamiltonians under ideal simulations. {The observed substantial variations in model accuracy suggest that the optimization process encounters numerous local minima when applied to the diagonal dataset.} Notably, we observe a clear improvement in classification accuracy for all three datasets as the number of data qubits increases. This phenomenon can be attributed to the geometric properties of high-dimensional spaces, particularly measure concentration. In high-dimensional systems, Haar random unitaries tend to exhibit similar statistical behaviors, whereas structured unitaries such as Pauli matrices remain distinctly identifiable. 

Furthermore, we extend our analysis by incorporating hardware imperfections into the quantum data embedding process, considering the effects of both classical and quantum noise. In classical supervised learning, robustness to label noise is often tested by introducing labeling errors~\cite{Song2022learning}. We first set the labeling error rate to $0.05$. As shown by the yellow curve in Figure~\ref{fig:application_ham_classification}(b), embedding more data improves test accuracy. We also introduce quantum noise arising from data embedding. Practical quantum hardware inevitably suffers from control errors and qubit crosstalk, which cause deviations from ideal evolution~\cite{Ash2020experimental,Zhou2023quantum}. To model these deviations, we introduce an additional noise operation $\cE$ in the form of a small global perturbation $\delta H_0$ {with $\norm{H_0}_{\infty} \leq 1$}, so that the control embedding gate becomes
\begin{equation}
\cE\left(e^{i\ketbra{1_m}{1_m}\otimes X_q}\right) = e^{i(\ketbra{1_m}{1_m}\otimes X_q + \delta H_0)}.
\end{equation}
We set the noise amplitude to $\delta = 0.01$, ensuring operation within the weak noise limit, and evaluate our model on the $n=6$ involuntary feature set.

As shown by the blue curve in Figure~\ref{fig:application_ham_classification}(b), introducing the perturbation causes the average performance to remain close to the ideal in the four-slot scenario, though measurable fluctuations emerge. These discrepancies highlight the stochastic effects originating from control imperfections and crosstalk. In both noisy scenarios, we observe a significant improvement in accuracy by querying more data points, highlighting the ability of quantum data to mitigate the effects of both classical and quantum noise in QRENN learning of Hamiltonian feature sets. {Furthermore, our results indicate that the QRENN model exhibits distinct robustness against quantum crosstalk noise compared to classical labeling errors. A possible explanation is that the model exhibits a strong dependence on data quality. With an increasing number of slots, classical labeling noise may exert a greater impact on the model's ability to correctly interpret the data, resulting in reduced classification performance.}

\subsection{Symmetry-protected topological phase (SPT)  detection}

We employ our QRENN approach to classify the symmetry-protected topological (SPT) phase in one-dimensional many-body systems, as illustrated in Figure~\ref{fig:detecting_SPT_phase}(a). By monitoring whether the measurement result stays near $0$, we can effectively distinguish the SPT phase. Based on this criterion, we define a set of binary labels $y_q$ as classification outputs (i.e., $y_q=0$ if the ground state $\ket{\psi}$ of the corresponding Hamiltonian falls in the SPT phase and otherwise, $y_q = 1$.) We consider a one-dimensional cluster-Ising model with periodic boundary conditions~\cite{smacchia2011statistical,li2022quantum}, described by
\begin{equation}~\label{cluster-Ising model with periodic boundary}
    H(\lambda) = -\sum_{j=1}^N X_{j-1} Z_{j} X_{j+1} + \lambda \sum_{j=1}^N Y_j Y_{j+1}.
\end{equation}
For $\lambda < 1$, the ground state of this system is in the cluster phase protected by a $\ZZ_2 \times \ZZ_2$ symmetry and exhibits a nonvanishing string order parameter. When $\lambda > 1$, it falls in the antiferromagnetic phase, and a continuous quantum phase transition occurs at $\lambda = 1$~\cite{smacchia2011statistical}.
\begin{figure}[h!]
    \centering
    \includegraphics[width=1.0\linewidth]{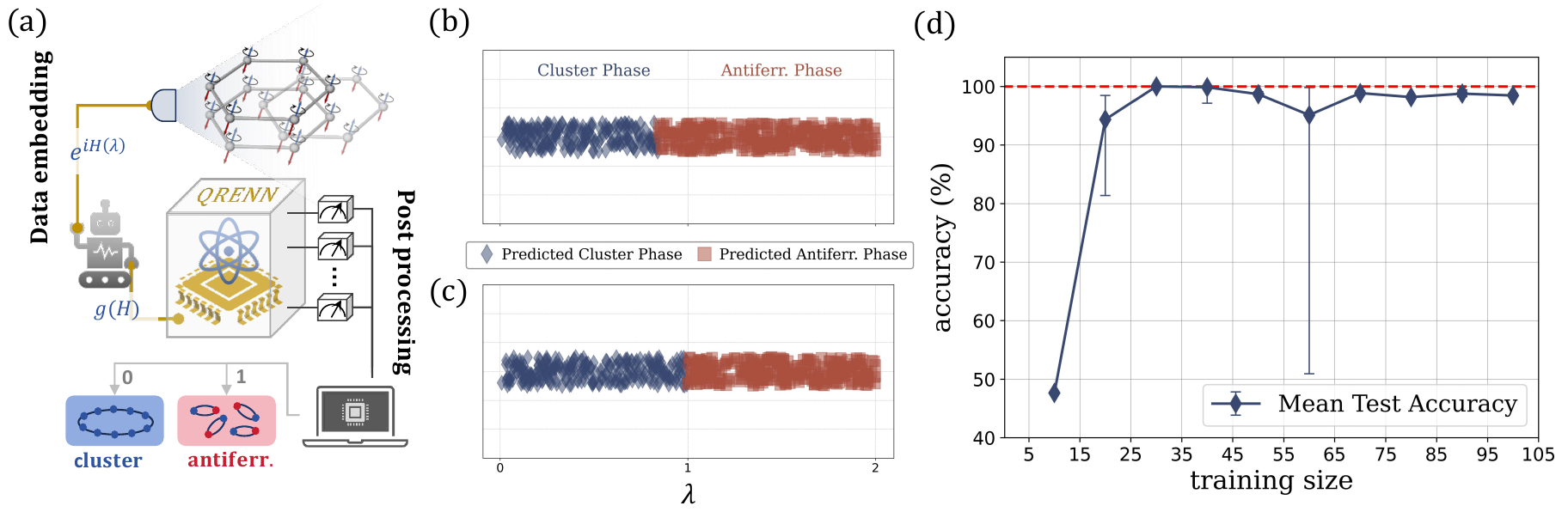}
    \caption{\textbf{Classification of SPT phases using supervised learning with QRENN.} In (a), the schematic framework illustrates the QRENN for identifying SPT phases. The data $H(\lambda)$ is embedded into the network as in Figure~\ref{fig:QRENN_circuit}.  In (b) - (c), classification results are obtained by QRENN for $500$ test data with initial states $\ket{0}^{\otimes 9}$ and $\ket{0} \otimes \ket{+}^{\otimes 8}$, respectively. {In (d), the mean test accuracy improves as the number of training samples which  are randomly selected from the same total dataset of $600$ increases, using initial states $\ket{0}\otimes\ket{+}^{\otimes 8}$.}}
    \label{fig:detecting_SPT_phase}
\end{figure}

To evaluate the performance of the QRENN model, we consider a quantum system comprising $n = 8$ qubits, preparing $40$ training samples and $560$ test samples with the Hamiltonian parameter $\lambda$ randomly sampled from the interval $[0,2]$. 
Within this configuration, the data processing register operates on $m = 1$ qubit,  the number of slots is $10$, and the initial quantum state is set as $\ket{0}^{\otimes 9}$. 
In the training set, the $q$-th data $H(\lambda_q)$ is assigned a label from $\{0,1\}$ according to the previous rules, and embedded into the model with the logic in Figure~\ref{fig:QRENN_circuit}. {Measurements performed on the data processing register are defined by Eq.~\eqref{eq:binary_labeling_op}, corresponding directly to the binary class labels. The total loss function is then formulated as shown in Eq.~\eqref{eq:total_loss}.}

Due to the quantum phase transition at $\lambda = 1$, measurement values near $1(\pm \; 1\times10^{-6})$ are empirically excluded from the analysis. Figure~\ref{fig:detecting_SPT_phase}(b) presents the results of detecting the SPT phase under the conditions described above. {While certain states in the antiferromagnetic phase, corresponding approximately to $\lambda \in [0.8, 1]$, are misclassified into the cluster phase, the overall classification accuracy remains high at $92.32\%$.}. Notably, changing the initial state from $\ket{0}^{\otimes 9}$ to $\ket{0} \otimes \ket{+}^{\otimes 8}$ markedly enhances classification performance, as shown in Figure~\ref{fig:detecting_SPT_phase}(c). {This modification raises the classification accuracy to $100\%$, completely distinguishing between states in the cluster and antiferromagnetic phases.} Both initial states have been numerically analyzed for their eigenspace overlaps with respect to the cluster-Ising model across different values of $\lambda$, consistently showing relatively large overlap magnitudes. Further details are provided in Appendix~\ref{app:joint_eig_overlap}.

Additionally, we investigate the influence of training size on QRENN model performance. To this end, we conduct $20$ independent experiments for each training size ranging from $10$ to $100$, {randomly selected from the same total dataset of $600$ samples}, and summarize the results in Figure~\ref{fig:detecting_SPT_phase}(d). The results clearly show an increasing trend in mean accuracy with larger training datasets, with performance stabilizing once the training set size exceeds {approximately $30$ to  $40$ samples}. Collectively, our findings indicate that the QRENN model reliably classifies the SPT phase in many-body quantum systems, and that its accuracy can be significantly improved by optimizing the initial state and expanding the training dataset.

\section{Concluding Remarks and Discussion}
We propose the Quantum Recurrent Embedding Neural Network (QRENN), a novel quantum neural network architecture for quantum supervised learning that incorporates both data processing and data embedding registers. This recurrent data embedding framework leverages the expressivity of general quantum circuit architectures and enables complex quantum information processing through repeated interactions between data and trainable parameters. Using DLA analysis, we show that the QRENN’s algebra decomposes as $\liedqnn \simeq \mathfrak{c} \oplus \mathfrak{su}(2^m)^{\oplus r}$, where $\mathfrak{c}$ represents the center and $r$ counts the number of distinct joint eigenspaces arising from the embedded data generators. We rigorously prove that this structure ensures a polynomial lower bound on the gradient variance under appropriate state overlap, thereby confirming the absence of barren plateaus in QRENN training. These theoretical results are validated numerically, demonstrating that QRENN maintains trainability even in moderately deep circuits. We further showcase the model’s practical utility by applying it to two challenging quantum supervised learning tasks: classifying quantum Hamiltonians and detecting SPT phases. In both cases, QRENN achieves high accuracy and robustness, with performance improving as more embedding slots which highlights its scalability and resilience to classical and quantum noise. 

Beyond the immediate performance advantages, QRENN offers a compelling case study in the broader discussion of the relationship between trainability and classical simulability, as first issued in~\cite{Cerezo2023does}. Although, in the complexity-theoretic landscape, whether or not our QRENN model can resist classification into $\Op{CSIM}$ or $\Op{CSIM_{QE}}$ under the proper initial state condition, still requires further discussion. Our observations suggest that QRENN occupies a unique position in the trainability-simulability landscape due to its connections to QSVT-based algorithms. The result and method open new avenues to explore quantum machine learning models that have scalability and potential quantum advantages.

There are several directions for further exploring the capabilities of QRENN. One important challenge is how to enhance the expressivity of the model while preserving its favorable trainability properties, considering significant applications in physics, machine learning, and beyond. This calls for the development of more flexible architectural designs or adaptive embedding strategies that maintain the Lie algebraic structure under control. Another promising direction is to explore a systematic characterization of QRENN from the perspective of quantum function transformation, akin to well-established frameworks such as multivariate quantum signal processing and quantum singular value transformation~\cite{Rossi2022multivariable}. Such an understanding would help clarify the functional landscape of QRENN and potentially guide the construction of more interpretable and powerful models. Beyond theoretical analysis, QRENN also opens up opportunities for broader applications in both classical and quantum supervised learning. Its architecture can be naturally extended to tasks in quantum many-body physics, quantum sensing, and general machine learning scenarios where recurrent embeddings or structured dynamics play a critical role. We anticipate that continued exploration of QRENN along these lines will contribute to scalable and principled quantum learning tools for applications with quantum advantages across disciplines.

{\paragraph{Acknowledgements --.} The authors would like to thank Benchi Zhao, Guangxi Li, Hongshun Yao, and the AQIS 2025 reviewers for thoughtful and insightful comments. 
{M.J., E.H., X.S. and X.W. were} partially supported by the National Key R\&D Program of China (Grant No.~2024YFB4504004), the National Natural Science Foundation of China (Grant. No.~12447107), the Guangdong Provincial Quantum Science Strategic Initiative (Grant No.~GDZX2403008, GDZX2403001), the Guangdong Provincial Key Lab of Integrated Communication, Sensing and Computation for Ubiquitous Internet of Things (Grant No. 2023B1212010007), the Quantum Science Center of Guangdong-Hong Kong-Macao Greater Bay Area, and the Education Bureau of Guangzhou Municipality.}

\bibliographystyle{unsrt}
\bibliography{main}

%%%%%%%%% SUPPLEMENTAL MATERIAL %%%%%%%%%

\newpage
\appendix
\setcounter{subsection}{0}
\setcounter{table}{0}
\setcounter{figure}{0}

\vspace{3cm}
% \onecolumngrid
% \vspace{2cm}

\begin{center}
\Large{\textbf{Appendix for ``Quantum Recurrent Embedding Neural Network''} \\ \textbf{
}}
\end{center}

\renewcommand{\theequation}{S\arabic{equation}}
% \numberwithin{equation}{section}
% \renewcommand{\thesubsection}{\normalsize{Supplementary Note \arabic{subsection}}}
\renewcommand{\theproposition}{S\arabic{proposition}}
\renewcommand{\thedefinition}{S\arabic{definition}}
\renewcommand{\thelemma}{S\arabic{lemma}}
\renewcommand{\thefigure}{S\arabic{figure}}
\setcounter{equation}{0}
\setcounter{table}{0}
\setcounter{section}{0}
\setcounter{proposition}{0}
\setcounter{definition}{0}
\setcounter{figure}{0}

% \tableofcontents
% %%%%%%%%%%%%%%%%%%%%%%%%%%%%%%%%%%%%%%%%%%%%%%%%%%%%%%%%%%%%%%%%%%%%%%%%%%%
\section{Notation and Background}

Let $\mathcal{H}$ be a finite-dimensional Hilbert space. Consider two parties, Alice and Bob, with associated Hilbert spaces $\mathcal{H}_A$ and $\mathcal{H}_B$, respectively, where the dimensions of $\mathcal{H}_A$ and $\mathcal{H}_B$ are denoted as $d_A$ and $d_B$. We use $\mathcal{L}^\dagger(\mathcal{H}_A)$ to represent the set of Hermitian operators acting on $\mathcal{H}_A$, $\mathcal{P}(\mathcal{H}_A)$ to represent the set of positive semidefinite operators acting on $\mathcal{H}_A$, and $\mathcal{D}(\mathcal{H}_A)$ to represent the set of all density operators on $\mathcal{H}_A$. The Lie algebra $\mathfrak{g}$ represents the set of generators for a dynamical Lie algebra, and $\spn_{\CC} \langle H_1, \dots, H_K \rangle_{\text{Lie}}$ denotes the complex span of the set of generators $\{H_1, \dots, H_K\}$ under the Lie bracket operation. $\mathcal{L}$ represents the loss function, while $\text{Var}_{\bm{\theta} \sim \nu}[\partial_{l,k}\mathcal{L}]$ and $\mathbb{E}_{\bm{\theta}, x}[\partial_{t, \mu} \mathcal{L}]$ represent the variance and expectation of the gradient of the loss function, respectively. The symbol $G$ represents a group with an element $g$, while $\text{GL}(d)$ is the general linear group of dimension $d$. The set $\mathcal{U}(d)$ represents the $d$-dimensional unitary group, and $U_g, V_g$ denote unitary representations of the group element $g$. The Lie algebras $\mathfrak{u}(d)$ and $\mathfrak{v}(d)$ represent Lie algebras of dimension $d$, and $\Omega_h, O_h$ represent Hermitian representations of Lie algebra elements $h$.

\section{Introduction to Lie groups, Lie algebra and Representation theory}\label{appendix:intro_lie_group_representation}

\subsection{Basics of Lie algebra}
In this section, we review the fundamental concepts of Lie algebras. A Lie algebra $\mathfrak{g}$ is a vector space over a field $\mathbb{F}$ equipped with a bilinear operation, called the Lie bracket, defined as $[\cdot, \cdot] : \mathfrak{g} \times \mathfrak{g} \rightarrow \mathfrak{g}.$ The Lie bracket satisfies the following properties for all elements $A, B, C \in \mathfrak{g}$. (1) \textbf{Antisymmetry}:$[A, B] = -[B, A]$ for any $A,B\in\mathfrak{g}$. Specifically, when $ A = B $, we have $ [A, A] = 0 $. (2) \textbf{Jacobi Identity}:
   \[
   [A, [B, C]] + [B, [C, A]] + [C, [A, B]] = 0.
   \]
for all $A,B,C \in\mathfrak{g}$. For matrix Lie algebras, the Lie bracket is defined by the commutator, $[A, B] = AB - BA$. These algebraic structures are fundamental in various domains, including quantum mechanics and the study of continuous transformation groups. 

The Baker-Campbell-Hausdorff (BCH) formula is a fundamental result in Lie algebra theory that expresses the logarithm of the product of two exponentials of Lie algebra elements in terms of their Lie brackets. Specifically, for elements $A, B \in \frak{g}$, the BCH formula provides an expression for $\log(e^A e^B)$ as a series involving commutators of $X$ and $Y$:
\begin{equation}
    \log(e^A e^B) = A + B + \frac{1}{2}[A,B] + \frac{1}{12}[A,[A,B]] - \frac{1}{12}[B,[A,B]] + \cdots.
\end{equation}
This series, while generally infinite, converges under certain conditions and plays a crucial role in connecting the algebraic structure of Lie algebras with the analytic structure of Lie groups. The BCH formula is instrumental in various applications, including the study of Lie group representations, deformation theory, and quantum mechanics, where it facilitates the combination of infinitesimal generators.

We say $\frak{g}$ is \textit{commutative} if for any pairs of $A,B$, $[A,B] = 0$. A \textit{Lie subalgebra} $\frak{h}\subseteq \frak{g}$ is a linear subspace of $\frak{g}$ which is closed under the Lie bracket. A Lie algebra is said to be \textit{compact} if it is a Lie algebra of a compact Lie group.

For matrix Lie algebras, the usual Lie bracket is given by the commutator $[A,B] = AB - BA$. For an $N$-dimensional real matrix Lie algebra with basis $\{E_1,\cdots, E_N\}$ that are independent over $\CC$, the \textit{complexification} of $\frak{g}$ ($\frak{g}_{\CC}$) is defined as the Lie algebra of the same basis, spanning in the complex field,
\begin{equation}
    \frak{g}_{\CC}:=\spn_{\CC}\{E_1, \cdots, E_N\}.
\end{equation}
A Lie algebra $\frak{g}$ can be decomposed into the direct sum of subalgebras $\frak{g}_1, \frak{g}_2,\cdots,\frak{g}_n$, as $\frak{g} = \oplus_{j=1}^n \frak{g}_j$, where `$\oplus$' denotes the direct sum of vector spaces. An \textit{ideal} of a Lie algebra $\frak{g}$ is a Lie subalgebra $\frak{i}\subseteq \frak{g}$ so that for all $A\in \frak{g}$ and $B\in \frak{i}$, $[A,B]\in\frak{i}$. Every Lie algebra has two \textit{trivial} ideals, i.e., $\frak{i}=\spn\{\bm{0}\}$ or $\frak{i} = \frak{g}$. Two special ideals are commonly discussed: (1) center of $\frak{g}$; (2) commutator ideal,
\begin{equation}
    \frak{c}(\frak{g}) := \{A\in\frak{g}:[A,B]=0, \; \forall B\in\frak{g}\},
\end{equation}
\begin{equation}
    [\frak{g},\frak{g}] := \spn\{[A,B]:A,B\in \frak{g}\}.
\end{equation}
A \textit{simple Lie algebra} is a Lie algebra that is non-abelian, containing no non-trivial ideals. A \textit{semisimple} Lie algebra can be decomposed as a direct sum of simple Lie algebras.

\begin{theorem}[\cite{Knapp1988LieGB}]
    Any subalgebra $\frak{g}\subseteq \frak{sl}(n,\FF)$, for $\FF = \RR$ or $\CC$ has the following decomposition, 
    \begin{equation}
        \frak{g} = \frak{c} \oplus \frak{g}_1 \oplus \cdots \oplus \frak{g}_n,
    \end{equation}
    where $\frak{c}$ is the center of $\frak{g}$, and each $\frak{g}_j$ is a simple Lie algebra over $\FF$.
\end{theorem}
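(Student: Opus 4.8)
The decomposition should be read under this paper's standing setting, where $\frak{g}=\liedqnn$ is a \emph{compact} real Lie algebra, i.e.\ a subalgebra of $\frak{su}(n)$ generated by anti-Hermitian operators $iH_j$. Compactness is essential here: for a general subalgebra of $\frak{sl}(n,\FF)$ the claim fails (the solvable Borel subalgebra of traceless upper-triangular matrices admits no such decomposition), so the plan is to extract and exploit the compact structure through an $\Ad$-invariant inner product and then reduce the theorem to a repeated orthogonal splitting into minimal ideals.

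First I would equip $\frak{g}$ with the form $\langle X,Y\rangle := -\tr(XY)$. For anti-Hermitian $X$ one has $\langle X,X\rangle = \tr(X^{\dagger}X) > 0$ whenever $X\neq 0$, so the form is real-valued and positive definite, and cyclicity of the trace yields the infinitesimal invariance $\langle[Z,X],Y\rangle = -\langle X,[Z,Y]\rangle$ for all $Z\in\frak{g}$. The central lemma then follows at once: if $\frak{i}$ is an ideal then its orthogonal complement $\frak{i}^{\perp}$ is also an ideal, since for $X\in\frak{i}^{\perp}$, $Y\in\frak{i}$, $Z\in\frak{g}$ we get $\langle[Z,X],Y\rangle = -\langle X,[Z,Y]\rangle = 0$ because $[Z,Y]\in\frak{i}$. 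Positive definiteness forces $\frak{i}\cap\frak{i}^{\perp}=0$, and since $[\frak{i},\frak{i}^{\perp}]\subseteq\frak{i}\cap\frak{i}^{\perp}=0$, every ideal splits off as a genuine \emph{Lie-algebra} direct summand.

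Next I would iterate this splitting: choose a nonzero ideal $\frak{a}_1$ of minimal dimension, write $\frak{g}=\frak{a}_1\oplus\frak{a}_1^{\perp}$, and repeat inside $\frak{a}_1^{\perp}$ to obtain an orthogonal sum $\frak{g}=\frak{a}_1\oplus\cdots\oplus\frak{a}_k$ of mutually commuting minimal ideals. Each minimal ideal $\frak{a}$ obeys the dichotomy $[\frak{a},\frak{a}]=0$ or $[\frak{a},\frak{a}]=\frak{a}$, because $[\frak{a},\frak{a}]$ is again an ideal of $\frak{g}$ contained in $\frak{a}$ by the Jacobi identity. If $[\frak{a},\frak{a}]=0$ then $\frak{a}$ is abelian, and the computation $\langle[X,Y],[X,Y]\rangle = -\langle Y,[X,[X,Y]]\rangle = 0$ for $X\in\frak{a}$, $Y\in\frak{g}$ (the last equality because $[X,Y]\in\frak{a}$ commutes with $X$) shows $\frak{a}$ is central, hence one-dimensional by minimality. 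If instead $\frak{a}$ is perfect, I would prove it simple: any ideal $\frak{b}$ of $\frak{a}$ is automatically an ideal of $\frak{g}$, since $[\frak{g},\frak{b}]=[\frak{a},\frak{b}]+[\frak{a}^{\perp},\frak{b}]$ with $[\frak{a},\frak{b}]\subseteq\frak{b}$ and $[\frak{a}^{\perp},\frak{b}]\subseteq\frak{a}^{\perp}\cap\frak{a}=0$, so minimality forces $\frak{b}\in\{0,\frak{a}\}$.

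Finally I would set $\frak{c}$ to be the sum of the abelian (one-dimensional, central) minimal ideals and let $\frak{g}_1,\dots,\frak{g}_r$ be the simple ones, giving $\frak{g}=\frak{c}\oplus\frak{g}_1\oplus\cdots\oplus\frak{g}_r$. That $\frak{c}$ is exactly the center is checked by decomposing a central element along the summands and noting that its component in any simple block is central there, hence zero since a simple Lie algebra has trivial center. The main obstacle is structural rather than computational: one must recognize that the result is genuinely a statement about compact (equivalently reductive) Lie algebras and that every step rests on the positive-definite $\Ad$-invariant form. The one delicate point is the abelian-implies-central step, where positive definiteness is used in an essential way to pass from $\langle[X,Y],[X,Y]\rangle=0$ to $[X,Y]=0$.
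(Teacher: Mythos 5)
Your proposal is correct, but there is nothing in the paper to compare it against: the paper states this decomposition as a quoted result from \cite{Knapp1988LieGB} and never proves it, so your argument supplies a proof rather than paralleling one. Two things are worth recording. First, your opening caveat is a genuine catch: as literally written (``any subalgebra $\frak{g}\subseteq\frak{sl}(n,\FF)$'') the statement is false --- the Borel subalgebra of traceless upper-triangular matrices in $\frak{sl}(2,\CC)$ is solvable with trivial center, so it cannot equal its center plus a direct sum of simple ideals. The hypothesis that rescues the statement, and the only setting in which the paper uses it, is that $\frak{g}$ consists of skew-Hermitian matrices (the DLA $\liedqnn \subseteq \frak{u}(2^{m+n})$), i.e.\ that $\frak{g}$ is a compact Lie algebra; the paper's appendix indeed invokes the result in exactly this form (``as $\frak{g}$ is a subalgebra of $\frak{u}(2^n)$, it is automatically reductive''). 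Second, your proof is the standard, complete argument for that compact case, and every step checks out: the form $\langle X,Y\rangle=-\tr(XY)$ is positive definite and invariant; orthogonal complements of ideals are ideals with $[\frak{i},\frak{i}^{\perp}]\subseteq\frak{i}\cap\frak{i}^{\perp}=0$; iteration produces mutually commuting minimal ideals; the dichotomy $[\frak{a},\frak{a}]\in\{0,\frak{a}\}$ holds because $[\frak{a},\frak{a}]$ is an ideal of $\frak{g}$; abelian minimal ideals are central (your computation $\langle[X,Y],[X,Y]\rangle=-\langle Y,[X,[X,Y]]\rangle=0$ is precisely where positive definiteness does irreplaceable work, and it is the step that fails in the Borel counterexample) and hence one-dimensional; perfect minimal ideals are simple because their ideals are automatically ideals of $\frak{g}$; and the sum of the abelian blocks equals the center since simple Lie algebras have trivial center. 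The only cosmetic mismatch is indexing: the number of simple summands is some $r$ depending on $\frak{g}$, not the matrix size $n$ that the paper's display reuses as the number of blocks.
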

If we have $\frak{g}$ been written in the above form, then we have $[\frak{g},\frak{g}] = \frak{g}_1 \oplus \cdots \oplus \frak{g}_n$ is semisimple.

\subsection{Representations and Norms}
Given $V$ a finite-dimensional inner product space over the field $\FF \in \{\RR,\CC\}$, and $\cU(V)$ the isometry group on $V$. Denote $\frak{u}(V)$ the algebra of skew-Hermitian operators on $V$, which is isomorphic to $\CC^n$ for some $n$. A unitary (Lie) group representation of $G$ is a smooth homomorphism $\phi:G\rightarrow \cU(V)$, which is also faithful if it is an injection. Such a representation makes the linear space $V$ a $G$-module, and we have then related the representation to the linear space.

A \textit{subrepresentation} of $\phi$ defines an invariant subspace of $V$, and we say $\phi$ a \textit{irreducible} if it has no non-trivial subrepresentations. Based on the Peter-Weyl theorem, every compact Lie group has a faithful, finite-dimensional unitary representation. Furthermore, any representation of a compact Lie group can be decomposed into a direct sum of irreducible representations. The representations of simple Lie algebras are always faithful or trivial.

Any map $f:V\rightarrow W$ connecting two representations $\phi:G\rightarrow \Op{GL}(V)$ and $\psi:G\rightarrow \Op{GL}(W)$ is called \textit{equivariant} if $f(\phi(g)v) = \sigma(g)f(v)$ for any $g\in G$ and $v\in V$, and if $f$ is bijective, then the representations are \textit{isomorphic as} $G$-modules. 

\begin{lemma}[Schur's Lemma~\cite{hall2000elementary}]
    Let $V$ and $W$ be vector spaces, and let $\rho_V$ and $\rho_W$ be irreducible representations of $G$ on $V$ and $W$, respectively, over an algebraically closed field. Let $f:V\rightarrow W$ be a $G$-equivariant linear map, then 
    \begin{enumerate}
        \item $f$ is either zero or an isomorphism.
        \item If $V=W$ and $\rho_V=\rho_W$, then the only non-trivial $G$-equivariant linear map is homothety, i.e., scalar multiples of the identity.
    \end{enumerate}
\end{lemma}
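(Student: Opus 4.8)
The plan is to prove both parts directly from the definition of irreducibility, using the observation that the kernel and image of a $G$-equivariant map are themselves invariant subspaces. For part (1), I would first check that $\ker f$ is a $G$-invariant subspace of $V$: if $v \in \ker f$, then equivariance gives $f(\rho_V(g)v) = \rho_W(g)f(v) = 0$, so $\rho_V(g)v \in \ker f$ for every $g \in G$. Since $\rho_V$ is irreducible, $\ker f$ must be either $\{0\}$ or all of $V$. Symmetrically, $\operatorname{im} f$ is a $G$-invariant subspace of $W$, because $\rho_W(g)f(v) = f(\rho_V(g)v) \in \operatorname{im} f$; irreducibility of $\rho_W$ then forces $\operatorname{im} f$ to be either $\{0\}$ or all of $W$. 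Combining these, if $f$ is nonzero then $\ker f \neq V$ forces $\ker f = \{0\}$, so $f$ is injective, while $\operatorname{im} f \neq \{0\}$ forces $\operatorname{im} f = W$, so $f$ is surjective; hence $f$ is an isomorphism.

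For part (2), I would exploit algebraic closedness of the field. Given an equivariant $f : V \to V$ with $\rho_V = \rho_W$, finite-dimensionality together with algebraic closedness guarantees that $f$ has at least one eigenvalue $\lambda$. The map $f - \lambda\,\id_V$ is again $G$-equivariant, since $\id_V$ commutes with every $\rho_V(g)$ and equivariance is preserved under taking linear combinations with the identity. By construction $f - \lambda\,\id_V$ has nontrivial kernel, namely the $\lambda$-eigenspace, so it cannot be an isomorphism. Part (1) then forces $f - \lambda\,\id_V = 0$, i.e.\ $f = \lambda\,\id_V$, which is the asserted homothety (scalar multiple of the identity).

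The only genuinely delicate point is the appeal to algebraic closedness in part (2): it is precisely this hypothesis that ensures an eigenvalue exists, and the conclusion fails over a non-closed field (for instance, a nontrivial rotation of $\RR^2$ commutes with itself yet is not a real scalar multiple of the identity). Everything else is elementary linear algebra phrased in the language of invariant subspaces, so I expect no serious obstacle beyond carefully recording the two invariance computations for $\ker f$ and $\operatorname{im} f$.
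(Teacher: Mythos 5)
Your proof is correct: it is the standard textbook argument for Schur's Lemma, using invariance of $\ker f$ and $\operatorname{im} f$ for part (1) and the existence of an eigenvalue over an algebraically closed field for part (2). Note that the paper itself does not prove this lemma at all --- it is quoted as a known background result with a citation to the literature --- so there is no in-paper proof to compare against; your argument matches the one found in the cited reference. The only hypothesis you rely on that is implicit rather than explicit in the statement is finite-dimensionality of $V$ (needed for the eigenvalue to exist in part (2)), which you correctly flag and which holds throughout the paper's setting.
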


The group representation can induce a \textit{Lie algebra representation} $d\phi:\frak{g}\rightarrow \frak{u}(V)$, which, from the language of differential geometry, is the differential of the smooth map $\phi$. Lie algebra representations with respect to addition (as for the group operations), and the Lie bracket satisfying
\begin{equation}
    [d\phi(x), d\phi(y)] = d\phi([x,y]).
\end{equation}

Any compact Lie group $G$ containing unitary matrices has its \textit{standard representation}, which is the natural action on $V = \CC^n$. The standard representation of Lie algebra is defined in the same way by replacing unitary matrices with skew-Hermitian matrices. The \textit{adjoint representation} $\phi$ consists of $G$ acting on its Lie algebra is defined as,
\begin{equation}
    \phi(g)h = ghg^{-1}, \forall h\in \frak{g},
\end{equation}
and $V_\phi = \frak{g}$. The associated \textit{Lie algebra adjoint representation} is then given by $d\phi(h)k = [h,k]\; \forall k\in \frak{g}$. A special representation used in QML is so-called the \textit{tensor power} of a representation $\phi$ denoted as $\phi \ox \phi$ and,
\begin{equation}
    (\phi\ox\phi)(g) := \phi(g)\ox\phi(g) \; \forall g\in G,
\end{equation}
where the corresponding Lie algebra representation is given by $d\phi\ox d\phi$, defined as,
\begin{equation}
    (d\phi\ox d\phi)(h) := d\phi(h) \ox I + I\ox d\phi(h).
\end{equation}
In the following, we denote $\Ad$ and $\ad$ be the adjoint representations of Lie group and Lie algebra. 

Before defining the norms, we first introduce the trace form for the Lie algebra $\frak{g}$. Given the orthonormal basis for $\frak{g}$ denoted as $\{e_i\}$ the \textit{standard trace form} is derived as,
\begin{equation}
    -\tr(e_i e_j) = \delta_{ij}
\end{equation}
where the negative sign is to ensure the positive definiteness for the skew-Hermitian operators. More generally, the \textit{Killing form} is defined as a symmetric bilinear form,
\begin{equation}
    \kappa(x,y)_{\frak{g}} := \tr(\ad_x \circ \ad_y),
\end{equation}
for any $x,y\in \frak{g}$. We say the Killing form is \textit{non-degenerate} if for every non-zero $x\in\frak{g}$, there exists a $y\in\frak{g}$ s.t., $\kappa(x,y)_{\frak{g}} \neq 0$.
\begin{lemma}[Cartan's Criterion]
    The Killing form is non-degenerate for a real (complex) Lie algebra $\frak{g}$ if and only if $\frak{g}$ is semisimple.
\end{lemma}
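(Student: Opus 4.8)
The statement is the classical Cartan criterion for semisimplicity, so the plan is to reduce it to the companion Cartan criterion for solvability and then exploit the invariance of the Killing form. Throughout I write $\frak{g}^{\perp} := \{x\in\frak{g} : \kappa(x,y)_{\frak{g}} = 0 \;\forall y\in\frak{g}\}$ for the radical of the Killing form, so that non-degeneracy is exactly the assertion $\frak{g}^{\perp} = 0$. The first preliminary step is to check that $\frak{g}^{\perp}$ is an ideal: this is immediate from the invariance identity $\kappa([x,y],z)_{\frak{g}} = \kappa(x,[y,z])_{\frak{g}}$, itself a one-line consequence of $\ad_{[x,y]} = [\ad_x,\ad_y]$ and cyclicity of the trace. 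I also record the standard fact that for any ideal $\frak{i}\subseteq\frak{g}$ the Killing form of $\frak{i}$ equals the restriction of that of $\frak{g}$. Finally, since both semisimplicity and non-degeneracy of $\kappa$ are stable under $\frak{g}\mapsto\frak{g}\otimes_{\RR}\CC$, the real case reduces to the complex one, so I may work over $\CC$.

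For the direction ``non-degenerate $\Rightarrow$ semisimple'' I would argue by contraposition. If $\frak{g}$ is not semisimple its radical is nonzero, and descending the derived series of the radical produces a nonzero \emph{abelian} ideal $\frak{a}\subseteq\frak{g}$ (each derived term of an ideal is again an ideal of $\frak{g}$ by the Jacobi identity). The claim is that $\frak{a}\subseteq\frak{g}^{\perp}$, which forces degeneracy. To see this, fix $a\in\frak{a}$ and $x\in\frak{g}$ and examine $\ad_a\ad_x$ on $\frak{g}$: since $\frak{a}$ is an ideal, $\ad_a$ sends $\frak{g}$ into $\frak{a}$, and since $\frak{a}$ is abelian, applying $\ad_a\ad_x$ a second time lands in $[a,\frak{a}]=0$. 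Hence $(\ad_a\ad_x)^2 = 0$, the operator is nilpotent, its trace vanishes, and $\kappa(a,x)_{\frak{g}}=0$ for all $x$. This is the short, self-contained half.

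The converse, ``semisimple $\Rightarrow$ non-degenerate,'' is where the real work sits. Here I would show that the ideal $\frak{g}^{\perp}$ is solvable and then invoke semisimplicity (radical zero) to conclude $\frak{g}^{\perp}=0$. Solvability follows from the Cartan criterion for solvability applied to the adjoint image $\ad(\frak{g}^{\perp})\subseteq\frak{gl}(\frak{g})$: by the definition of $\frak{g}^{\perp}$ one has $\tr(\ad_x\ad_y)=\kappa(x,y)_{\frak{g}}=0$ for all $x,y\in\frak{g}^{\perp}$, in particular for $x\in[\frak{g}^{\perp},\frak{g}^{\perp}]$, so the criterion yields that $\ad(\frak{g}^{\perp})$ is solvable. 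Since the kernel of $\ad|_{\frak{g}^{\perp}}$ lies in the center of $\frak{g}$ and is therefore abelian, $\frak{g}^{\perp}$ is an extension of a solvable algebra by an abelian one, hence solvable, and thus contained in the (zero) radical.

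I expect the main obstacle to be precisely the Cartan solvability criterion invoked above, which is the genuinely nontrivial input and which I would either cite or prove separately. Its proof rests on the Jordan decomposition in $\frak{gl}(V)$ together with the linear-algebra lemma that an element $t$ satisfying $\tr(t\,s)=0$ for every $s$ in the space $\{u : [u,\frak{b}]\subseteq\frak{a}\}$ (for the relevant pair $\frak{a}\subseteq\frak{b}$) must be nilpotent; verifying this by testing against the semisimple part of $t$ and exploiting that constant-term-free polynomials in the eigenvalues vanish is the delicate point. Everything else---invariance of $\kappa$, the ideal property of $\frak{g}^{\perp}$, and the two short abelian/solvable arguments---is routine once that criterion is in hand.
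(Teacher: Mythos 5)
The paper gives no proof of this lemma to compare against: Cartan's criterion appears in the appendix as classical background (alongside Schur's lemma and the decomposition theorem cited to Knapp), and is invoked as a known fact rather than established. So the only question is whether your blind argument stands on its own, and it does — it is the standard textbook proof. The direction ``non-degenerate $\Rightarrow$ semisimple'' via a nonzero abelian ideal $\frak{a}$ extracted from the derived series of the radical, together with the observation that $(\ad_a\ad_x)^2=0$ forces $\kappa(a,x)_{\frak{g}}=\tr(\ad_a\ad_x)=0$, is complete and self-contained. The converse is a correct reduction: the invariance of $\kappa$ makes $\frak{g}^{\perp}$ an ideal, Cartan's solvability criterion applied to $\ad(\frak{g}^{\perp})\subseteq\frak{gl}(\frak{g})$ gives solvability of the image, the central kernel of $\ad|_{\frak{g}^{\perp}}$ upgrades this to solvability of $\frak{g}^{\perp}$ itself, and semisimplicity kills it. You are explicit that the solvability criterion is the genuinely nontrivial input to be cited or proved separately; that is an honest division of labor and matches how standard references organize the material.

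One point should be made explicit to align your proof with the paper's conventions. Your argument establishes the equivalence of non-degeneracy of $\kappa$ with the vanishing of the radical, whereas the paper defines a semisimple Lie algebra as one decomposable into a direct sum of simple Lie algebras. These notions are equivalent, but the bridge ``radical zero $\Rightarrow$ direct sum of simples'' is the structure theorem, whose usual proof (induction on a minimal ideal and its Killing-orthogonal complement) itself relies on the non-degeneracy you establish in the second direction. There is no circularity — you prove that direction independently of the structure theorem — but a sentence supplying this bridge (and noting that the elementary converse, direct sum of simples $\Rightarrow$ radical zero, follows since a simple factor cannot be solvable) is needed for your statement to coincide with the lemma as the paper phrases it. Your complexification reduction is otherwise fine, since both the radical and the non-degeneracy of $\kappa$ are preserved under $\frak{g}\mapsto\frak{g}\otimes_{\RR}\CC$.
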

For any representation $\phi$ of the compact simple Lie group $e^{\frak{g}}$, the \textit{index of the representation} $I_{\phi}$ is defined in the equation,
\begin{equation}
    -\tr(d\phi(e_i)d\phi(e_j)) = I_{\phi} \delta_{ij}.
\end{equation}
The \textit{standard norm} is then induced by the standard trace form $\norm{h}_{\frak{g}}^2 = -\tr(h^2)$. In particular, for the matrix Lie algebras, this reduces to the Frobenius norm.

\section{Dynamical Lie Algebra of Modified Periodic Models}\label{sec:appendix_dla_modified_periodic_models}
Refer to~\cite{Ragone2024lie,Fontana2024characterizing}, one can analyse the BP phenomenon via  DLA. The $n$-qubit DLA is defined as the Lie closure of the circuit's generators,
\begin{equation}
    \frak{g}=\spn_{\RR}\langle i\Omega_1, i\Omega_2, \cdots, i\Omega_N \rangle_{\Op{Lie}},
\end{equation}
which is closed under commutation and forms a subspace of $\frak{u}(2^n)$ of all skew-Hermitian operators. As $\frak{g}$ is a subalgebra of $\frak{u}(2^n)$, it is automatically a reductive Lie algebra~\cite{Knapp1988LieGB}, and we can always decompose it into the direct sum of ideals,
\begin{equation}
    \frak{g} = \frak{c}\oplus \bigoplus_{j=1}^{k-1} \frak{g}_j,
\end{equation}
where each $\frak{g}_j$ is simple subalgebra and $\frak{c}$ is abelian, i.e., $\frak{c}$ is the  center of $\frak{g}$ and $[\frak{g},\frak{g}] = \bigoplus_{j=1}^{k-1} \frak{g}_j$ is semisimple, called commutator ideal of $\frak{g}$. We suppose the parameterized circuit is deep enough to form an approximate $2$-design on each of the components $e^{\frak{g}_j}$, which will allow us to compute the variance via the Haar integration over the Lie group.

Let's delve into a rigorous mathematical argument to elucidate why fixed generator are inherently involved in the DLA, even when they are not associated with any trainable parameters.

\subsection{Lie Algebraic representation of QRENN}
We first consider the specific structure of QRENN with $m$-qubit data processing register and $n$-qubit data embedding register. Suppose $g_c$ is defined as stated in Sec.~\ref{sec:trainability_dqnn} for some control state $\ket{c}\in\cH_A$, i.e,
\begin{equation}\label{appendix:def_gc}
    g_c(V) = \ketbra{c}{c} \ox V + (I_{2^m} - \ketbra{c}{c}) \ox I_{2^n}.
\end{equation}
More generally, one can extend the  controlling space and define the map,
\begin{equation}\label{appendix:def_g_projection}
    g_{\Pi}(V) = \Pi \ox V + (I_{2^m} - \Pi) \ox I_{2^n}
\end{equation}
where $\Pi$ is some projection of rank less than $2^m$. Let $V = e^{iH}$ for some Hermitian generator $\Omega\in i\frak{u}(2^n)$, then we could express,
\begin{equation}
\begin{aligned}
    \Op{CV} &= \Pi\ox e^{iH} + (I_{2^m} - \Pi)\ox I_{2^n}\\
    &= \Pi \ox \left(\sum^{\infty}_{k=0}\frac{i^k}{k!}H^k\right) + (I_{2^m} - \Pi)\ox I_{2^n}\\
    &=\Pi\ox\left(\sum^{\infty}_{k=1}\frac{i^k}{k!}H^k\right) + (\Pi + I_{2^m} - \Pi)\ox I_{2^n} \\
    &= \sum^{\infty}_{k=0}\frac{i^k}{k!}(\Pi\ox H)^k = e^{i\Pi \ox H}.
\end{aligned}
\end{equation}
To compute the total DLA of our QRENN model, let us start with the $t$-th processing layer $W(\bm{\theta}_t) = \prod_{l=1}^L e^{i\theta_{t,l}\Omega_{l}}$, where $\bm{\theta}_t=(\theta_{t,l})_l$ and $i\Omega_1, \cdots, i\Omega_L\in \frak{su}(2^m)$. Since the control projection $\Pi$ can be evolved under any local unitary from $\Op{SU}(2^n)$ acting on the data processing register. Without loss of generality, we can set $\Pi = \sum_{j=0}^K \ketbra{j}{j}$ to be some diagonal matrices with $K<2^m$. 
% in the following discussion. One can expand the projection $\ketbra{c}{c}$ as
% \begin{equation}
%     \ketbra{c}{c} = \frac{1}{2^m}(I-Z)^{\ox m} = \frac{1}{2^m}\sum_{\bm{b}\in\{0,1\}^m} (-1)^{\sum_{j=1}^{m} b_i} \bigotimes_{j=1}^m Z^{b_j},
% \end{equation}
% where $\bm{b} = (b_1, \dots, b_m)$, with $b_i \in \{0, 1\}$ for $i \in \{1, \dots, m\}$, and for the term $Z^{\bm{b}} = Z^{b_1} \otimes \cdots \otimes Z^{b_m}$. Additionally, when $\bm{b} = 0$, this implies that $b_i = 0$ for all $i \in \{1, \dots, m\}$. By extracting the term of $\bm{b} = \bm{0}$ out and expressing,
% \begin{equation}
%     \ketbra{c}{c} = \frac{1}{2^m}\left(I_{2^m} + \sum_{\bm{b}\neq \bm{0}} (-1)^{\sum_{i=1}^m b_i} Z^{\bm{b}}\right).
% \end{equation}
% We, therefore, can derive the $g_c(V)$ gate in terms of its generator,
% \begin{equation}\label{eq:decompose_cv}
% \begin{aligned}
%     \exp\left(i\ketbra{c}{c}\ox H\right) &= \exp\left(\frac{i}{2^m}\left(I_{2^m} + \sum_{\bm{b}\neq\bm{0}} (-1)^{\sum_{i=1}^m b_i} Z^{\bm{b}}\right)\ox H\right)\\
%     &= \exp\left(\frac{i}{2^m}(I_{2^m} \ox H) + \frac{i}{2^m}\left(\sum_{\bm{b}\neq\bm{0}} (-1)^{\sum_{i=1}^m b_i} Z^{\bm{b}}\ox H\right)\right)\\
%     &=\exp\left(\frac{i}{2^m}(I_{2^m} \ox H)\right) \exp\left( \frac{i}{2^m}\left(\sum_{\bm{b}\neq\bm{0}} (-1)^{\sum_{i=1}^m b_i} Z^{\bm{b}}\ox H\right)\right),
% \end{aligned}
% \end{equation}
% where the last equality holds due to the commutative relation between the generators.

Before further discussion, let us modify our QRENN model a bit in order to match the usual setup of QNN. Suppose now we are allowing embedding multiple data into a $T$-slot QRENN where $\bm{\theta} = (\bm{\theta}_1, \cdots, \bm{\theta}_T)$ are the tunable parameters and $\bm{\varphi} = (\varphi_1, \cdots, \varphi_{T})$ attached to the data vectors. The total QRENN circuit can be represented as a parameterized circuit $\bm{U}(\bm{\theta}, \bm{\varphi})$,
\begin{equation}\label{eq:dqnn_circuit_appendix}
    \bm{U}(\bm{x};\bm{\theta}, \bm{\varphi}) = (W(\bm{\theta}_{T+1})\ox I_{2^n})(\prod_{t=1}^{T} g_{\Pi}(U_t(\bm{x};{\varphi_t}))(W(\bm{\theta}_t)\ox I_{2^n})).
\end{equation}
The embedding map $g_{\Pi}$ is defined the same as in Eq.~\eqref{appendix:def_g_projection}. The cost function was chosen to be the expectation value of some observable $O\in i\frak{u}(2^{m+n})$ with respect to the resulting state $\ket{\psi(\bm{x};\bm{\theta}, \bm{\varphi})} = \bm{U}(\bm{x};\bm{\theta}, \bm{\varphi})\ket{0}^{\ox m}\ket{0}^{\ox n}$. In particular, let $O = O_m \otimes I_{2^n}$ where $O_m$ is a Hermitian operator in $i\frak{u}(\cH_{2^m})$. We assume an input state $\rho = \rho_m\ox \rho_n$ acting on a Hilbert space $\cH \simeq (\CC^2)^{\ox n+m}$, and the cost function can be expressed as,
\begin{equation}
% \label{eq:local_loss_dqnn}
    \ell(O_m \otimes I_{2^n}, \bm{x};\bm{\theta}, \bm{\varphi}) = \tr\left(\bm{U}(\bm{x};\bm{\theta}, \bm{\varphi})(\rho_A\ox \rho_B)\bm{U}^{\dagger}(\bm{x};\bm{\theta}, \bm{\varphi})(O_m \otimes I_{2^n})\right).
\end{equation}
Without loss of generality, we set both $W(\bm{\theta}_t)$ and $U_t(\bm{x};\bm{\varphi}_t)$ to be the periodic QNN template, and $W(\bm{\theta}_t) = \prod_{l=1}^L e^{i\theta_{t,l} \Omega_{l}}$, $U_t(\bm{x};\varphi_t) =  e^{i\varphi_{t} H_q}$ for some Hermitian generators $\Omega_l \in i\frak{u}(2^m)$ and $H_t = H_t(\bm{x}) \in i\frak{u}(2^n)$, encoded with data $\bm{x}$. The explicit form of $\bm{U}(\bm{x};\bm{\theta}, \bm{\varphi})$ can be then represented as,
\begin{equation}
    \bm{U}(\bm{x};\bm{\theta}, \bm{\varphi}) = \prod_{t=1}^{T}\left(\prod_{k=1}^Ke^{i\varphi_{t,k}\Pi \ox H_t}\cdot\prod_{l=1}^L e^{i\theta_{t,l} \Omega_{l} \ox I_{2^n}}\right).
\end{equation}
With our previous expansion form of the embedding map, one can express the total DLA of the modified QRENN model as,
\begin{equation}
    \liedqnn := \spn_{\RR}\langle \; i\Pi\ox H_t, i\Omega_l \ox I_{2^n}\;:\forall t,l \; \rangle_{\Op{Lie}},
\end{equation}
\begin{lemma}\label{lem:g_dqnn_hp}
    Let $\{\Omega_l\}_l$ and $\{H_k\}_k$ be the sets of Hermitian generators where $\spn_{\RR}\liebracket{i\Omega_l} = \frak{su}(d)$ and $H_k$ are some Hermitian matrices of dimension $d'$. Let $\Pi$ be a projection of rank $r<d$. Then,
    $$\spn_{\RR}\langle \; i\Pi\ox H_k, i\Omega_l \ox I_{d'}\;:\forall k,l \; \rangle_{\Op{Lie}} = \spn_{\RR}\langle \; iI_d \ox H_p, i\Omega_q\ox H_r, i\Omega_s\ox I_{d'}\;:\forall p,q,r,s \; \rangle_{\Op{Lie}}.$$
\end{lemma}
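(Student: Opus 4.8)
The plan is to prove the two inclusions separately by checking generators, exploiting that each side is the Lie closure of its listed generators: to show that two such closures coincide it suffices to show that every generator of each side lies in the other algebra (which is itself Lie-closed). Write $\mathfrak{g}_L$ for the left-hand algebra and $\mathfrak{g}_R$ for the right-hand one. A structural fact I would record up front is that, since $\spn_{\RR}\langle i\Omega_l\rangle_{\Op{Lie}} = \mathfrak{su}(d)$, the Lie closure of $\{i\Omega_l\otimes I_{d'}\}$ equals $\mathfrak{su}(d)\otimes I_{d'}$; hence $\mathfrak{su}(d)\otimes I_{d'}$ sits inside both $\mathfrak{g}_L$ and $\mathfrak{g}_R$ and is freely available for bracketing.

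The workhorse I would isolate first is a ``second-factor preservation'' principle. For a fixed $d'$-dimensional Hermitian $H$ and any Lie algebra $\mathfrak{g}$ with $\mathfrak{su}(d)\otimes I_{d'}\subseteq\mathfrak{g}$, the bracket $[\,Z\otimes I_{d'},\,B\otimes H\,] = [Z,B]\otimes H$ keeps the tensor factor $H$ intact. Consequently the set $J_H := \{\,Y\in\mathfrak{su}(d) : Y\otimes H\in\mathfrak{g}\,\}$ is a linear subspace stable under $\Op{ad}_{\mathfrak{su}(d)}$, i.e.\ an ideal of $\mathfrak{su}(d)$. Because $\mathfrak{su}(d)$ is simple for $d\ge 2$, each such $J_H$ is either $0$ or all of $\mathfrak{su}(d)$; so it is enough to exhibit one nonzero element of $J_H$ to conclude $\mathfrak{su}(d)\otimes H\subseteq\mathfrak{g}$. (Equivalently, this is the irreducibility of the adjoint representation via Schur's Lemma.)

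For $\mathfrak{g}_R\subseteq\mathfrak{g}_L$ I would feed the seed $i\Pi\otimes H_k\in\mathfrak{g}_L$ into this principle. Splitting $\Pi = \tfrac{r}{d}I_d + (\Pi-\tfrac{r}{d}I_d)$, the traceless part $i(\Pi-\tfrac{r}{d}I_d)$ is a nonzero element of $\mathfrak{su}(d)$ precisely because $\Pi$ is a non-scalar projection ($1\le r<d$). Bracketing with a suitable $Z\otimes I_{d'}$ gives $[Z,\,i(\Pi-\tfrac{r}{d}I_d)]\otimes H_k$, which is nonzero for some $Z$ since $\mathfrak{su}(d)$ has trivial center; thus $J_{H_k}\neq 0$ and $\mathfrak{su}(d)\otimes H_k\subseteq\mathfrak{g}_L$, so in particular $i\Omega_q\otimes H_k\in\mathfrak{g}_L$. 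Subtracting then isolates the center: $i\Pi\otimes H_k - i(\Pi-\tfrac{r}{d}I_d)\otimes H_k = \tfrac{r}{d}\,iI_d\otimes H_k$ with $r/d\neq 0$ yields $iI_d\otimes H_k\in\mathfrak{g}_L$. Together with $i\Omega_s\otimes I_{d'}$ this covers all generators of $\mathfrak{g}_R$. The reverse inclusion $\mathfrak{g}_L\subseteq\mathfrak{g}_R$ is the easy direction: $i\Omega_l\otimes I_{d'}$ is already a generator of $\mathfrak{g}_R$, and for $i\Pi\otimes H_k$ I again peel off $\tfrac{r}{d}iI_d\otimes H_k$ (a generator of $\mathfrak{g}_R$) and treat the traceless remainder by the same ideal argument inside $\mathfrak{g}_R$, whose $J_{H_k}$ is nonzero because it already contains the generator $i\Omega_q$.

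I expect the main obstacle to be the step that promotes the single control-localized seed $i\Pi\otimes H_k$ to the full block $\mathfrak{su}(d)\otimes H_k$: this is where all the content lies, and it rests on two facts I would state carefully — that a rank-$r$ projection with $1\le r<d$ has a nonzero, non-central traceless part, and that simplicity of $\mathfrak{su}(d)$ (for $d\ge2$) upgrades any nonzero ideal to the whole algebra. Once the block is filled, isolating $iI_d\otimes H_k$ is a one-line subtraction, and the degenerate cases $r=0$ or $d=1$ are excluded by hypothesis and do not arise in the application, where $\Pi=\ketbra{c}{c}$ has rank one on $d=2^m\ge2$.
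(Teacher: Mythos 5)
Your proposal is correct, and it reaches the crucial inclusion by a genuinely different route from the paper. Both proofs share the same skeleton: check generators on each side, fill the block $\mathfrak{su}(d)\otimes H_k$, then isolate $iI_d\otimes H_k$ by subtracting the traceless part of $\Pi$ (and both implicitly need $1\le r$, as you note, since otherwise the coefficient $\alpha_0=r/d$ of the identity vanishes and the claim fails). The difference is in how the block gets filled. The paper does it constructively: it forms $\Delta_l=-[\Pi,\hat{\Omega}_l]$, identifies these, via a Cartan decomposition $\mathfrak{su}(d)=\mathfrak{l}\oplus\mathfrak{p}$ adapted to the block structure of $\Pi$, as spanning the off-diagonal part $\mathfrak{p}$, and then runs explicit matrix-unit computations with the choices $(A,B)=(E_{jk},iE_{jk})$, $(E_{jk},E_{lk})$, $(E_{jk},iE_{lk})$ to show that the second-level commutators span the block-diagonal part $\mathfrak{l}$ as well. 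You replace that entire computation by the observation that $J_{H_k}=\{Y\in\mathfrak{su}(d):Y\otimes H_k\in\mathfrak{g}\}$ is an $\operatorname{ad}$-invariant subspace, hence an ideal, of the simple algebra $\mathfrak{su}(d)$, so it is either $0$ or everything; a single nonzero element, obtained by bracketing the seed $i\Pi\otimes H_k$ with a suitable $Z\otimes I_{d'}$ (which exists because $i(\Pi-\tfrac{r}{d}I_d)\neq 0$ and $\mathfrak{su}(d)$ has trivial center), settles the matter, and the reverse inclusion follows by re-running the same dichotomy inside $\mathfrak{g}_R$. Your route is shorter, immune to computational slips, and more general: it never uses that $\Pi$ is a projection, only that it is non-scalar, and it works verbatim with any simple Lie algebra in the processing factor. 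What the paper's computation buys in exchange is explicitness---the Cartan splitting and matrix-unit bases it produces are of a piece with the explicit orthonormal bases (e.g., $\{\tfrac{i}{\sqrt{\chi_s}}P_q\otimes\Pi_s\}$) that later appendix lemmas construct and reuse for the norm and variance calculations.
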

\begin{proof}
    Denoting the Lie algebra on the left as $\frak{h}$ and the one on the right as $\liedqnn$, we show $\liedqnn\subseteq \frak{h}$ and $\frak{h} \subseteq \liedqnn$.
    Let $\{i\hat{\Omega}_l\}_l$ be a basis of $\frak{su}(d)$ comprising skew-Hermitian matrices from $\{i\Omega\}_l$ and their nested commutators. Since $\Pi$ is a non-trivial projection ($\Pi\neq I_d$), there exists real coefficients $\{\alpha_j\}_{j=0}^{d^2-1}$ where $\alpha_0\neq 0$ and at least one $\alpha_j \neq 0$ for $j \geq 1$, such that:
    \begin{equation*}
    \begin{aligned}
        \Pi = \alpha_0 I_d + \sum_{j=1}^{d^2-1} \alpha_j \hat{\Omega}_j = \sum_{j=0}^{d^2-1} \alpha_j \hat{\Omega}_j,
    \end{aligned}
    \end{equation*}
    where we denote $\hat{\Omega}_0 = I_d$. To show $\liedqnn\subseteq \frak{h}$, 
    we need to show $i\Pi\ox H_k$ and $i\Omega_l\ox I_{d'}$ can be represented by elements in $\frak{h}$. Note that $i\Omega_l \ox I_{d'} \in \frak{h}$ by definition. Similarly, $i\hat{\Omega}_l \ox H_k \in \frak{h}$ since $\frak{h}$ contains all linear combinations of $iI_d \ox H_p$, $i\Omega_q \ox H_r$, $i\Omega_s\ox I_{d'}$ and their nested commutators. Therefore,
    \begin{equation*}
        i\Pi\ox H_k = \alpha_0 (iI_d \ox H_k) + \sum_{j=1}^{d^2-1} \alpha_j (i\hat{\Omega}_j \ox H_k) \in \frak{h}.
    \end{equation*}
    Due to the linearity of commutator, we have all nested commutators in $\liebracket{i\Pi\ox H_k, i\Omega_l \ox I_{d'}, \; \forall k,l}$ contained in $\frak{h}$
    and we have proven $\liedqnn\subseteq \frak{h}$.
    
    The other direction is nontrivial. By construction, the elements $i\O_s \ox I_{d'}$, so does $i\hat{\O}_s \ox I_{d'}$, automatically in  $\liedqnn$ as before. Then, we aim to show that the other two generators can also be represented in $\liedqnn$. Notice that, $[A\ox B, C\ox D] = [A,C] \ox BD + CA \ox [B,D]$, and,
    \begin{equation*}
        [i\Pi\ox H_k, i\hat{\Omega}_l\ox I_{d'}] = - [\Pi, \hat{\Omega}_l] \ox H_k \in \liedqnn.
    \end{equation*}
    Defining $\Delta_{l} = -[\Pi, \hat{\Omega}_l]$, we have $\Delta_{l}\ox H_k \in\liedqnn$. Since $\Pi = I_{r} \oplus 0_{d-r}$ is a projection, we can use the Cartan decomposition of $\frak{su}(d) = \frak{l} \oplus \frak{p}$, where $\frak{l}$ contains block-diagonal and $\frak{p}$ contains block-off-diagonal matrices. With this decomposition:
    % One can denote $\Delta_{l} = -[\Pi, \hat{\Omega}_l]$, and for every $l,k$, $\Delta_{l}\ox H_k \in\liedqnn$. Notice that $\Delta_l \in \frak{su}(d)$ as any commutators are skew-Hermitian and traceless. Since $\Pi\prec I_{d}$ is a projection, one can write $\Pi = I_{r} \oplus 0_{d-r}$ where $I_{r}$ is an identity on the $r$-dimensional subspace (with $1\leq r < d$). Relative to this decomposition, for every element $X \in \frak{su}(r+d)$ written in block form:
    % \begin{equation*}
    % X = \begin{pmatrix}
    %         X_{11} & X_{12}\\
    %         X_{21} & X_{22}
    % \end{pmatrix} = 
    % \begin{pmatrix}
    %         X_{11} & 0\\
    %         0 & X_{22}
    % \end{pmatrix} +
    % \begin{pmatrix}
    %         0 & X_{12}\\
    %         X_{21} & 0
    % \end{pmatrix},
    % \end{equation*}
    % where the first term lies in the block-diagonal (Lie) subalgebra $\frak{l}$ and the second term lies in the block-off-diagonal subalgebra $\frak{p}$ so that, $\frak{su}(r+d) = \frak{l} \oplus \frak{p}$. This is called a \textit{Cartan decomposition} of $\frak{su}(r+d)$, satisfying 
    % \begin{equation*}
    %     [\frak{l},\frak{l}]\subset \frak{l}; \; [\frak{l},\frak{p}]\subset \frak{p}; \; [\frak{p},\frak{p}]\subset \frak{l}.
    % \end{equation*}
    % The projection $\Pi$ commutes with block-diagonal matrices (i.e. those with $X_{12} = X_{21} = 0$) and does not commute with an off-diagonal matrix. In fact,
    \begin{equation*}
        [\Pi, X] = 
        \begin{pmatrix}
            0 & X_{12}\\
            -X_{21} & 0
        \end{pmatrix}.
    \end{equation*}
    Thus, $\Op{ad}_{\Pi}$ annihilates block-diagonal parts while acting injectively on off-diagonal parts. Since $\{i\hat{\Omega}_l\}_l$ spans $\frak{su}(d)$, the set $\{\Delta_l\}_l$ spans the subspace $\frak{p}$.
    % Thus, the adjoint representation $\Op{ad}_{\Pi}$ annihilates the block-diagonal part of $X$ and acts (injectively, up to constants) on the off-diagonal part. Because the collection $\{i\hat{\Omega}_l\}_l$ forms a basis of $\frak{su}(d)$, it spans both the block-diagonal and block-off-diagonal matrices. In particular, some of the $\hat{\Omega}_l$ will have nonzero off-diagonal blocks. For any such $\hat{\Omega}_l$, the commutator $\Delta_l$ produces an operator which is supported only in the block-off-diagonal blocks. By varying over the entire basis, it follows that the set $\{\Delta_l\}_l$ spans the entire subspace $\frak{p}$. 
    
    Now, to span the subalgebra of $\frak{l}$, we we analyze further commutators,
    \begin{equation*}
        [\Delta_{l}\ox H_k, i\hat{\Omega}_p \ox I_{d'}] = i[\Delta_{l}, \hat{\Omega}_p] \ox H_k.
    \end{equation*}
    % Denoting the set,
    % \begin{equation*}
    %     \cC:= \{[\Delta_l, i\hat{\Omega}_p], \;\forall l,p\},
    % \end{equation*}
    Letting $\cC = \{[\Delta_l, i\hat{\Omega}_p]\}_{l,p}$, we show that $\cC$ contains a basis for all block-diagonal skew-Hermitian traceless matrices. For block-off-diagonal matrices $A,B \in \CC^{r\times (d-r)}$,
    % where we can show that from above that the $\liedqnn$ contains all the generators $Y\ox H_k$ for all $k$ and some $Y\in\cC$. We now construct the basis for all block-diagonal skew-Hermitian, traceless matrices in the following sense: For any complex matrix $A,B\in \CC^{r\times d - r}$, the commutator,
    \begin{equation*}
        \left[
        \begin{pmatrix}
            0 & A\\
            -A^{\dagger} & 0
        \end{pmatrix},
        \begin{pmatrix}
            0 & B\\
            -B^{\dagger} & 0
        \end{pmatrix}
        \right] = 
        \begin{pmatrix}
            -AB^{\dagger} + BA^{\dagger} & 0\\
            0 & -A^{\dagger} B + B^{\dagger} A
        \end{pmatrix},
    \end{equation*}
    can be represented as a linear combination of $[\Delta_l, i\hat{\Omega}_p]$. To see this, one can represent the first matrix using $\Delta_l$'s and the second one using $i\hat{\Omega}_p$'s. The linearity of the Lie bracket then guarantees the statement.
    
    We can specifically choose $(A,B) = (E_{jk}, iE_{jk})$ where 
    $1\leq j\leq r, r< k \leq d$, where $E_{jk}$ is the matrix unit with a $1$ at the $(j,k)$-entry and $0$ elsewhere, so that,
    \begin{equation*}
        \begin{pmatrix}
            iE_{jk}E_{kj} + iE_{jk}E_{kj} & 0\\
            0 & -iE_{kj} E_{jk} - iE_{kj} E_{jk}
        \end{pmatrix} = 
        2\begin{pmatrix}
            iE_{jj} & 0\\
            0 & -iE_{kk}
        \end{pmatrix}. 
    \end{equation*}
    Similarly, let 
    $(A,B) = (E_{jk}, E_{lk})$ for $j\neq l$,
    \begin{equation*}
        \begin{pmatrix}
            -E_{jk}E_{kl} + E_{lk}E_{kj} & 0\\
            0 & -E_{kj}E_{lk} + E_{kl}E_{jk}
        \end{pmatrix} = 
        \begin{pmatrix}
            E_{lj} - E_{jl} & 0\\
            0 & 0
        \end{pmatrix}, 
    \end{equation*}
    and $(A,B) = (E_{jk}, iE_{lk})$ for $j\neq l$,
    \begin{equation*}
        \begin{pmatrix}
            iE_{jk}E_{kl} + iE_{lk}E_{kj} & 0\\
            0 & -iE_{kj}E_{lk} -iE_{kl}E_{jk}
        \end{pmatrix} = 
        \begin{pmatrix}
            i(E_{jl} + E_{lj}) & 0\\
            0 & 0
        \end{pmatrix}. 
    \end{equation*}
    Notice that by ranging over all $j,k,l$ values, the set of $\{iE_{jj}, E_{lj}-E_{jl}, i(E_{jl} + E_{lj})\}$, having $r(r-1) + r = r^2$ independent elements, forms an explicit basis for $r\times r$ skew-Hermitian matrices forming a real vector space of dimenion $r^2$. Similar strategy applies to the bottom diagonal block by swapping the indices of $j,k$ and $l,k$. As a result, for any skew-Hermitian $C\in\frak{su}(r), D\in \frak{su}(d - r)$ such that $\tr(C+D) = 0$, one can find them as a real linear combination of these basis elementary matrices where the traceless condition applies by noticing that $\tr(E_{jj} - E_{kk}) = 0$. We then have constructed the basis elements for $\frak{l}$ as linear combinations of elements in $\cC$.

    % \textbf{Up-to-now}, we have shown that any element in $\frak{su}(d)$ can be written as a real linear combination of  $\Delta_l$ (block-off-diagonal) and the $[\Delta_l, i\hat{\Omega}_p]$ (block-diagonal) components. Therefore, the generator $i\Omega_q\ox H_r \in \frak{h}$ for any fixed $q,r$ can be formed by the (real) linear combination of $\Delta_l\ox H_r$ and $[\Delta_j, i\hat{\Omega}_k]$ in terms of index $l,j,k$, which implies $i\Omega_q\ox H_r \in \liedqnn$. At last, for any $p$, $I_{d}\ox H_p$ has the following decomposition,
    Therefore, for any fixed $q,r$, the generator $i\Omega_q\ox H_r \in \frak{h}$ can be formed by linear combinations of elements in $\liedqnn$. Finally, for any $p$,
    \begin{equation*}
        iI_{d}\ox H_p = \frac{1}{\a_0}i\Pi\ox H_p - \sum_{j=1} \frac{\a_j}{\a_0}i\hat{\Omega}_j \ox H_p.
    \end{equation*}
    Based on the above argument, for any fixed $p$, it is clear to say the terms $i\Pi\ox H_p$ and $i\hat{\Omega}_j\ox H_p$ for any $p,j$ are in $\liedqnn$ so that the linear combination of them must also lie in $\liedqnn$.
    Therefore, we have $iI_d \ox H_p$ in $\liedqnn$. \textbf{Above all}, all generators of $\frak{h}$ can be represented as the (real) linear combination of the generators and the nested commutators in $\liedqnn$, we conclude $\frak{h}\subseteq \liedqnn$ and hence proves the equality.
\end{proof}

Notably, we can make a basis transformation by turning $\Omega_l$ to the $2^m \times 2^m$ generalized $m$-qubit Pauli matrices, denoted as $P_l$, forming an orthogonal basis of $i\frak{su}(2^m)$. Therefore, $\liedqnn = \spn_{\RR}\langle iI_{2^m} \ox H_p, iP_q\ox H_r, iP_s\ox I_{2^n}\;:\forall p,q,r,s \rangle_{\Op{Lie}}$. Clearly, one can use these new generators to identify more interesting structures of $\liedqnn$. One can observe that the structure of $\liedqnn$ depends on the commutators between any pairs of the generators in the above. For example, 
\begin{itemize}
    \item $[iI_{2^m} \ox H_p, iI_{2^m} \ox H_q] = (i)^2 I_{2^m}\ox [H_p, H_q] = -I_{2^m} \ox [H_p, H_q]$;
    \item $[iI_{2^m} \ox H_p, iP_q \ox H_r] = (i)^2 P_q\ox [H_p, H_r] = - P_q\ox [H_p, H_r]$;
    \item $[iP_s \ox I_{2^n}, iP_q \ox H_r] = (i)^2 [P_s, P_q]\ox H_r = - [P_s, P_q]\ox H_r$;
    \item $[iP_s \ox I_{2^n}, iP_q \ox I_{2^n}] = (i)^2 [P_s, P_q]\ox I_{2^n} = - [P_s, P_q]\ox I_{2^n}$;
    \item $[iP_p \ox H_q, iP_r \ox H_s] = - (P_pP_r \ox H_qH_s - P_rP_p \ox H_sH_q)$.
\end{itemize}
These five commutators can further produce much more complicated terms with nested Lie brackets. The last commutator makes it rather difficult to determine the structure of $\liedqnn$, in general. 

\subsection{QRENN with fixed data}
In quantum many-body physics, controlling a fixed quantum evolution is crucial for predicting fundamental properties of the quantum system. Suppose a fixed evolution carrying the information of the quantum system $V = e^{iH \varphi_t}$ through the embedding map, where $H$ is some fixed Hermitian operator acting on $i\frak{u}(\cH_{2^n})$. The following discussion showcases that $\liedqnn$ can be decomposed into blocks with limited dimension.

\begin{definition}
    Given any Hermitian operator $H$, the power set of $H$, denoted as $\cP_H$, is defined as, $\cP_H := \{M: M = H^k, k = 0, 1, \cdots, \infty\}$. 
\end{definition}

\begin{lemma}
    $\spn_{\FF}\liebracket{\cP_H}$ is an \textit{abelian} Lie algebra, and $\spn_{\FF}\liebracket{\cP_H} = \spn_{\FF}\cP_H$ for $\FF \in \{\CC, \RR\}$.
\end{lemma}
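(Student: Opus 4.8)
The plan is to reduce the entire statement to the single elementary fact that powers of a fixed operator commute with one another. Indeed, for any non-negative integers $j,k$ one has $H^{j}H^{k}=H^{j+k}=H^{k}H^{j}$, hence $[H^{j},H^{k}]=0$. This is the only computation the argument needs; everything else is bookkeeping about the meaning of the Lie closure $\liebracket{\cdot}$.

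First I would show that the ordinary linear span $\spn_{\FF}\cP_H$ is \emph{already} closed under the commutator and is abelian. Taking two arbitrary elements $A=\sum_{j}a_{j}H^{j}$ and $B=\sum_{k}b_{k}H^{k}$ of $\spn_{\FF}\cP_H$, the $\FF$-bilinearity of the bracket together with $[H^{j},H^{k}]=0$ gives $[A,B]=\sum_{j,k}a_{j}b_{k}[H^{j},H^{k}]=0$. Thus the bracket vanishes identically on $\spn_{\FF}\cP_H$, so $\spn_{\FF}\cP_H$ is a Lie subalgebra of $\mathfrak{gl}$ (matrices under the commutator) and it is abelian.

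Next I would argue that passing to the Lie closure introduces nothing new. By definition $\spn_{\FF}\liebracket{\cP_H}$ is the smallest $\FF$-subspace containing $\cP_H$ that is stable under nested commutators. The subspace $\spn_{\FF}\cP_H$ is one such candidate, since it contains every $H^{k}$ and is bracket-closed by the previous step; minimality then forces $\spn_{\FF}\liebracket{\cP_H}\subseteq\spn_{\FF}\cP_H$. The reverse inclusion is immediate, as each $H^{k}\in\cP_H$ lies in its own Lie closure. This yields the equality $\spn_{\FF}\liebracket{\cP_H}=\spn_{\FF}\cP_H$, and abelianness is inherited directly.

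I do not expect a genuine obstacle: the result is essentially a restatement that the polynomial algebra generated by a single operator is commutative. The only point demanding care is definitional, namely pinning down that $\liebracket{\cdot}$ means the Lie closure so that both directions of the inclusion are justified, and checking that the reasoning uses nothing beyond $\FF$-bilinearity of the bracket, so that it holds verbatim for $\FF\in\{\RR,\CC\}$. One may additionally note that $\spn_{\FF}\cP_H$ is finite-dimensional by the Cayley--Hamilton theorem (so the exponent $k$ effectively ranges over a finite set), though this fact is not required for the lemma.
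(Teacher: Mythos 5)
Your proposal is correct and follows essentially the same route as the paper: both arguments reduce to the observation that all brackets among powers of $H$ vanish, so the Lie closure cannot expand beyond the linear span of $\cP_H$. If anything, your justification of commutativity via $H^{j}H^{k}=H^{j+k}$ is slightly cleaner than the paper's appeal to simultaneous diagonalizability, since it needs neither Hermiticity nor the spectral theorem, but this is a cosmetic difference rather than a different proof.
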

\begin{proof}
    Since $H^k$ can be simultaneously diagonalizable, which automatically proves that $\spn_{\CC}\liebracket{\cP_H}$ is abelian, making it isomorphic to $\RR^r$ for some integer $r$. Moreover, $[H^m, H^n] = 0$ for any integers $m,n>1$, the Lie closure does not expand the space, which leads to $\liebracket{\cP_H} = \cP_H$.
\end{proof}

\begin{lemma}\label{lem:dim_PO}
    Given $H\in i\frak{u}(d)$ with $r$ distinct real eigenvalues, then $\dim[\spn_{\RR}\cP_H] = r$.
\end{lemma}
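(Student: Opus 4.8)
The plan is to identify $\spn_{\RR}\cP_H$ with the real algebra of polynomials evaluated at $H$, and to pin down its dimension through the minimal polynomial of $H$. Since $H\in i\frak{u}(d)$ is Hermitian, it is unitarily diagonalizable with real spectrum; write its distinct eigenvalues as $\lambda_1,\dots,\lambda_r$. Because a diagonalizable matrix has a minimal polynomial with only simple roots, here it is $m_H(x)=\prod_{i=1}^{r}(x-\lambda_i)$, a monic real polynomial of degree exactly $r$. This is the single fact driving both inequalities.

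First I would establish the upper bound $\dim[\spn_{\RR}\cP_H]\le r$. Since $m_H(H)=0$, the power $H^{r}$ is a real linear combination of $I,H,\dots,H^{r-1}$; iterating this relation shows every $H^{k}$ with $k\ge r$ lies in $\spn_{\RR}\{I,H,\dots,H^{r-1}\}$. Hence $\spn_{\RR}\cP_H=\spn_{\RR}\{H^{k}:0\le k\le r-1\}$, a span of $r$ elements, so its dimension is at most $r$.

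Next I would prove the reverse bound by showing $I,H,\dots,H^{r-1}$ are linearly independent over $\RR$. Suppose $\sum_{k=0}^{r-1}c_k H^{k}=0$ with $c_k\in\RR$, and set $p(x)=\sum_{k=0}^{r-1}c_k x^{k}$. Diagonalizing $H=U\operatorname{diag}(\lambda_j)U^{\dagger}$ gives $p(H)=U\operatorname{diag}(p(\lambda_j))U^{\dagger}$, so $p(H)=0$ forces $p(\lambda_i)=0$ at each of the $r$ distinct eigenvalues. A nonzero polynomial of degree at most $r-1$ cannot have $r$ distinct roots, so $p\equiv 0$ and every $c_k=0$. (Equivalently, one may phrase this as the invertibility of the $r\times r$ Vandermonde matrix built from the distinct $\lambda_i$.) Combining with the upper bound yields $\dim[\spn_{\RR}\cP_H]=r$.

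The only genuine subtlety, and the step I would flag, is the appeal to diagonalizability: it is essential that $H$ be Hermitian, not merely that it possess $r$ distinct eigenvalues, since for a non-diagonalizable matrix the minimal polynomial can have degree exceeding the number of distinct eigenvalues, which would inflate the dimension beyond $r$. Hermiticity guarantees that $m_H$ factors into simple linear terms and gives the clean Vandermonde structure, so no further care is required and the counting is exact.
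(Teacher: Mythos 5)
Your proof is correct and follows essentially the same route as the paper's: both rest on the minimal polynomial $m_H(x)=\prod_{j=1}^{r}(x-\lambda_j)$ having degree exactly $r$ (via Hermiticity/diagonalizability) to cap the span at $\{I,H,\dots,H^{r-1}\}$, and both obtain linear independence by evaluating a vanishing polynomial combination on eigenvectors and counting roots. If anything, your bookkeeping is tighter than the paper's, which slips on the index range (summing $c_j H^j$ up to $j=r$ and claiming ``$r+1>r$ distinct roots'') where the clean statement is yours: a nonzero polynomial of degree at most $r-1$ cannot vanish at $r$ distinct eigenvalues.
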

\begin{proof}
    The proof directly follows from the property of minimal polynomial $m_H(x) = 0$ where,
    \begin{equation}
        m_H(x) := \prod_{j=1}^r(x - \lambda_j),
    \end{equation}
    and the degree of $m_H(x)$ is $r$; $\lambda_j$ is a distinct eigenvalue of $H$. Then, any polynomial $P(x)$ of degree greater or equal to $r$ can be reduced modulo $m_H(x)$ to a polynomial of degree less than $r$. On the other hand, let us assume there exists $c_0, \cdots, c_r \in \RR$ such that $\sum_{j=0}^{r} c_j H^j = 0$. Then, by applying this operator to $H$'s eigenvectors to have,
    \begin{equation}
        \left(\sum_{j=0}^{r} c_j H^j\right) \ket{\lambda_s} = \left(\sum_{j=0}^{r} c_j \lambda_s^j\right) \ket{\lambda_s} = 0 \Rightarrow \left(\sum_{j=0}^{r} c_j \lambda_s^j\right) = 0,
    \end{equation}
    for any $\ket{\lambda_s} \neq 0$. We have proven the polynomial $P(x) = \sum_{j=0}^{r} c_j x^j$ has $r+1 > r$ distinct roots $\lambda_s$ which causes contradiction (Cayley-Hamilton theorem). Therefore, $\{I, H^1, \cdots, H^{r-1}\}$ is linearly independent, and $\dim[\spn_{\RR}\cP_H] = r$.
\end{proof}

\begin{definition}
    A Lie algebra $\frak{g}$ is called \textit{perfect} if every element in $\frak{g}$ can be expressed as a commutator of two elements within $\frak{g}$. 
\end{definition}

\begin{lemma}
    Given a real Lie algebra $\frak{g}$ and a fixed Hermitian operator $H$ with the number of distinct eigenvalues $r$, denote the  $\frak{v}_H(\frak{g}):=\spn_{\RR}\liebracket{O\ox H, O\ox I, \; \forall O\in \frak{g}}$, where $I$ is the identity operator lies in the same space of $H$. Then $\dim[\frak{v}_H(\frak{g})] \leq r\dim[g]$. Moreover, the equality holds if $\frak{g}$ is perfect.
\end{lemma}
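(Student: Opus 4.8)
The plan is to show that $\frak{v}_H(\frak{g})$ always sits inside the space $\cV:=\frak{g}\ox\spn_{\RR}\cP_H$, with equality holding when $\frak{g}$ is perfect. Since by Lemma~\ref{lem:dim_PO} the factor $\spn_{\RR}\cP_H$ has dimension $r$, the space $\cV$ has dimension $r\dim[\frak{g}]$, so both the upper bound and the equality follow from this single identification.

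First I would establish the upper bound by verifying that $\cV$ is itself a Lie algebra containing all the generators. The generators $O\ox H$ and $O\ox I$ manifestly lie in $\cV$. For closure, I would apply the identity $[A\ox B, C\ox D]=[A,C]\ox BD + CA\ox[B,D]$ to two typical elements $\sum_i A_i\ox H^{a_i}$ and $\sum_j B_j\ox H^{b_j}$ of $\cV$. Because all powers of $H$ commute, the term $CA\ox[B,D]$ drops out, leaving $\sum_{i,j}[A_i,B_j]\ox H^{a_i+b_j}$; here $[A_i,B_j]\in\frak{g}$ as $\frak{g}$ is a Lie algebra, and $H^{a_i+b_j}\in\spn_{\RR}\cP_H$. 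Hence $\cV$ is closed under the bracket, so $\frak{v}_H(\frak{g})\subseteq\cV$ and $\dim[\frak{v}_H(\frak{g})]\le r\dim[\frak{g}]$.

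For the equality under perfectness, I would prove by induction on $k$ that $\frak{g}\ox H^k\subseteq\frak{v}_H(\frak{g})$ for every $k\ge 0$. The base cases $k=0$ and $k=1$ hold by the definition of the generators. For the inductive step, assume $\frak{g}\ox H^k\subseteq\frak{v}_H(\frak{g})$ and take any $O\in\frak{g}$; perfectness lets me write $O=\sum_p c_p[A_p,B_p]$ with $A_p,B_p\in\frak{g}$ and $c_p\in\RR$. Applying the same identity gives $[A_p\ox H, B_p\ox H^k]=[A_p,B_p]\ox H^{k+1}$, where again the second term vanishes since $H$ commutes with its powers. As $A_p\ox H$ is a generator and $B_p\ox H^k\in\frak{v}_H(\frak{g})$ by the hypothesis, the commutator $[A_p,B_p]\ox H^{k+1}$ lies in $\frak{v}_H(\frak{g})$, and summing over $p$ gives $O\ox H^{k+1}\in\frak{v}_H(\frak{g})$. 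Because $\{I,H,\dots,H^{r-1}\}$ is a basis of $\spn_{\RR}\cP_H$ by Lemma~\ref{lem:dim_PO}, the inclusions for $k=0,\dots,r-1$ already yield $\cV\subseteq\frak{v}_H(\frak{g})$, hence $\frak{v}_H(\frak{g})=\cV$ and $\dim[\frak{v}_H(\frak{g})]=r\dim[\frak{g}]$.

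The main obstacle I anticipate is making the inductive step airtight: perfectness is exactly the ingredient needed to raise the power of $H$ by one while remaining inside the algebra. Without it, the bracket $[\,\cdot\ox H,\ \cdot\ox H^k\,]$ produces only $[\frak{g},\frak{g}]\ox H^{k+1}$, which may be a proper subspace of $\frak{g}\ox H^{k+1}$, so the counting bound need not be tight. I would also be careful to record that the linear independence of $\{H^k\}_{k=0}^{r-1}$ furnished by Lemma~\ref{lem:dim_PO} is precisely what makes the operators $\{O_i\ox H^k\}$ linearly independent, so that $\dim[\cV]$ equals $r\dim[\frak{g}]$ exactly rather than merely being bounded by it.
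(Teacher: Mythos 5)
Your proof is correct and follows essentially the same route as the paper: both bound $\frak{v}_H(\frak{g})$ above by the space $\frak{g}\ox\spn_{\RR}\cP_H$, whose dimension is $r\dim[\frak{g}]$ by Lemma~\ref{lem:dim_PO}, and then, under perfectness, recover the reverse inclusion by induction on the power of $H$ using the identity $[A\ox H, B\ox H^{k}]=[A,B]\ox H^{k+1}$. Your write-up is in fact slightly tighter in two minor respects — you explicitly verify that $\frak{g}\ox\spn_{\RR}\cP_H$ is closed under the bracket (which the upper bound implicitly requires, since $\frak{v}_H(\frak{g})$ is a Lie closure), and your inductive step only needs each $O$ to be a real linear combination of commutators rather than a single commutator, so it works under the standard notion of perfectness as well as the paper's stronger one.
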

\begin{proof}
    For any $O_1, O_2\in\frak{g}$, we have, 
    \begin{equation}
        [O_1\ox H, O_2 \ox H] = O_1O_2 \ox H^2 - O_2 O_1 \ox H^2 = [O_1, O_2] \ox H^2 = O' \ox H^2,
    \end{equation}
    where $O' = [O_1,O_2]\in \frak{g}$ by definition of Lie algebra. Now consider another vector space,
    \begin{equation}
        \frak{g}\ox \spn_{\RR}\cP_H := \spn_{\RR}\{O\ox I, O\ox H^1, \cdots, O\ox H^k, \cdots, \forall O\in \frak{g}\}.
    \end{equation}
    The tensor product space here can be well-defined as $\spn_{\RR}\cP_H$ is clearly abelian. By Lemma~\ref{lem:dim_PO}, we have $\dim[\frak{g}\ox \spn \cP_H] = r \dim[\frak{g}]$ where $r$ is the number of distinct eigenvalues of $H$. Then, from the definition of $\frak{v}_H(\frak{g})$, we have $\frak{v}_H(\frak{g})\subseteq \frak{g}\ox \spn\cP_H \Rightarrow \dim[\frak{v}_H(\frak{g})] \leq r\dim[g]$. Further, if $\frak{g}$ is perfect, taking any element $O'\in\frak{g}$, there exists $O_1, O_2\in \frak{g}$ such that $[O_1, O_2]\ox H^m = [O_1\ox H^{m-1}, O_2\ox H]$. For $m=1$, the definition of $\frak{v}_H(\frak{g})$ ensuring that it contains $O\ox H$ for all $O$; For $m>1$, we can prove the statement by induction for which $O_1\ox H^{m-1} \in \frak{v}_H(\frak{g})$. Therefore, we can conclude $\frak{v}_H(\frak{g})= \frak{g}\ox \spn_{\RR}\cP_H$.
\end{proof}
\begin{lemma}\label{lem:vh_su}
    For any fixed Hermitian matrix $H$, we have $\frak{v}_H(\frak{s}\frak{u}(d)) = \frak{s}\frak{u}(d)\ox \spn_{\RR}\cP_H$
\end{lemma}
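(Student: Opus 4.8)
The plan is to read this statement as an immediate specialization of the preceding lemma, which already proves $\frak{v}_H(\frak{g})=\frak{g}\ox\spn_{\RR}\cP_H$ for every \emph{perfect} real Lie algebra $\frak{g}$, and which records the containment $\frak{v}_H(\frak{g})\subseteq\frak{g}\ox\spn_{\RR}\cP_H$ for an arbitrary $\frak{g}$. Hence for $\frak{g}=\frak{su}(d)$ the inclusion ``$\subseteq$'' is free, and the entire content of the lemma reduces to verifying the single hypothesis needed for equality: that $\frak{su}(d)$ is perfect.

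First I would dispose of perfectness. The case $d=1$ is vacuous since $\frak{su}(1)=\{0\}$, so assume $d\geq 2$, where $\frak{su}(d)$ is a simple Lie algebra. Its commutator ideal $[\frak{su}(d),\frak{su}(d)]$ is an ideal that is nonzero (as $\frak{su}(d)$ is non-abelian), so simplicity forces $[\frak{su}(d),\frak{su}(d)]=\frak{su}(d)$. Consequently every $O\in\frak{su}(d)$ is a real linear combination of commutators $[O_1,O_2]$ with $O_1,O_2\in\frak{su}(d)$. Invoking the preceding lemma with $\frak{g}=\frak{su}(d)$ then yields $\frak{v}_H(\frak{su}(d))=\frak{su}(d)\ox\spn_{\RR}\cP_H$ directly.

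If a self-contained argument is preferred to citing the preceding lemma, the core is an induction on $k$ showing $O\ox H^{k}\in\frak{v}_H(\frak{su}(d))$ for all $O\in\frak{su}(d)$ and all $k\geq 0$. The cases $k=0,1$ hold because $O\ox I$ and $O\ox H$ are generators. For the step, write $O=\sum_i[O_1^{(i)},O_2^{(i)}]$ and use $[O_1\ox H^{k-1},O_2\ox H]=[O_1,O_2]\ox H^{k}$ to get $O\ox H^{k}=\sum_i[O_1^{(i)}\ox H^{k-1},O_2^{(i)}\ox H]$; here each $O_1^{(i)}\ox H^{k-1}$ lies in $\frak{v}_H(\frak{su}(d))$ by the inductive hypothesis and each $O_2^{(i)}\ox H$ is a generator, so closure under the Lie bracket places $O\ox H^{k}$ in $\frak{v}_H(\frak{su}(d))$. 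Together with the reverse inclusion $\frak{v}_H(\frak{su}(d))\subseteq\frak{su}(d)\ox\spn_{\RR}\cP_H$ recorded in the preceding lemma, this gives equality.

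The only real obstacle is a definitional one: the paper's notion of \emph{perfect} asks that each element be a \emph{single} commutator, whereas simplicity of $\frak{su}(d)$ delivers only that each element is a \emph{sum} of commutators. Rather than appealing to the stronger (true but considerably harder) theorem that every element of a compact simple Lie algebra is in fact one commutator, I would point out that the induction above needs only the sum form, since linearity of the Lie bracket distributes the construction $[\,\cdot\ox H^{k-1},\,\cdot\ox H\,]$ across the summation. Thus $[\frak{su}(d),\frak{su}(d)]=\frak{su}(d)$ already suffices, and no genuine difficulty remains.
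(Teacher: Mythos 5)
Your proof is correct and takes essentially the same route as the paper: both reduce the statement to the equality case of the preceding lemma by invoking that $\frak{su}(d)$ is simple, hence $[\frak{su}(d),\frak{su}(d)]=\frak{su}(d)$, so the elements $E_j\ox H^k$ span all of $\frak{su}(d)\ox \spn_{\RR}\cP_H$. Your closing observation—that the induction only requires the sum-of-commutators property rather than the paper's stronger single-commutator definition of ``perfect''—is a careful repair of a subtlety the paper's own argument glosses over, but it does not change the underlying approach.
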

\begin{proof}
    Since $\frak{s}\frak{u}(d)$ is a simple Lie algebra and perfect, i.e., $[\frak{s}\frak{u}(d), \frak{s}\frak{u}(d)] = \frak{s}\frak{u}(d)$, taking an orthonormal basis $\{E_j\}_{j=1}^{d^2 - 1}$ of $\frak{s}\frak{u}(d)$, one can show that,
    \begin{equation}
        \frak{v}_H(\frak{s}\frak{u}(d)) = \spn_{\RR}\{E_j\ox H^k, \forall 1\leq j \leq d^2 - 1,\; k=0,1\cdots \},
    \end{equation}
    where $\{E_j \ox H^k\}_{j,k}$ forms a basis for $\frak{s}\frak{u}(d)\ox \spn_{\RR}\cP_H$. Therefore, $\dim[\frak{v}_H(\frak{s}\frak{u}(d))] = \dim[\frak{s}\frak{u}(d)\ox \spn_{\RR}\cP_H]$, and we have $\frak{v}_H(\frak{s}\frak{u}(d)) = \frak{s}\frak{u}(d)\ox \spn_{\RR}\cP_H$.
\end{proof}

From the above discussion, we have demonstrated that the attachment of an extra Hermitian operator to a perfect real matrix Lie algebra via tensor product can non-trivially create interesting structure in the result Lie algebra $\frak{v}_H$. For instance, let $H$ be an involutory matrix, i.e., $H^2 = I$. The following lemma can be derived.
\begin{lemma}
    Given $H\in i\frak{u}(\cH_{d'})$ an involutory traceless Hermitian, then $\frak{v}_{H}(\frak{su}(d))$ is semisimple, and $\frak{v}_{H}(\frak{su}(d))\simeq \frak{su}(d)\oplus\frak{su}(d)$.
\end{lemma}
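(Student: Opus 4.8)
The plan is to build directly on Lemma~\ref{lem:vh_su}, which identifies $\frak{v}_H(\frak{su}(d)) = \frak{su}(d)\ox\spn_{\RR}\cP_H$, and then exploit the special structure forced by $H^2 = I$. Since $H$ is involutory, every power reduces as $H^{2k}=I$ and $H^{2k+1}=H$, so the power set collapses to $\cP_H=\{I,H\}$ and $\spn_{\RR}\cP_H$ is two-dimensional. The traceless hypothesis guarantees $H\neq\pm I$, so $I$ and $H$ are genuinely independent and $H$ has exactly $r=2$ distinct eigenvalues $\pm1$; by Lemma~\ref{lem:dim_PO} this gives $\dim\spn_{\RR}\cP_H=2$, consistent with $\dim\frak{v}_H(\frak{su}(d))=2(d^2-1)$. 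Hence the entire task reduces to identifying the Lie-algebraic structure of the explicit $2(d^2-1)$-dimensional space $\frak{su}(d)\ox\spn_{\RR}\{I,H\}$.

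The key move is to change the basis of $\spn_{\RR}\{I,H\}$ from $\{I,H\}$ to the spectral projections $\Pi_{\pm}:=\tfrac12(I\pm H)$ onto the $\pm1$-eigenspaces of $H$. These satisfy the orthogonal-idempotent relations $\Pi_+^2=\Pi_+$, $\Pi_-^2=\Pi_-$, and $\Pi_+\Pi_-=\Pi_-\Pi_+=0$, and the traceless condition ensures both are nonzero. Since $\spn_{\RR}\{I,H\}=\spn_{\RR}\{\Pi_+,\Pi_-\}$, I would rewrite
\begin{equation*}
    \frak{v}_H(\frak{su}(d)) = \left(\frak{su}(d)\ox\Pi_+\right)\oplus\left(\frak{su}(d)\ox\Pi_-\right),
\end{equation*}
a decomposition as vector spaces that the idempotent relations will promote to a decomposition of Lie algebras.

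Finally I would verify the three structural facts that make this an internal direct sum of ideals. First, each summand is a subalgebra isomorphic to $\frak{su}(d)$: for $O_1,O_2\in\frak{su}(d)$ one has $[O_1\ox\Pi_{\pm},O_2\ox\Pi_{\pm}]=[O_1,O_2]\ox\Pi_{\pm}^2=[O_1,O_2]\ox\Pi_{\pm}$, so the linear injection $O\mapsto O\ox\Pi_{\pm}$ (injective because $\Pi_{\pm}\neq0$) is a bracket-preserving isomorphism onto its image. Second, the two summands commute: $[O_1\ox\Pi_+,O_2\ox\Pi_-]=O_1O_2\ox\Pi_+\Pi_-\,-\,O_2O_1\ox\Pi_-\Pi_+=0$ by orthogonality, so each summand is an ideal. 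Third, the summands intersect trivially, since $\Pi_+$ and $\Pi_-$ are linearly independent. Together these give $\frak{v}_H(\frak{su}(d))\simeq\frak{su}(d)\oplus\frak{su}(d)$; being a direct sum of the simple Lie algebra $\frak{su}(d)$ (for $d\geq2$), it is semisimple. I do not anticipate a genuine obstacle here—the content lies entirely in observing that the involutory spectrum reduces $\cP_H$ to two orthogonal projectors, after which the commuting-ideal structure follows mechanically; the only point requiring care is confirming that tracelessness forces both projectors to be nonzero, so that neither copy of $\frak{su}(d)$ degenerates.
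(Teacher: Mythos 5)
Your proposal is correct and follows essentially the same route as the paper: both pass to the spectral split of $\spn_{\RR}\cP_H$ via $I\pm H$ (the paper uses the Hilbert--Schmidt-normalized basis $\{(I+H)/\sqrt{2d'},(I-H)/\sqrt{2d'}\}$, you use the projections $\Pi_{\pm}=(I\pm H)/2$, which differ only by scalars), then verify that the two resulting copies of $\frak{su}(d)$ are commuting ideals whose direct sum exhausts $\frak{v}_H(\frak{su}(d))$. Your explicit appeal to Lemma~\ref{lem:vh_su} for the vector-space identification and your remark that tracelessness forces both projectors to be nonzero are small clarifications of steps the paper leaves implicit, not a different argument.
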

\begin{proof}
    Start with $H\in i\frak{u}(\cH_{d'})$ being an involutory Hermitian unitary, s.t., $\tr(H) = 0, H^2 = I$. Then,
\begin{equation*}
    \frak{v}_H(\frak{su}(d)) = \spn_{\RR}\liebracket{E_k \ox H, \; \forall k},
\end{equation*}
where $\{E_j\}$ forms a orthonormal basis of $\frak{su}(d)$ and $I$ denote the $d'$-dimensional identity matrix. One can define a new orthonormal basis of $\cP_H$ as $\{\frac{I+H}{\sqrt{2d'}}, \frac{I-H}{\sqrt{2d'}}\}$ regarding the Hilbert-Schmidt inner product. By using this basis, we could compute the commutators as 
\begin{equation}
\begin{aligned}
    [E_i\ox (I+H), E_j \ox (I+H)] &= [E_i, E_j] \ox (I+H)\\
    [E_i\ox (I+H), E_j \ox (I-H)] &= 0\\
    [E_i\ox (I-H), E_j \ox (I-H)] &= [E_i, E_j] \ox (I-H).
\end{aligned}
\end{equation}
Since $\frak{su}(d)$ is perfect, $[E_i, E_j]$ is also an element of $\frak{su}(d)$. Further, based on the commutators above, defining,
\begin{equation}
\begin{aligned}
    \frak{g}_+ = \spn_{\RR}\liebracket{\frac{1}{\sqrt{2d'}}E_j\ox (I+H)\;: \forall j};\; \frak{g}_- = \spn_{\RR}\liebracket{\frac{1}{\sqrt{2d'}}E_j\ox (I-H) \;: \forall j}.
\end{aligned}
\end{equation}
It can be verified that both $\frak{g}_{\pm}$ form subalgebras of $\frak{v}_H(\frak{su}(d))$ which are closed under the Lie bracket, respectively. Clearly, $\frak{g}_{\pm}\simeq \frak{su}(d)$, being simple, by definition and their ability to generate the entire Lie algebra $\frak{v}_H(\frak{su}(d))$. Besides, due to the middle commutator of the above, the two subalgebras commute with each other. Therefore, $\frak{g}_{\pm}$ forming direct sum decomposition of $\frak{v}_H(\frak{su}(d))$ which then derives $\frak{v}_H(\frak{su}(d)) = \frak{g}_{+} \oplus \frak{g}_{-} \simeq \frak{su}(d)\oplus\frak{su}(d)$. 
\end{proof} 

Now comes for the general Hermitian. As for finite dimension, $H$ can be always decomposed into block sectors acting on the orthogonal subspaces of the entire Hilbert space. With this structure, one can then decompose $\frak{v}_H(\frak{su}(d))$ into direct sum of simple algebras.
\begin{lemma}\label{lem:decompose_dla_dqnn}
    Let $H$ be any finite dimensional Hermitian matrix with $r$ distinct  eigenvalues, and let $\{\Pi_j\}_{j=1}^r$ denote the set of projections onto each distinct eigenspace of $H$. Then, then $\frak{v}_H(\frak{su}(d))$ has the following form,
    \begin{equation*}
        \frak{v}_H(\frak{su}(d)) = \spn_{\RR}\liebracket{E_j \ox \Pi_k \;: \forall j,k},
    \end{equation*}
    where $\{E_j\}_j$ forms an orthonormal basis of $\frak{su}(d)$. Furthermore, $\frak{v}_H(\frak{su}(d))$ is semisimple. 
\end{lemma}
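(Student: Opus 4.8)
The plan is to leverage Lemma~\ref{lem:vh_su}, which already gives the tensor-factor description $\frak{v}_H(\frak{su}(d)) = \frak{su}(d)\ox \spn_{\RR}\cP_H$, and then to re-express the abelian factor $\spn_{\RR}\cP_H$ in terms of the spectral projections $\{\Pi_k\}_{k=1}^r$. First I would invoke the spectral theorem to write $H = \sum_{k=1}^r \lambda_k \Pi_k$ with distinct eigenvalues $\lambda_k$ and mutually orthogonal projections satisfying $\Pi_j\Pi_k = \delta_{jk}\Pi_k$ and $\sum_k \Pi_k = I$. Consequently $H^j = \sum_{k=1}^r \lambda_k^j \Pi_k$, so every power of $H$ lies in $\spn_{\RR}\{\Pi_k\}$, giving the inclusion $\spn_{\RR}\cP_H \subseteq \spn_{\RR}\{\Pi_k\}$. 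The reverse inclusion follows from a Vandermonde argument: the $r\times r$ matrix $(\lambda_k^j)_{0\le j\le r-1,\,1\le k\le r}$ is invertible because the $\lambda_k$ are distinct, so each $\Pi_k$ is a real linear combination of $I, H, \dots, H^{r-1}$. This matches the dimension count $\dim \spn_{\RR}\cP_H = r$ from Lemma~\ref{lem:dim_PO}, and hence $\spn_{\RR}\cP_H = \spn_{\RR}\{\Pi_k\}$. Substituting into the tensor description yields the claimed form $\frak{v}_H(\frak{su}(d)) = \spn_{\RR}\liebracket{E_j \ox \Pi_k : \forall j,k}$.

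For the semisimplicity claim, the key computation is the commutator of two basis generators. Using the orthogonality relation $\Pi_k\Pi_l = \delta_{kl}\Pi_k$, I would compute
\[
    [E_i \ox \Pi_k,\, E_j \ox \Pi_l] = E_iE_j \ox \Pi_k\Pi_l - E_jE_i \ox \Pi_l\Pi_k = \delta_{kl}\,[E_i,E_j]\ox \Pi_k .
\]
This shows that generators attached to distinct eigenspaces commute, while those attached to a single projection $\Pi_k$ close into a copy of $\frak{su}(d)$ (using that $\frak{su}(d)$ is itself closed under the bracket). I would then define $\frak{g}_k := \spn_{\RR}\{E_j \ox \Pi_k : \forall j\}$ for each $k$, verify that each $\frak{g}_k$ is a subalgebra isomorphic to the simple algebra $\frak{su}(d)$ (the linear map $E_j \mapsto E_j \ox \Pi_k$ being a Lie algebra isomorphism onto its image), and that $[\frak{g}_k, \frak{g}_l] = 0$ for $k\ne l$. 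Together these give the internal direct sum $\frak{v}_H(\frak{su}(d)) = \bigoplus_{k=1}^r \frak{g}_k \simeq \frak{su}(d)^{\oplus r}$, which is semisimple as a direct sum of simple algebras, establishing the stronger structural statement.

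The only mild obstacle is ensuring the sum is genuinely direct, i.e., that the $\frak{g}_k$ are linearly independent as subspaces. This is immediate from the orthogonality of the projections: since the $\Pi_k$ are linearly independent and $\{E_j\}$ is a basis of $\frak{su}(d)$, the products $\{E_j \ox \Pi_k\}_{j,k}$ are linearly independent, so $\sum_k \frak{g}_k$ is direct of dimension $r(d^2-1)$. Everything else is routine verification of closure and commutation, and no structural input beyond Lemmas~\ref{lem:vh_su} and~\ref{lem:dim_PO} together with the spectral theorem is required.
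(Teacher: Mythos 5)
Your proof is correct and follows essentially the same route as the paper's: both hinge on Lemma~\ref{lem:vh_su} together with the fact that the spectral projections are real polynomials in $H$ (your Vandermonde inversion is the same fact the paper invokes via Lagrange interpolation), and both conclude with the identical commutator computation $[E_i \ox \Pi_k, E_j \ox \Pi_l] = \delta_{kl}[E_i,E_j]\ox \Pi_k$ and the direct-sum decomposition into commuting copies of $\frak{su}(d)$, which gives semisimplicity. The only difference is organizational: you identify the abelian tensor factor $\spn_{\RR}\cP_H = \spn_{\RR}\{\Pi_k\}$ once and substitute into the tensor description, whereas the paper proves the two inclusions directly at the level of the full Lie algebra, using Lemma~\ref{lem:vh_su} only for the harder direction.
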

\begin{proof}
    Suppose $H$ is a Hermitian operator acting on a $d'$-dimensional Hilbert space, and let $\Pi_j$'s be the projections onto the $r$ distinct orthogonal eigenspaces of $H$, so that $\Pi_i\Pi_j = \delta_{ij} \Pi_j$, and $\sum_j \Pi_j = I_{d'}$. Notices that $\cP_H$ is simultaneously diagonalizable as the integer power of any Hermitian operator that commutes with itself. Recall the orthonormal basis $\{E_j\}_j$ for $\frak{su}(d)$, by definition, we can then express $\frak{v}_H(\frak{su}(d))$ as,
    \begin{equation*}
        \frak{v}_H(\frak{su}(d)) = \spn_{\RR}\liebracket{E_j\ox H^k\;: \forall j,k}.
    \end{equation*}
    We can also define another Lie algebra,
    $$\frak{g}_{\Pi}:=\spn_{\RR}\liebracket{E_j\ox \Pi_k,\;: \forall j,k},$$ and our goal is to  demonstrate that $\frak{v}_H(\frak{su}(d)) = \frak{g}_{\Pi}$. We start with the ($\subseteq$) direction. Notice that $H = \sum_{k} \lambda_k \Pi_k$ with $\lambda_k \neq 0$. For every $j$, we have,
    \begin{equation*}
        E_j \ox H = \sum_k \lambda_k (E_j \ox \Pi_k).
    \end{equation*}
    Because the RHS is a real linear combination of the elementary generators of $\frak{g}_{\Pi}$, it is clear that $E_j\ox H$ lies in $\frak{g}_{\Pi}$ for all $j$. Taking Lie brackets and real linear combinations (i.e. forming the Lie closure) then shows $\subseteq$ relation.

    The reverse inclusion ($\supseteq$) is a little more involved. The key point is that the $\Pi_k$'s are polynomials in $H$. More precisely, if $\{\lambda_k\}_k$ are the distinct eigenvalues, then the \textit{Lagrange interpolation theorem} tells that there exist real polynomials $p_k(x)$ satisfying, $p_l(\lambda_m) = \delta_{lm}$. One has the identity,
    \begin{equation}\label{eq:LIT}
        \Pi_k = p_k(H) = \prod_{j\neq k} \frac{H - \lambda_j I_{d'}}{\lambda_k - \lambda_j}.
    \end{equation}
    Now we can observe the following, as from Lemma~\ref{lem:vh_su}, any element of form $i\Omega \ox H^k$ for $k = 1,2,\cdots r-1$ lies in $\frak{v}_H(\frak{su}(d))$ due to the repeated commutators from $\liebracket{E_j\;:\forall j}$. Besides, from Eq.~\eqref{eq:LIT}, any spectral projection $\Pi_k$ can be written as,
    \begin{equation*}
        \Pi_k = p_k(H) = c_{k,0} I_{d'} +  \sum_{j=1}^{r-1} c_{k,j} H^j.
    \end{equation*}
    Therefore, for any $l$,
    \begin{equation*}
        E_l \ox \Pi_k = c_{k,0} (E_l \ox I_{d'}) + \sum_{j=1}^{r-1} c_{k,j} (E_l\ox H^j),
    \end{equation*}
    which lies in $\frak{v}_H(\frak{su}(d))$. \textbf{Above all}, both the inclusion relation then proves that $\frak{v}_H(\frak{su}(d)) = \frak{g}_{\Pi}$.
    
    We can further construct an orthonormal basis corresponding to the Hilbert-Schmidt inner product for $\frak{v}_H(\frak{su}(d))$ as $\{\frac{1}{\sqrt{\chi_k}} E_j \ox \Pi_k\}_{j,k}$ where $\chi_k$ denotes the dimension of each projective space regarding the index $k$. By construction, one can verify that,
    \begin{equation}
    \begin{aligned}
        [E_i \ox \Pi_k, E_j \ox \Pi_l] = \delta_{kl}[E_i,E_j]\ox \Pi_l,
    \end{aligned}
    \end{equation}
    which can not generate a new element for $k\neq l$, hence closed under the matrix Lie bracket. Therefore, we can define the subalgebras $\frak{g}_j:=\spn_{\RR}\liebracket{E_k \ox \Pi_j, \; \forall k}$ for each fixed $j$-th orthogonal sector of $H$ so that, $\frak{v}_H(\frak{su}(d)) = \bigoplus_j \frak{g}_j$, and we now complete the proof
\end{proof}

Recalling the DLA of our QRENN model. For fixed data, i.e., the generator of the data unitary is invariant, Lemma~\ref{lem:g_dqnn_hp} gives,
\begin{equation*}
    \liedqnn = \spn_{\RR}\langle \; iI_{2^m} \ox H, iP_q\ox H, iP_s\ox I_{2^n}\;:\forall q,s \; \rangle_{\Op{Lie}}.
\end{equation*}
In this case, the first term in the bracket commute with the other two terms, which then create the center of $\liedqnn$. Moreover, the other two terms can be demonstrated to generate the entire $\frak{v}_H(\frak{su}(2^m))$, by definition. As a result, we can derive the following proposition to characterize the structure of $\liedqnn$ with fixed data.
\begin{proposition}\label{prop:dla_fixed_H}
    Let $\{\Omega_j\}_j$ be the set of Hermitian generators of the data processing register so that $\liebracket{i\Omega_j}$ spans $\frak{su}(2^m)$. Suppose a fixed quantum evolution $V = e^{i\varphi_t H}$ is embedded via the map~\eqref{eq:encoding_map}. Then, the DLA of the QRENN model can be decomposed into,
    \begin{equation*}
        \liedqnn \simeq \frak{c} \oplus \frak{su}(2^m)^{\oplus r}
    \end{equation*}
    where $\frak{c}=\spn_{\RR}\liebracket{iI_{2^m}\ox H}$ is the center, and $r$ denotes the number of distinct eigenspaces of $H$.
\end{proposition}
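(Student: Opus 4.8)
The plan is to obtain the decomposition by assembling the two structural lemmas already proved for the building blocks $iI_{2^m}\ox H$, $iP_q\ox H$ and $iP_s\ox I_{2^n}$, rather than by any fresh computation. Since we are in the fixed-data regime ($H_t = H$ for all $t$), I would first invoke Lemma~\ref{lem:g_dqnn_hp} with the rank-one control projection $\Pi=\ketbra{c}{c}$ (so $\Op{rank}(\Pi)=1<2^m$ for $m\ge 1$) to rewrite the raw generating set in the convenient form
\begin{equation*}
    \liedqnn = \spn_{\RR}\langle\; iI_{2^m}\ox H,\; iP_q\ox H,\; iP_s\ox I_{2^n}\;:\forall q,s\;\rangle_{\Op{Lie}},
\end{equation*}
where $\{iP_s\}_s$ is the orthonormal $m$-qubit Pauli basis of $\frak{su}(2^m)$, each $P_s$ being traceless on the processing register.

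Next I would isolate the center. Using the identity $[A\ox B,C\ox D]=AC\ox BD-CA\ox DB$, a one-line computation gives $[iI_{2^m}\ox H,\,iP_q\ox H]=0$ and $[iI_{2^m}\ox H,\,iP_s\ox I_{2^n}]=0$, so $iI_{2^m}\ox H$ commutes with every generator. By the Jacobi identity, commuting with the generators forces it to commute with all nested brackets, hence with the whole algebra; as it also commutes with itself, its Lie closure is simply its real span. This identifies the abelian, \emph{one-dimensional} ideal $\frak{c}=\spn_{\RR}\liebracket{iI_{2^m}\ox H}$ as central. (It is precisely the single-$H$ setting that makes $\frak{c}$ one-dimensional, rather than the larger span of the $iI_{2^m}\ox\Pi_k$.)

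I would then recognize that the remaining two families of generators are exactly those defining $\frak{v}_H(\frak{su}(2^m))$, taking $O$ to range over the basis $\{P_s\}$ of $\frak{su}(2^m)$. Lemma~\ref{lem:decompose_dla_dqnn} then yields $\frak{v}_H(\frak{su}(2^m))=\bigoplus_{k=1}^{r}\frak{g}_k$ with $\frak{g}_k=\spn_{\RR}\liebracket{iP_s\ox\Pi_k:\forall s}\simeq\frak{su}(2^m)$, one simple block per distinct eigenspace projection $\Pi_k$ of $H$, so that $\frak{v}_H(\frak{su}(2^m))\simeq\frak{su}(2^m)^{\oplus r}$ is semisimple and hence has trivial center.

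Finally I would assemble $\liedqnn=\frak{c}\oplus\frak{v}_H(\frak{su}(2^m))$. The sum exhausts the algebra since every generator lies in one of the two pieces, and the intersection is trivial because each element of $\frak{v}_H(\frak{su}(2^m))$ is a real combination of the $iP_s\ox\Pi_k$, whose first factors $P_s$ are traceless and therefore Hilbert–Schmidt-orthogonal to the $I_{2^m}$ carried by $\frak{c}$; thus $\frak{c}\cap\frak{v}_H(\frak{su}(2^m))=\{0\}$. Combining centrality of $\frak{c}$ with the trivial center of the semisimple part shows $\frak{c}$ is \emph{exactly} the center of $\liedqnn$, giving $\liedqnn\simeq\frak{c}\oplus\frak{su}(2^m)^{\oplus r}$ with $r$ the number of distinct eigenspaces of $H$. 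The main obstacle is not any individual calculation but verifying that the splitting is a genuine Lie-algebra direct sum — establishing both centrality and the trivial intersection so that the one-dimensional $\frak{c}$ is precisely the center and the $r$ simple summands are correctly matched to the eigenspaces; the heavy structural work is already discharged by Lemmas~\ref{lem:g_dqnn_hp} and~\ref{lem:decompose_dla_dqnn}.
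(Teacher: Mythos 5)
Your proof is correct and takes essentially the same route as the paper's: rewrite the generating set via Lemma~\ref{lem:g_dqnn_hp}, identify the one-dimensional central ideal $\spn_{\RR}\liebracket{iI_{2^m}\ox H}$ by direct commutator computation, and apply Lemma~\ref{lem:decompose_dla_dqnn} to decompose the remaining generators, which span $\frak{v}_H(\frak{su}(2^m))$, into $\frak{su}(2^m)^{\oplus r}$ with one simple block per eigenspace projection $\Pi_j$. Your extra verifications (Jacobi-identity propagation of centrality and the trace-orthogonality argument showing $\frak{c}\cap\frak{v}_H(\frak{su}(2^m))=\{0\}$) merely make explicit what the paper leaves implicit.
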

\begin{proof}
    Given $H$ a $2^n \times 2^n$ Hermitian matrix with $r$ distinct orthogonal subspaces. One can write $H = \sum_{j=1}^r \lambda_j \Pi_j$ for $\lambda_j\in\RR$ where $\Pi_j\Pi_k = \delta_{jk} \Pi_k$ and $\sum_j \Pi_j = I_{2^n}$. Recalling the DLA of our QRENN model via Lemma~\ref{lem:g_dqnn_hp},
\begin{equation*}
    \liedqnn = \spn_{\RR}\langle \; iI_{2^m} \ox H, iP_q\ox H, iP_s\ox I_{2^n}\;:\forall q,s \; \rangle_{\Op{Lie}}.
\end{equation*}
Notice that 
$$[iI_{2^m} \ox H, iP_q \ox H^k] = - P_q \ox [H, H^k] = 0$$ 
$$[iI_{2^m} \ox H, iP_s \ox I_{2^n}] = - P_s \ox [H, I_{2^n}] = 0$$
showing that the subalgebra generated by $iI_{2^m}\ox H$ is abelian. Denote $\frak{c} = \spn_{\RR}\liebracket{i I_{2^m}\ox H}$ forming a one-dimensional center of $\liedqnn = \frak{c} \oplus [\liedqnn, \liedqnn]$. The commutator ideal is derived as,
\begin{equation*}
    [\liedqnn, \liedqnn] = \spn_{\RR}\liebracket{iP_q\ox H, iP_s\ox I_{2^n}\;:\forall q,s}.
\end{equation*}
Now since $\liebracket{iP_q, \;:\forall q}$ spans the whole $\frak{su}(2^m)$ by assumption. We can follow a similar logic as before and define 
\begin{equation}\label{eq:basis_fixed_H_dla}
    \cB:=\left\{\frac{i}{\sqrt{\chi_s}}P_q\ox \Pi_s\right\}_{q,s},
\end{equation}
as an orthonormal basis for the ideal where $\chi_s = \Op{rank}(\Pi_s)$. We can then apply Lemma~\ref{lem:decompose_dla_dqnn} and define $\frak{g}_j:=\spn_{\RR}\liebracket{\frac{i}{\sqrt{\chi_j}}P_q \ox \Pi_j \;: \forall q}$ for each fixed $j$-th orthogonal sector of $H$ so that these subalgebras are isomorphic to $\frak{su}(2^m)$, and commutative with each other. Therefore, 
\begin{equation*}
    [\liedqnn, \liedqnn] = \frak{v}_H(\frak{su}(2^m) = \bigoplus_{j=1}^{r} \frak{su}(2^m) = \frak{su}(2^m)^{\oplus r},
\end{equation*}
as required.
\end{proof}

\subsection{QRENN with simultaneous diagonalizable data framework}
A natural extension of the above theory is to consider a set of Hermitian generators from the data embedding register. Suppose now we are embedding different data unitary $e^{i\varphi_t H_t(\bm{x})}$ through the controlling embedding, where $\cS =\{H_t\}_t$ forms a set of data generators acting on $i\frak{u}(\cH_{2^n})$, and $[H_\alpha, H_\beta] = 0$. Recalling the DLA as
\begin{equation*}
    \liedqnn=\spn_{\RR}\langle \; iI_{2^m} \ox H_p, iP_q\ox H_r, iP_s\ox I_{2^n}\;:\forall p,q,r,s \; \rangle_{\Op{Lie}}.
\end{equation*}
Notice that
\begin{equation*}
\begin{aligned}
    [iP_s \ox I_{2^n}, iP_q \ox H_r] &= -[P_s, P_q] \ox H_r\\
    [iP_s \ox I_{2^n}, iP_q \ox I_{2^n}] &= -[P_s, P_q] \ox I_{2^n}
\end{aligned}
\end{equation*}
can not generate new elements through the Lie bracket since $[P_s, P_q]$ is also an element in $\frak{su}(2^m)$. For example, by taking $\{P_q\}_q$ as a set of $m$-qubit Cartan-Weyl basis of $\frak{su}(2^m)$, one can always find $P_r$ such that $[P_s, P_q] = iP_r$. Therefore, the only terms that can generate new elements via the Lie bracket are,
\begin{equation*}
    [iP_p \ox H_q, iP_r\ox H_s] = -[P_p, P_r] \ox H_sH_q.
\end{equation*}
This is similar to the fixed data scenario. The DLA, in this case, will be spanned by generators given in the above expression and the extra terms of $iP_q \ox H_{s_1}H_{s_2}\cdots$. Notice that the terms with cumulative matrix product of $H_p$ will end up as they can finally become linearly dependent for finite dimension. In fact, since all $H_p$'s commute. They can be simultaneous diagonalizable. The product, therefore, has its image spaces joined together.

Consequently, the underlying Hilbert space (typically $\cH = \CC^{d}$ for finite-dimensional matrices) decomposes into an orthogonal direct sum of joint eigenspaces associated with tuples of eigenvalues from the operators. Formally, each $H_t$ has its own spectrum and the joint eigenspace for a tuple of eigenvalues $\bm{\lambda} = (\lambda_t)$, where $\lambda_t \in \RR$ for Hermitian operators. We define,
\begin{equation}\label{eq:joint_eigenspaces}
    V_{\bm{\lambda}} := \bigcap_t \Op{Ker}[H_t - \lambda_t I_{2^n}].
\end{equation}
Then the space $\cH$ that $H_t$'s acting on can be decomposed as,
\begin{equation*}
    \cH_{\cS} \simeq \bigoplus_{\bm{\lambda}} V_{\bm{\lambda}},
\end{equation*}
where the direct sum runs over all distinct eigenvalue tuples. Notice that the space is not merely a collection of individual eigenspaces of each $H_t$. The original degeneracy appears in each $H_t$ broken and replaced by the joint eigenspaces uniquely labelled by $\bm{\lambda}$. In quantum mechanics, this corresponds to \textit{compatible observables} sharing a common eigenbasis. Degenerate blocks imply that additional observables are needed to resolve the degeneracy.
\begin{proposition}\label{prop:dla_commute_H}
    Let $\{\Omega_j\}_j$ be the set of Hermitian generators of the data processing register so that $\liebracket{i\Omega_j}$ spans $\frak{su}(2^m)$. Suppose a set of commutative Hermitian data matrices $\{H_t\}_t$ is embedded into the QRENN model via the controlling map~\eqref{eq:encoding_map}. Then the DLA of the QRENN model can be decomposed into,
    \begin{equation*}
        \liedqnn \simeq \frak{c} \oplus \frak{su}(2^m)^{\oplus r}
    \end{equation*}
    where $\frak{c}:=\spn_{\RR}\liebracket{iI_{2^m}\ox\Pi_{\bm{\lambda}}\;:\forall \bm{\lambda}}$ is the center, $r$ is the number of distinct joint eigenspaces from all of $H_t$, $\Pi_{\bm{\lambda}}$ is the projection onto the corresponding space such that $\sum_{\bm{\lambda}} \Pi_{\bm{\lambda}} = I_{2^n}$.
\end{proposition}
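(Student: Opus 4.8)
The plan is to mirror the single-operator argument of Proposition~\ref{prop:dla_fixed_H}, replacing the eigenspaces of one $H$ by the joint eigenspaces $V_{\bm{\lambda}}$ of the commuting family $\{H_t\}_t$. First I would invoke Lemma~\ref{lem:g_dqnn_hp} to put the algebra in the normal form $\liedqnn = \spn_{\RR}\langle iI_{2^m}\ox H_p,\, iP_q\ox H_r,\, iP_s\ox I_{2^n}:\forall p,q,r,s\rangle_{\Op{Lie}}$, where $\{iP_j\}_j$ is an orthonormal Pauli basis of $\frak{su}(2^m)$. The three relevant brackets $[iI_{2^m}\ox H_p, iP_q\ox H_r] = -P_q\ox[H_p,H_r]$, $[iI_{2^m}\ox H_p, iP_s\ox I_{2^n}] = -P_s\ox[H_p,I_{2^n}]$, and $[iI_{2^m}\ox H_p, iI_{2^m}\ox H_q] = -I_{2^m}\ox[H_p,H_q]$ all vanish, since the $H_t$ commute with one another and with $I_{2^n}$. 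Hence the $iI_{2^m}\ox H_p$ span an abelian ideal commuting with every generator, yielding the splitting $\liedqnn = \frak{c}\oplus[\liedqnn,\liedqnn]$ with $\frak{c}$ central.

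For the center, I would use that the commuting Hermitians $\{H_t\}_t$ are simultaneously diagonalizable, so the embedding register decomposes as $\cH_{\cS}\simeq\bigoplus_{\bm{\lambda}}V_{\bm{\lambda}}$ with orthogonal joint-eigenspace projections $\Pi_{\bm{\lambda}}$ obeying $\Pi_{\bm{\lambda}}\Pi_{\bm{\lambda}'}=\delta_{\bm{\lambda}\bm{\lambda}'}\Pi_{\bm{\lambda}}$ and $\sum_{\bm{\lambda}}\Pi_{\bm{\lambda}}=I_{2^n}$. A multivariate Lagrange interpolation — the several-operator analogue of Eq.~\eqref{eq:LIT} used in Lemma~\ref{lem:decompose_dla_dqnn} — expresses each $\Pi_{\bm{\lambda}}$ as a real polynomial in the $H_t$, so that the projections and the $H_t$ generate the same commutative $\ast$-algebra; this is what lets me rewrite the central part in the projection basis $\frak{c}=\spn_{\RR}\langle iI_{2^m}\ox\Pi_{\bm{\lambda}}:\forall\bm{\lambda}\rangle_{\Op{Lie}}$.

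The core of the argument is the commutator ideal $[\liedqnn,\liedqnn]=\spn_{\RR}\langle iP_q\ox H_r,\, iP_s\ox I_{2^n}\rangle_{\Op{Lie}}$. The only bracket that enlarges the embedding factor is $[iP_p\ox H_q, iP_r\ox H_s] = -[P_p,P_r]\ox H_qH_s$, which introduces products of the $H_t$; iterating produces terms $i\Omega\ox M$ with $\Omega\in\frak{su}(2^m)$ and $M$ a monomial in the $H_t$. Since the $H_t$ commute, every such monomial lies in $\spn_{\RR}\{\Pi_{\bm{\lambda}}\}$ by the same interpolation, and conversely each $i\Omega\ox\Pi_{\bm{\lambda}}$ is reached, so $[\liedqnn,\liedqnn]=\spn_{\RR}\langle i\Omega\ox\Pi_{\bm{\lambda}}:\Omega\in\frak{su}(2^m),\forall\bm{\lambda}\rangle$. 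I would then apply the commuting-family version of Lemma~\ref{lem:decompose_dla_dqnn}: because $[i\Omega\ox\Pi_{\bm{\lambda}}, i\Omega'\ox\Pi_{\bm{\lambda}'}]=-\delta_{\bm{\lambda}\bm{\lambda}'}[\Omega,\Omega']\ox\Pi_{\bm{\lambda}}$, distinct labels commute and each block $\frak{g}_{\bm{\lambda}}:=\spn_{\RR}\{i\Omega\ox\Pi_{\bm{\lambda}}\}$ is a faithful copy of $\frak{su}(2^m)$, since the assignment $X\mapsto X\ox\Pi_{\bm{\lambda}}$ is a nonzero — hence, by simplicity of $\frak{su}(2^m)$, injective — Lie homomorphism (using $\Pi_{\bm{\lambda}}^2=\Pi_{\bm{\lambda}}$). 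This gives $[\liedqnn,\liedqnn]=\bigoplus_{\bm{\lambda}}\frak{su}(2^m)=\frak{su}(2^m)^{\oplus r}$ with $r$ the number of distinct joint eigenspaces, and combining with the center yields $\liedqnn\simeq\frak{c}\oplus\frak{su}(2^m)^{\oplus r}$.

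The main obstacle I anticipate is the interpolation-and-closure bookkeeping inside the commutator ideal: I must show that repeated brackets generate all the monomials in the $H_t$ needed to recover every joint-eigenspace projection — not merely the single-operator power sequence $H,H^2,\dots$ handled in Lemma~\ref{lem:decompose_dla_dqnn} — while confirming that these products never leave $\spn_{\RR}\{\Pi_{\bm{\lambda}}\}$ and never produce an $I_{2^m}$ factor, so that no extra central directions appear. Setting up the multivariate interpolation cleanly, and verifying that $r$ genuinely counts the distinct joint eigenspaces rather than the eigenspaces of any single $H_t$, is precisely where the commuting-family case departs from the single-operator Proposition~\ref{prop:dla_fixed_H}.
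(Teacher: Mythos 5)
Your proposal is correct and follows essentially the same route as the paper's own proof: reduction to the normal form of Lemma~\ref{lem:g_dqnn_hp}, simultaneous diagonalization of the commuting $\{H_t\}_t$ into joint eigenspaces, multivariate Lagrange interpolation to pass between monomials in the $H_t$ and the projections $\Pi_{\bm{\lambda}}$, identification of the center $\spn_{\RR}\liebracket{iI_{2^m}\ox\Pi_{\bm{\lambda}}}$, and block decomposition of the commutator ideal into mutually commuting copies of $\frak{su}(2^m)$ indexed by $\bm{\lambda}$. The bookkeeping you flag as the main obstacle (generating all monomials via nested brackets and perfectness of $\frak{su}(2^m)$, and checking no extra central directions arise) is handled in the paper in exactly the way you outline, via its Lemma~\ref{lem:decompose_dla_dqnn}-style induction.
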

\begin{proof}
    For a set of commutative Hermitian matrices $\{H_t\}_t$, the spectral theorem guarantees that they can be simultaneously diagonalized, which is a special case of block diagonalization where all blocks are $1 \times 1$ (i.e., scalar entries). 
    
    For each tuple of eigenvalues $\bm{\lambda} = (\lambda_t)$ (one eigenvalue $\lambda_t$ from each $H_t$), choose an orthonormal basis for each $V_{\bm{\lambda}}$ as in Eq.~\ref{eq:joint_eigenspaces} and use them to construct the orthogonal projections $\{\Pi_{\bm{\lambda}}\}$. Every $H_t$ will be block diagonal, with each block corresponding to a joint eigenspace $V_{\bm{\lambda}}$ and acts as $\lambda_t I_{\bm{\lambda}}$ where $I_{\bm{\lambda}}$ is the identity operator acting on $V_{\bm{\lambda}}$. Therefore, one can follow the similar proof logic in Proposition~\ref{prop:dla_fixed_H} for the fixed data scenario, since $H_t$'s are pairwise commuting, using the language of joint eigenspace, defining the Lie algebra,
    \begin{equation*}
        \frak{g}_{\Pi}:=\spn_{\RR}\liebracket{iP_l \ox \Pi_{\bm{\lambda}} \;:\forall l, \bm{\lambda}},
    \end{equation*}
    one has,
    \begin{equation*}
        H_t = \sum_{\bm{\lambda}} \lambda_t \Pi_{\bm{\lambda}}.
    \end{equation*}
    Then for each $t$,
    \begin{equation*}
        iP_l \ox H_t = iP_l \ox \sum_{\bm{\lambda}} \lambda_t \Pi_{\bm{\lambda}} = \sum_{\bm{\lambda}} \lambda_t (iP_l \ox \Pi_{\bm{\lambda}}),
    \end{equation*}
    which proves that all generators with expression $iP_l\ox H_t$ lie in $\frak{g}_{\Pi}$, and hence $\liedqnn\subseteq \frak{g}_{\Pi}$. Now for the reversing inclusion, for several commuting $H_t$, one uses a similar multivariate Lagrange interpolation argument, i.e., there exist real polynomials $p_{\bm{\lambda}}(H_1, \cdots, H_T)$ s.t., 
    \begin{equation*}
        p_{\bm{\lambda}}(H_1, \cdots, H_T) = \Pi_{\bm{\lambda}}.
    \end{equation*}
    As from the minimal polynomial for a fixed data Hermitian, we can then express $\Pi_{\bm{\lambda}}$ as,
    \begin{equation*}
        \Pi_{\bm{\lambda}} = \sum_{k_1, \cdots, k_T} c^{(\bm{\lambda})}_{k_1, \cdots, k_T} H_1^{k_1} H_2^{k_2} \cdots H_T^{k_T}.
    \end{equation*}
    We then immediately have,
    \begin{equation*}
        iP_l\ox \Pi_{\bm{\lambda}} = \sum_{k_1, \cdots, k_T} c^{(\bm{\lambda})}_{k_1, \cdots, k_T} (iP_l \ox H_1^{k_1} H_2^{k_2} \cdots H_T^{k_T}),
    \end{equation*}
    where the constant terms (with $k_1=k_2=\cdots=k_T = 0$) indicates the zero-degree terms $iP_l \ox I_{2^n}$. Notice that every monomial $H_1^{k_1}H_2^{k_2}\cdots H_T^{k_t}$ is an element of the generator set of $\liedqnn$ via Lie bracket expansion. As a result, for any $l,\bm{\lambda}$, the term $iP_l\ox \Pi_{\bm{\lambda}} \in \liedqnn$, which then implies $\liedqnn\supseteq\frak{g}_{\Pi}$, and therefore, $\liedqnn=\frak{g}_{\Pi}$.
    
    Now since $\{\Pi_{\bm{\lambda}}\}$ indicate orthogonal projections, i.e., $\Pi_{\bm{\lambda}}\Pi_{\bm{\lambda}'}=\delta_{\bm{\lambda}\bm{\lambda}'}\Pi_{\bm{\lambda}}$. One can then analyses and decomposes $\liedqnn$ in the following. Firstly,
    \begin{equation*}
        \liedqnn = \frak{c}\oplus\spn_{\RR}\liebracket{iP_q\ox \Pi_{\bm{\lambda}}, iP_s\ox I_{2^n} \;: \forall q, s, \bm{\lambda}},
    \end{equation*}
    where $\frak{c}:=\spn_{\RR}\liebracket{iI_{2^m}\ox\Pi_{\bm{\lambda}}\;:\forall \bm{\lambda}}$ forms a center as the generators within $\frak{c}$ commute with all element in $\liedqnn$. Then, we construct the orthonormal basis $\cB = \{\frac{i}{\sqrt{\chi_{\bm{\lambda}}}} P_q\ox \Pi_{\bm{\lambda}}\}_{q, \bm{\lambda}}$. We define $\frak{g}_{\bm{\lambda}}:=\spn_{\RR}\liebracket{\cB}$ so that,
    \begin{equation*}
        \liedqnn = \frak{c}\oplus \bigoplus_{\bm{\lambda}} \frak{g}_{\bm{\lambda}} \simeq  \frak{c}\oplus \bigoplus_{\bm{\lambda}} (\frak{su}(2^m) \ox \RR\cdot \Pi_{\bm{\lambda
        }}) \simeq \frak{c} \oplus \frak{su}(2^m)^{\oplus r},
    \end{equation*}
    and we complete the proof.
\end{proof}

\section{Absence of Barren Plateaus in the QRENN model}\label{appendix:trainability_of_dqnn_On_In}

One of the most essential parts of devising a QML model is proving its trainability. Previous research has shown that sufficient randomness of the template parametrised quantum circuit (QNN) can induce barren plateaus regarding a random initialization strategy for the tunable variables in the model. In this appendix, we mainly discuss the trainability of our QRENN model and theoretically demonstrate that with certain assumptions, the control operation can restrict the expansion of the DLA and, hence, prevent the gradient from being trapped in the plateaus.

In the context of VQAs, one should start with the analysis of the gradient of loss function defined in Sec. ~\ref{subsec:qrenn} with respect to the parameter space. In the abstract setting, one can always assume the sufficient deepness of the QNN, such that the dynamical group represented via the circuit can be fully mixed up to the second moment~\cite{Fontana2024characterizing,Ragone2024lie,Allcock2024dynamical}. The original QNN gradient is then replaced by the \textit{abstracted gradient}.
\begin{definition}[Abstracted Gradient~\cite{Fontana2024characterizing}]
    Let $G$ be a compact, connected Lie group represented as unitary matrices in the unitary group $\cU(V)$ over the vector space $V$. In addition, let $H\in \frak{g}$, and $iO, iA\in \frak{u}(V)$, the Lie algebra of $\cU(V)$. We define the abstracted gradient as:
    \begin{equation*}
        \partial_H \ell(A,O) := \tr\left(U^{\dagger}_{g^-} A U_{g^-} [H, U_{g^+} O U^{\dagger}_{g^+}]\right),
    \end{equation*}
    where $U_{g^{\pm}}$ represents arbitrary $g^{\pm} \in G$. 
\end{definition}

In our case, the $\liedqnn$ forms a subalgebra of $\frak{u}(2^{m+n})$, which is known as a compact Lie algebra of the unitary group $\cU(2^{m+n})$. The dynamical group $e^{\liedqnn}$ is concretely a (connected) Lie subgroup of $\cU(2^{m+n})$. From the theory of Lie groups. Every connected Lie subgroup of $\cU$ is a closed subgroup. With the Hausdorff topology, $e^{\liedqnn}$ is, therefore,  automatically compact (with the subspace topology), and so does $\liedqnn$.
\begin{lemma}[Vanishing mean~\cite{Fontana2024characterizing}]\label{lem:vanishing_mean}
    Let $G$ be a compact Lie group, and $\phi:G\rightarrow \cU(V)$ a representation. Then for any $O, A \in \frak{gl}(V)$:
    \begin{equation}
        \EE_{g^+,g^-\sim\mu^{\ox 2}}[\partial_H\ell(A,O)] = 0
    \end{equation}
\end{lemma}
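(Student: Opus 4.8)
The plan is to exploit the independence of $g^+$ and $g^-$ together with the invariance of the Haar measure $\mu$, reducing the claim to the vanishing of a single twirled commutator. First I would expand the abstracted gradient by linearity of the trace and of the commutator, writing $\partial_H\ell(A,O) = \tr\left(U_{g^-}^{\dg} A U_{g^-}\,[H,\,U_{g^+} O U_{g^+}^{\dg}]\right)$ and noting that the first factor $X := U_{g^-}^{\dg} A U_{g^-}$ depends only on $g^-$, while the bracketed factor $Y := [H,\,U_{g^+} O U_{g^+}^{\dg}]$ depends only on $g^+$. By Fubini's theorem (everything is finite-dimensional, so all integrals converge and the expectation commutes with the trace), I can carry out the $g^+$-average first while holding $g^-$ fixed, obtaining $\EE_{g^+,g^-}[\partial_H\ell] = \EE_{g^-}\left[\tr\!\left(X\cdot \EE_{g^+}[Y]\right)\right]$.

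The crux is then to evaluate the inner average $\EE_{g^+}[Y]$. Pulling the expectation inside the (linear) commutator gives $\EE_{g^+}[Y] = [H,\,\bar O]$, where $\bar O := \EE_{g^+\sim\mu}[U_{g^+} O U_{g^+}^{\dg}] = \int_G \phi(g)\,O\,\phi(g)^{\dg}\,d\mu(g)$ is the Haar twirl of $O$. The key structural fact is that $\bar O$ is invariant under conjugation by the entire representation: for any $h\in G$, using the homomorphism property $\phi(h)\phi(g) = \phi(hg)$ and the left-invariance of the Haar measure, one obtains $\phi(h)\,\bar O\,\phi(h)^{\dg} = \int_G \phi(hg)\,O\,\phi(hg)^{\dg}\,d\mu(g) = \bar O$. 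Hence $\bar O$ commutes with $\phi(h)$ for every $h\in G$.

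Finally I would transfer this group-level commutativity to the Lie algebra level. Since $H$ is the representation $d\phi(\xi)$ of an element $\xi$ in the Lie algebra of $G$, specializing $h = \exp(t\xi)$ gives $e^{tH}\bar O = \bar O\, e^{tH}$ for all $t$, and differentiating at $t=0$ yields $[H,\bar O]=0$. Therefore $\EE_{g^+}[Y]=0$, and substituting back gives $\EE_{g^+,g^-}[\partial_H\ell] = \EE_{g^-}[\tr(X\cdot 0)] = 0$, as claimed.

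I expect no serious obstacle here: the argument is a standard twirling / Haar-averaging computation, and notably it does not even require computing $\EE_{g^-}[X]$. The only points requiring care are bookkeeping ones — justifying the interchange of expectation with the trace and commutator (immediate in finite dimension), and cleanly identifying the abstract generator $H\in\frak{g}$ with its representation $d\phi(\xi)$ so that the passage from $\phi(\exp(t\xi))$ to $e^{tH}$ is legitimate.
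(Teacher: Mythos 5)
Your proof is correct, but note that the paper contains no proof of Lemma~\ref{lem:vanishing_mean} to compare against: the lemma is imported from~\cite{Fontana2024characterizing} (hence the citation in its header) and is used downstream as a black box — for instance, the paper's proof of Proposition~\ref{prop:exp_gradient_appendix} simply appeals to ``Lemma C.1 from~\cite{Fontana2024characterizing}''. Your argument therefore stands on its own, and it is sound: factoring the integrand into $X = U_{g^-}^{\dg} A U_{g^-}$ and $Y = [H, U_{g^+} O U_{g^+}^{\dg}]$, using independence of $g^{\pm}$ and linearity to reduce the claim to $\EE_{g^+}[Y] = [H,\bar{O}]$ with $\bar{O}$ the Haar twirl of $O$, showing that $\bar{O}$ lies in the commutant of $\phi(G)$ via left-invariance of $\mu$, and then passing from group-level to algebra-level commutativity through $\phi(\exp(t\xi)) = e^{tH}$ and differentiation at $t=0$, is exactly the standard twirling argument, and is essentially how the cited reference establishes the result. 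You also correctly isolated the one hypothesis that matters: $H$ must be (the representation of) an element of $\frak{g}$, not an arbitrary element of $\frak{gl}(V)$ — otherwise the claim fails already for trivial $G$, where $\partial_H\ell(A,O) = \tr(A[H,O])$ need not vanish. The bookkeeping points you flag (Fubini, exchanging expectation with trace and commutator) are indeed immediate in finite dimension over a compact group, and your observation that $\EE_{g^-}[X]$ is never needed is accurate: a single Haar average suffices.
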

Since $e^{\liedqnn}$ is compact, with the natural representation of $e^{\liedqnn}$ into the unitary matrices. Suppose the abstract setting is matched for the QRENN model. The above Lemma implies that,
\begin{equation*}
    \EE_{\bm{\theta}, \bm{\varphi}}[\partial_{t,\mu}\ell(\rho, O;\bm{\theta},\bm{\varphi})] = \EE_{g^+,g^-\sim\mu^{\ox 2}}[\partial_H\ell(\rho,O)] = 0
\end{equation*}

\begin{theorem}[Compact group variance~\cite{Fontana2024characterizing}]\label{them:compact_group_var}
    Let $G$ be a compact and connected Lie group with Lie algebra $\frak{g} = \frak{c} \oplus \bigoplus_{j} \frak{g}_j$. Suppose $\phi$ is a finite-dimensional unitary representation, $iO,iH\in\frak{g}$, and $\rho$ is a density matrix. Then the following holds:
    \begin{equation}
        \Var[\partial_H\ell(\rho,O)] = \sum_{j} \frac{\|H_{\frak{g}_j}\|_K^2\norm{O_{\frak{g}_j}}_F^2\|\rho_{\frak{g}_j}\|_F^2}{d^2_{\frak{g}_j}},
    \end{equation}
    where the center $\frak{c}$ does not contribute to the variance.
\end{theorem}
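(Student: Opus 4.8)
The plan is to reduce the statement to two independent Haar integrals over $G$ and evaluate each with Schur's Lemma applied to the adjoint representation. Since Lemma~\ref{lem:vanishing_mean} gives $\EE[\partial_H\ell(\rho,O)]=0$, I have $\Var[\partial_H\ell]=\EE_{g^+,g^-}[(\partial_H\ell)^2]$. First I would rewrite the abstracted gradient as a single trace pairing: by cyclicity of the trace, $\partial_H\ell(\rho,O)=\tr\!\big(\rho\,\Ad_{g^-}(B)\big)$ with $B:=[H,\Ad_{g^+}(O)]$, where $\Ad_g(X)=U_gXU_g^{\dagger}$. Because $iH,iO\in\frak{g}$ and $\frak{g}$ is a Lie algebra, $B$ again lies in $\frak{g}$, and crucially $B$ is a commutator, hence $B\in[\frak{g},\frak{g}]=\bigoplus_j\frak{g}_j$ has no component on the center $\frak{c}$. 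Since $\Ad_g$ preserves $\frak{g}$, acts as the identity on $\frak{c}$, and restricts to the adjoint representation of the simple factor on each $\frak{g}_j$, this already isolates why $\frak{c}$ drops out: the $\frak{c}$-component of $\rho$ can only pair with the (vanishing) $\frak{c}$-component of $\Ad_{g^-}(B)$.

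Next I would carry out the $g^-$ integral. Writing $\rho=\rho_{\frak{c}}+\sum_j\rho_{\frak{g}_j}$ and $B=\sum_j B_{\frak{g}_j}$ in the trace-orthogonal decomposition, the square $(\partial_H\ell)^2$ becomes a double sum over ideals $j,k$, and averaging the two copies of $\Ad_{g^-}$ invokes the second moment of the adjoint representation. Each $\frak{g}_j$ is a real irreducible orthogonal $G$-module (the adjoint representation of a compact simple group), and distinct simple factors give inequivalent modules; hence by Schur's Lemma the only surviving contributions are the diagonal terms $j=k$, and the second-moment twirl yields $\EE_{g^-}\big[\tr(\rho_{\frak{g}_j}\Ad_{g^-}B_{\frak{g}_j})^2\big]=\tfrac{1}{d_{\frak{g}_j}}\norm{\rho_{\frak{g}_j}}_F^2\,\norm{B_{\frak{g}_j}}_F^2$. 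Thus after the first average $\EE_{g^-}[(\partial_H\ell)^2]=\sum_j d_{\frak{g}_j}^{-1}\norm{\rho_{\frak{g}_j}}_F^2\,\norm{B_{\frak{g}_j}}_F^2$.

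It remains to average $\norm{B_{\frak{g}_j}}_F^2=\norm{[H_{\frak{g}_j},\Ad_{g^+}(O_{\frak{g}_j})]}_F^2$ over $g^+$, where I use that $\ad_H$ preserves each ideal and acts there through $\ad_{H_{\frak{g}_j}}$ (since $[\frak{g}_k,\frak{g}_j]=0$ for $k\ne j$). Using that $\ad_H$ is anti-self-adjoint for the trace form, $\norm{[H_{\frak{g}_j},X]}_F^2=\langle X,(-\ad_{H_{\frak{g}_j}}^2)X\rangle$; applying Schur to the single twirl $\EE_{g^+}[\Ad_{g^+}^{-1}(-\ad_{H_{\frak{g}_j}}^2)\Ad_{g^+}]=d_{\frak{g}_j}^{-1}\tr(-\ad_{H_{\frak{g}_j}}^2)\,I_{\frak{g}_j}$ collapses the average to $\EE_{g^+}[\norm{B_{\frak{g}_j}}_F^2]=d_{\frak{g}_j}^{-1}\tr(-\ad_{H_{\frak{g}_j}}^2)\,\norm{O_{\frak{g}_j}}_F^2$. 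Recognizing $\tr(-\ad_{H_{\frak{g}_j}}^2)=-\kappa(H_{\frak{g}_j},H_{\frak{g}_j})=\norm{H_{\frak{g}_j}}_K^2$ as the Killing norm then assembles the claimed formula $\Var[\partial_H\ell]=\sum_j d_{\frak{g}_j}^{-2}\norm{H_{\frak{g}_j}}_K^2\norm{O_{\frak{g}_j}}_F^2\norm{\rho_{\frak{g}_j}}_F^2$.

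I expect the representation-theoretic bookkeeping to be the main obstacle rather than any single calculation. Concretely, I must justify that each adjoint module $\frak{g}_j$ is of real type with a one-dimensional space of invariants in $\frak{g}_j\otimes\frak{g}_j$ (so that the clean $1/d_{\frak{g}_j}$ second-moment formula holds), that the cross-factor invariants vanish by inequivalence, and---most delicately---track the normalization constants so that the invariant bilinear forms arising from the two twirls are correctly identified with the Frobenius inner product and the Killing form respectively, with no stray index-of-representation factors surviving. Some care is also needed because $\rho$ and $O$ enter only through their projections onto $\frak{g}$, so I would verify at the outset that components orthogonal to $\frak{g}$ never reach the pairing $\tr(\rho\,\Ad_{g^-}(B))$.
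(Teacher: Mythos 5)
Your argument is essentially correct, but be aware that the paper never proves this theorem: it is imported with a citation from~\cite{Fontana2024characterizing}, and the closest in-house analogue is the pair Lemma~\ref{lem:adj_int_AoxB} together with the proof of Proposition~\ref{prop:zero_mean_gradient_lb_grad_var}, which redoes this variance computation in the more general multi-data setting of the QRENN total loss. Measured against that machinery, your route is genuinely different in organization. The paper merges the two trace factors of $(\partial_H\ell)^2$ into a single trace on the doubled space $V\ox V$, twirls via Lemma~\ref{lem:adj_int_AoxB} (Schur orthogonality packaged as the split Casimir $K_{\frak{g}_j}$), expands the Casimir in an orthonormal basis to perform the second integral, and reads off the Killing norm from its basis definition $\norm{H}_K^2=\sum_k\norm{[H,E_k]}_F^2$. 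You avoid the doubled space entirely: observing that $B=[H,\Ad_{g^+}(O)]\in[\frak{g},\frak{g}]$ kills the center at the outset, the $g^-$ average becomes the scalar second-moment formula on each real-type irreducible adjoint block, and the $g^+$ average is a commutant twirl whose trace is $\tr(-\ad^2_{H_{\frak{g}_j}})=\norm{H_{\frak{g}_j}}_K^2$. The representation-theoretic facts you flag as delicate are exactly the ones needing proof, and they all hold: the complexified adjoint module of a compact simple algebra is irreducible (real type, so the invariants in $\frak{g}_j\ox\frak{g}_j$ are one-dimensional and the commutant is $\RR\cdot I$), and for $j\ne k$ the modules are inequivalent as $G$-modules because the simple factor of $G$ corresponding to $\frak{g}_j$ acts nontrivially on $\frak{g}_j$ but trivially on $\frak{g}_k$, so the cross terms vanish. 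What the paper's tensorized route buys, and yours is less directly suited to, is the multi-data generalization: in Proposition~\ref{prop:zero_mean_gradient_lb_grad_var} the two trace factors carry different unitaries $U_{g^\pm}(\bm{x}_p)$ and $U_{g^\pm}(\bm{x}_q)$ related by fixed intertwiners $V_{pq,\pm}$, and the $p\ne q$ cross terms must be carried along before being dropped in the lower bound, which forces the two-copy formalism; your single-copy twirls suffice for the single-loss statement here and are arguably cleaner for it. One bookkeeping point to fix when writing this up: the theorem's hypothesis $iH\in\frak{g}$ clashes with the definition of the abstracted gradient (which takes $H\in\frak{g}$); with both $H$ and $\Ad_{g^+}(O)$ Hermitian the pairing $\tr\bigl(\rho\,[H,\Ad_{g^+}(O)]\bigr)$ is purely imaginary, so your real-inner-product manipulations silently insert a factor of $i$ --- harmless for the final variance, but worth making explicit when you nail down the normalizations.
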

Specifically, $\norm{O_{\frak{g}_j}}_F$ and $\|\rho_{\frak{g}_j}\|_F$ denote the Frobenius norms of the orthogonal projections of the operators $O$ and $\rho$, respectively, onto the simple ideal $\frak{g}_j$. Similarly, $\|H_{\frak{g}_j}\|_K$ is the corresponding Killing norm of $H$.

To demonstrate our main results, we first compute the projections of the initial state and the measurement operator onto each decomposition ideal $\frak{g}_j$, Taking an input state  $\rho = \rho_m\ox \rho_n$ as an example, where $\rho_m$ and $\rho_n$ acting on $\cH_{2^m}$ and $\cH_{2^n}$, respectively. The (normalized) observable $O = O_m\ox O_n$ for some $O_m$ and $O_n$ acting on $\cH_{2^m}$ and $\cH_{2^n}$ locally with bounded Frobenius norm. 

\subsection{Operator projections onto the DLA}\label{subsec:proj_onto_dla_appendix}
In the first place, for either fixed generator or commutative generators. The set of embedded data Hermitian operators from a multi-slot QRENN can span a commutative algebra that breaks into a direct sum of subalgebra. Therefore, the total DLA can be decomposed into simple Lie algebras where each component is in the form $\frak{g}=\spn_{\RR}\liebracket{iP_q\ox \Pi \;:\forall q}$ for $\liebracket{iP_q \;: \forall q}$ spans $\frak{su}(2^m)$ and $\Pi$ is a projection. In the following, we aim to derive the projections of $\rho$ and $O$ onto the Lie algebra in the form. Suppose $O_m\in i\frak{su}(2^m)$, then the projection (w.r.t. Hilbert-Schmidt inner product) of it onto $\frak{g}$ is,
\begin{equation}
\begin{aligned}
    O_{\frak{g}} &= \frac{1}{\chi}\sum_{q} \tr(((iP_q)^{\dagger} \ox \Pi)(O_m\ox O_n)) iP_q \ox \Pi\\
    &= \frac{1}{\chi}\left(\sum_{q} \tr(P_q O_m) P_q\right) \ox \tr(\Pi O_n)\Pi\\
    &= \frac{\tr(\Pi O_n)}{\chi} O_m \ox \Pi
\end{aligned}
\end{equation}
Similarly, for an input state $\rho_m\ox\rho_n$, we can compute the projection in the same way,
\begin{equation}
\begin{aligned}
   \rho_{\frak{g}} &= \frac{1}{\chi}\sum_{q} \tr(((iP_q)^{\dagger} \ox \Pi)(\rho_m\ox \rho_n)) iP_q \ox \Pi\\
   &= \frac{1}{\chi}\left(\sum_{q} \tr(P_q \rho_m) P_q\right) \ox \tr(\Pi \rho_n)\Pi\\
   &= \frac{\tr(\Pi \rho_n)}{\chi} (\rho_m)_{i\frak{su}(2^m)}\ox \Pi.
\end{aligned}
\end{equation}
Here $\chi$ denotes the rank of $\Pi$. We can further compute the norms of these projections of $O$ and $\rho$,
\begin{equation}\label{eq:fnorm_o_rho}
\begin{aligned}
    \norm{O_{\frak{g}}}_F &= \frac{\tr(\Pi O_n)}{\chi}\norm{O_m \ox \Pi}_F = \frac{\tr(\Pi O_n)}{\chi}\norm{O_m}_F \norm{\Pi}_F = \sqrt{\chi}^{-1}\tr(\Pi O_n)\norm{O_m}_F,\\
    \norm{\rho_{\frak{g}}}_F &= \frac{\tr(\Pi \rho_n)}{\chi} \norm{(\rho_m)_{i\frak{su}(2^m)}\ox \Pi}_F = \sqrt{\chi}^{-1}\tr(\Pi \rho_n) \norm{(\rho_m)_{i\frak{su}(2^m)}}_F,
\end{aligned}
\end{equation}
where we have used the multiplicity of the Frobenius norm under Kronecker product and the fact that $\norm{\Pi}_F = \sqrt{\tr(\Pi)} = \sqrt{\chi}$. Next, we target to derive the Killing norm of the differentiated generator $H$ from the model. Given $\phi:e^{\frak{g}}\rightarrow \cU(V)$ a Lie group representation. We have the corresponding Lie algebra representation $d\phi:\frak{g} \rightarrow \frak{u}(V)$. The natural way to relate Killing norm and the Frobenius norm is to use the associated index of the representation~\cite{Fontana2024characterizing} and have,
\begin{equation}
    \norm{d\phi(h)}_K^2 = \norm{h}_K^2 = \frac{I_{\Op{Ad}}}{I_{\phi}}\norm{d\phi(h)}_F^2
\end{equation}
for any $h\in \frak{g}$, $I_\phi$ is called the index of the representation having the value twice of the Dynkin index~\cite{Fuchs1995affine}. Notice that for compact Lie algebras, $I_{\Op{Ad}}$ scales as $\Theta(\sqrt{d_{\frak{g}}})$ for all non-exceptional classes.

In the formal definition from Lie algebra representation, for any $X,Y\in \frak{g}$, a Lie algebra, the adjoint action is defined as $\adj{X}(Y) = [X,Y]$ and the Killing form $\kappa(X,Y)$ is $\kappa(X,Y) = \tr(\adj{X} \circ \adj{Y})$. While for matrix Lie algebras, the Killing form is proportional to the trace form. In particular, let $\{E_k\}_{k=1}^{d_{\frak{g}}}$ be an orthonormal basis for $\frak{g}$, then the Killing norm of $H\in\frak{g}$ induced by the form regarding $\frak{g}$ can be computed as,
\begin{equation*}
    \norm{H}_K^2 = \sum_{k,l=1}^{d_{\frak{g}}} \tr([H, E_k]E_l)^2 = \sum_{k=1}^{d_{\frak{g}}}\norm{[H, E_k]}_F^2.
\end{equation*}
Now, we are ready to compute the Killing norm term in the gradient variance. 
\begin{lemma}\label{lem:killing_norm_iomega_ox_H}
    Let $i\Omega\in\frak{u}(2^m)$ and $iH\in\frak{u}(2^n)$, the Killing norm of the projection of $i\Omega\ox H$ onto $\frak{g}$ is, 
    \begin{equation*}
        \norm{(i\Omega \ox H)_{\frak{g}}}_K^2 = \frac{\tr(\Pi H)^2}{\chi^2}\norm{i\Omega}_K^2
    \end{equation*}
    where $\norm{i\Omega}_K$ is the Killing norm of $i\Omega$ within $\frak{su}(2^m)$.
\end{lemma}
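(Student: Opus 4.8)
The plan is to compute the quantity in two stages: first obtain an explicit expression for the orthogonal projection $(i\Omega\ox H)_{\frak{g}}$ onto the simple component $\frak{g}=\spn_{\RR}\liebracket{iP_q\ox\Pi:\forall q}$, and then feed that expression into the commutator-sum characterization of the Killing norm already recorded in the excerpt, namely $\norm{X}_K^2=\sum_k\norm{[X,E_k]}_F^2$ for any orthonormal basis $\{E_k\}$ of $\frak{g}$. The natural orthonormal basis here is $E_q=\frac{i}{\sqrt{\chi}}P_q\ox\Pi$, which one checks is orthonormal with respect to the standard trace form because $-\tr\!\big((\tfrac{i}{\sqrt\chi}P_q\ox\Pi)(\tfrac{i}{\sqrt\chi}P_{q'}\ox\Pi)\big)=\tfrac1\chi\tr(P_qP_{q'})\tr(\Pi^2)=\delta_{qq'}$, using $\Pi^2=\Pi$, $\tr(\Pi)=\chi$, and the orthonormality of the $\{iP_q\}$ in $\frak{su}(2^m)$.

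First I would expand the projection against this basis. Computing the Hilbert--Schmidt overlap gives $\langle E_q, i\Omega\ox H\rangle=\frac{1}{\sqrt\chi}\tr(P_q\Omega)\tr(\Pi H)$, so that
\begin{equation*}
    (i\Omega\ox H)_{\frak{g}}=\frac{\tr(\Pi H)}{\chi}\Big(\sum_q\tr(P_q\Omega)\,iP_q\Big)\ox\Pi=\frac{\tr(\Pi H)}{\chi}(i\Omega)\ox\Pi,
\end{equation*}
where the last equality uses that $\sum_q\tr(P_q\Omega)P_q$ is the projection of $\Omega$ onto $\frak{su}(2^m)$, which equals $\Omega$ once we identify $i\Omega$ with its traceless part (the central $\propto I$ component is annihilated by the projection and, being central, contributes nothing to the Killing norm anyway).

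Next I would substitute $X=(i\Omega\ox H)_{\frak{g}}$ into the Killing-norm formula. The key simplification is again $\Pi^2=\Pi$: using $[A\ox\Pi, B\ox\Pi]=[A,B]\ox\Pi$, I get $[X,E_q]=\frac{\tr(\Pi H)}{\chi}\frac{i}{\sqrt\chi}[i\Omega,P_q]\ox\Pi$, whence $\norm{[X,E_q]}_F^2=\frac{\tr(\Pi H)^2}{\chi^3}\norm{[i\Omega,P_q]}_F^2\,\tr(\Pi)=\frac{\tr(\Pi H)^2}{\chi^2}\norm{[i\Omega,P_q]}_F^2$, after cancelling $\tr(\Pi)=\chi$. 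Summing over $q$ and recognizing $\sum_q\norm{[i\Omega,P_q]}_F^2=\sum_q\norm{[i\Omega,iP_q]}_F^2=\norm{i\Omega}_K^2$ — the Killing norm of $i\Omega$ inside $\frak{su}(2^m)$ relative to its orthonormal basis $\{iP_q\}$ — yields exactly $\norm{(i\Omega\ox H)_{\frak{g}}}_K^2=\frac{\tr(\Pi H)^2}{\chi^2}\norm{i\Omega}_K^2$.

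I expect the only real delicacy to be bookkeeping the powers of $\chi$ as they enter through three distinct places (the basis normalization $1/\sqrt\chi$, the Frobenius factor $\norm{\Pi}_F^2=\chi$, and the idempotency collapse $\Pi^2=\Pi$), so I would verify that the net exponent is $\chi^{-2}$ rather than routinely trusting the cancellation. A secondary point worth flagging explicitly is that the Killing-norm formula $\norm{X}_K^2=\sum_k\norm{[X,E_k]}_F^2$ is the one intrinsic to $\frak{g}$ itself (its adjoint action on its own orthonormal basis), and the $\Pi^2=\Pi$ identity is precisely what makes the adjoint action on $\frak{g}$ mirror the adjoint action of $\frak{su}(2^m)$ on itself, justifying the final identification with $\norm{i\Omega}_K^2$.
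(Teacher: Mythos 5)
Your proposal is correct and follows essentially the same route as the paper's proof: expand $(i\Omega\ox H)_{\frak{g}}$ against the orthonormal basis $\{\tfrac{i}{\sqrt{\chi}}P_q\ox\Pi\}_q$ to get $\tfrac{\tr(\Pi H)}{\chi}(i\Omega)_{\frak{su}(2^m)}\ox\Pi$, then apply the commutator-sum Killing norm formula, using $\Pi^2=\Pi$ to collapse $[A\ox\Pi,B\ox\Pi]=[A,B]\ox\Pi$ and $\norm{\Pi}_F^2=\chi$ to settle the powers of $\chi$. Your explicit remark that the trace (central) part of $i\Omega$ is killed by the projection and is irrelevant to the Killing norm makes precise what the paper leaves implicit in writing $\norm{(i\Omega)_{\frak{su}(2^m)}}_K$.
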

\begin{proof}
    Let $i\Omega \in \frak{u}(2^m)$ and $H\in i\frak{u}(2^n)$. The projection of $i\Omega \otimes H$ onto $\frak{g}$ (defined in Eq.~\eqref{eq:basis_fixed_H_dla}) can be derived as,
    \begin{equation*}
    \begin{aligned}
        (i\Omega \otimes H)_{\frak{g}} &= \frac{1}{\chi}\sum_{q} \tr((-iP_q \ox \Pi)(i\Omega \ox H))(iP_q \ox \Pi) = \frac{\tr(\Pi H)}{\chi} (i\Omega)_{\frak{su}(2^m)} \ox \Pi.
    \end{aligned}
    \end{equation*}
    The Killing norm of the projection regarding the Lie algebra $\frak{g}$ can be computed,
    \begin{equation*}
    \begin{aligned}
        \norm{(i\Omega \otimes H)_{\frak{g}}}_K^2 &= \frac{\tr(\Pi H)^2}{\chi_j^3}\sum_{q=1}^{d_{\frak{g}}} \norm{[(i\Omega)_{\frak{su}(2^m)} \ox \Pi, iP_q\ox \Pi]}_F^2\\ &= \frac{\tr(\Pi H)^2}{\chi^2}\sum_{q=1}^{d_{\frak{g}}} \norm{[(i\Omega)_{\frak{su}(2^m)}, iP_q]}_F^2 = \frac{\tr(\Pi H)^2}{\chi^2}\norm{(i\Omega)_{\frak{su}(2^m)}}_K^2
    \end{aligned}
    \end{equation*}
\end{proof}

\subsection{Trainability of QRENN concerning the trace form loss}
With all calculations and discussions in the previous sections, we can first demonstrate that the trainability of QRENN with respect to a fixed observable $O$ by applying Theorem~\ref{them:compact_group_var} to the single term trace form loss function $\ell(\rho, O)$. In the following two sections, we discuss the scenario in which all the slots of QRENN are embedded with a fixed generator. The generalization towards commutative generators follows the same language, for which we will omit the repeating proofs.

\subsubsection{Training the data processing register}
Consider taking the derivative with respect to the $\mu$-th parameter at the $t$-th slot of the data processing register, which brings down the generator $i\Omega_{\mu} \ox I_{2^n}$ from the exponent of the model. Assuming the circuit is sufficiently deep to approximate the unitary $2$-design of $e^{\liedqnn}$ so that one can replace the parameter gradient with the abstract gradient. {In the case of a fixed data generator, the joint eigenspace overlap reduces to the eigenspace overlap, i.e., given $H\in i\frak{u}(d)$, the eigenspace overlap of $\rho\in\cD(d)$ regarding $H$ is given by $R^2_H(\rho_n) := \sum_{k = 1}^{r} \tr(\Pi_k \rho)^2$ where $r$ is the number of distinct eigenspaces of $H$.}

\begin{lemma}\label{lem:fnorm_0_state_Om_ox_In}
    Given the initial state $\rho = \ketbra{0_m}{0_m} \ox \rho_n$ and the $O = O_m \otimes I_{2^n}$. The Frobenius norms of Eqs.~\eqref{eq:fnorm_o_rho} is derived as,
    \begin{equation*}
    \begin{aligned}
        \norm{O_{\frak{g}}}_F =\sqrt{\chi}\norm{O_m}_F, \quad
        \norm{\rho_{\frak{g}}}_F = \sqrt{(1 - 2^{-m}) / \chi}\tr(\Pi \rho_n).
    \end{aligned}
    \end{equation*}
\end{lemma}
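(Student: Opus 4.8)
The plan is to specialize the general projection-norm formulas in Eq.~\eqref{eq:fnorm_o_rho} to the stated choices $\rho_m = \ketbra{0_m}{0_m}$ and $O_n = I_{2^n}$, so that the only remaining work is to evaluate two scalar quantities: the overlap $\tr(\Pi O_n)$ and the Frobenius norm of the traceless part of $\rho_m$. Recall that $\frak{g}$ has the orthonormal basis $\{\chi^{-1/2} iP_q\ox\Pi\}_q$ with $\{iP_q\}$ spanning $\frak{su}(2^m)$ and $\chi=\tr(\Pi)$ the rank of the projection, so every instance of $O_m$ and $(\rho_m)_{i\frak{su}(2^m)}$ below refers to the component living in the traceless Hermitian sector spanned by $\{P_q\}$.

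First I would treat the observable. Setting $O_n = I_{2^n}$ in Eq.~\eqref{eq:fnorm_o_rho} gives $\tr(\Pi O_n) = \tr(\Pi) = \chi$. Substituting directly yields $\norm{O_{\frak{g}}}_F = \sqrt{\chi}^{-1}\cdot\chi\cdot\norm{O_m}_F = \sqrt{\chi}\,\norm{O_m}_F$, which is the first claimed identity; here $O_m$ is already traceless by the standing assumption $O_m\in i\frak{su}(2^m)$, so no further projection is needed.

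Next, for the state, the key step is to identify $(\rho_m)_{i\frak{su}(2^m)}$, the orthogonal projection of $\rho_m$ onto the traceless Hermitian operators. Since this projection merely removes the identity component, I would write $(\rho_m)_{i\frak{su}(2^m)} = \rho_m - 2^{-m}\tr(\rho_m)\,I_{2^m} = \ketbra{0_m}{0_m} - 2^{-m} I_{2^m}$, using $\tr(\rho_m)=1$. Squaring and taking the trace, and using that $\ketbra{0_m}{0_m}$ is a rank-one projector while $\tr(I_{2^m})=2^m$, gives $\norm{(\rho_m)_{i\frak{su}(2^m)}}_F^2 = 1 - 2\cdot 2^{-m} + 2^{-2m}\cdot 2^m = 1 - 2^{-m}$. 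Plugging this back into Eq.~\eqref{eq:fnorm_o_rho} produces $\norm{\rho_{\frak{g}}}_F = \sqrt{\chi}^{-1}\tr(\Pi\rho_n)\sqrt{1-2^{-m}} = \sqrt{(1-2^{-m})/\chi}\,\tr(\Pi\rho_n)$, the second identity.

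There is no substantial obstacle here; the computation is elementary and both results follow by direct substitution. The only point requiring care is the bookkeeping of the factor $2^m$ when evaluating the Frobenius norm of the traceless part, namely recognizing that the projection onto $i\frak{su}(2^m)$ subtracts $2^{-m}I_{2^m}$ rather than $I_{2^m}$, so that the cross-term and identity-term contributions combine to $-2^{-m}$ instead of cancelling.
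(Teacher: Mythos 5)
Your proposal is correct and follows essentially the same route as the paper's proof: substitute $O_n = I_{2^n}$ to get $\tr(\Pi O_n)=\chi$ for the first identity, then identify $(\rho_m)_{i\frak{su}(2^m)} = \ketbra{0_m}{0_m} - 2^{-m}I_{2^m}$ and compute its squared Frobenius norm as $1-2^{-m}$. The only cosmetic difference is that the paper justifies the traceless projection via the Pauli expansion $2^m\ketbra{0_m}{0_m}=(I+Z)^{\ox m}$, whereas you invoke directly that projecting onto the traceless sector subtracts the identity component — the resulting operator and norm computation are identical.
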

\begin{proof}
    Start with $O = O_m \otimes I_{2^n}$. From Eqs.~\eqref{eq:fnorm_o_rho}, we have,
    \begin{equation*}
        \norm{O_{\frak{g}}}_F = \sqrt{\chi}^{-1}\tr(\Pi O_n) \norm{O_m}_F = \sqrt{\chi} \norm{O_m}_F.
    \end{equation*}
    Then, let $\rho_m = \ketbra{0_m}{0_m}$, we first compute its projection onto $i\frak{su}(2^m)$. Notice that, $\ketbra{0}{0} = \frac{I+Z}{2}$ where $I$ and $Z$ are the single-qubit identity and Pauli-Z matrices. Therefore,
    \begin{equation*}
        2^m\ketbra{0_m}{0_m} = \left(I+Z\right)^{\ox m} = I_{2^m} + \sum_k Z_k + \sum_{k\neq l}Z_kZ_l + \cdots, 
    \end{equation*}
    where $Z_k$ denotes the operator that a Pauli-Z acts solely on the $k$-th qubit. We can see the only term that does not lie in $i\frak{su}(2^m)$ is the identity $I_{2^m}$ having a zero overlap with $\frak{g}$. Hence,
    \begin{equation*}
    \begin{aligned}
        \norm{(\rho_m)_{i\frak{su}(2^m)}}^2_F = 2^{-2m}\norm{2^m\ketbra{0_m}{0_m} - I_{2^m}}^2_F = \tr(\ketbra{0_m}{0_m} - \frac{2}{2^m}\ketbra{0_m}{0_m} + I_{2^m} / 2^{2m}) = 1 - 2^{-m}.
    \end{aligned}
    \end{equation*}
    The total value of $\norm{\rho_{\frak{g}}}$ can be derived $\norm{\rho_{\frak{g}}}_F = \sqrt{(1 - 2^{-m})/\chi}\tr(\Pi \rho_n)$.
\end{proof}

\begin{proposition}\label{prop:variance_data_processing_layer}
    For a sufficiently deep ($m+n$)-qubit QRENN model with a fixed Hermitian data generator $H$ of $r$ distinct eigenspaces, embedded via Eq.~\eqref{eq:encoding_map}, let $m$ scales $\cO(\log(n))$, input state $\rho = \rho_m\ox\rho_n$, and observable $O = O_m\ox O_n$. Then $$\EE_{\bm{\theta},\bm{\varphi}}[\partial_{t,\mu}\ell] = 0,$$
    where the derivative is taken w.r.t. the ($t,\mu$)-th parameter  within the data processing register. Moreover, let $\rho_m$ be any pure $m$-qubit state, $O_n = I_{2^n}$, and $\norm{\Omega_{\mu}}^2_F, \norm{O_m}^2_F$ scales constantly. If $R_H(\rho_n)$ regarding $H$ can scale $\Omega(1/\Op{poly}(n))$. we have,
    $$\Var_{\bm{\theta}, \bm{\varphi}}[\partial_{t,\mu} \ell] \geq \Omega(1/\Op{poly}(n)).$$
\end{proposition}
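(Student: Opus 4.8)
The plan is to handle the two assertions separately, leaning entirely on the Lie-algebraic apparatus already assembled. For the vanishing first moment I would invoke Lemma~\ref{lem:vanishing_mean} directly: since $\liedqnn$ is a compact Lie algebra and the circuit is assumed deep enough to form an approximate $2$-design on the connected, compact group $e^{\liedqnn}$, the parameter derivative $\partial_{t,\mu}\ell$ may be replaced by the abstracted gradient $\partial_H\ell(\rho,O)$ with differentiated generator $H=i\Omega_\mu\ox I_{2^n}$, whose Haar expectation over the dynamical group vanishes. This settles $\EE_{\bm{\theta},\bm{\varphi}}[\partial_{t,\mu}\ell]=0$ at once.

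For the variance, the strategy is to feed the explicit decomposition of Proposition~\ref{prop:dla_fixed_H}, namely $\liedqnn\simeq\frak{c}\oplus\frak{su}(2^m)^{\oplus r}$, into the compact-group variance formula of Theorem~\ref{them:compact_group_var}. Since the center $\frak{c}$ does not contribute, the variance becomes a sum over the $r$ simple ideals $\frak{g}_j=\spn_{\RR}\liebracket{iP_q\ox\Pi_j:\forall q}$, one per distinct eigenspace of $H$. The three factors of each summand are then supplied by the preparatory lemmas: Lemma~\ref{lem:killing_norm_iomega_ox_H}, applied with $H$ replaced by $I_{2^n}$ so that $\tr(\Pi_j I_{2^n})=\chi_j$ cancels, gives $\norm{(i\Omega_\mu\ox I_{2^n})_{\frak{g}_j}}_K^2=\norm{i\Omega_\mu}_K^2$; and Lemma~\ref{lem:fnorm_0_state_Om_ox_In} (with $O_n=I_{2^n}$) gives $\norm{O_{\frak{g}_j}}_F^2=\chi_j\norm{O_m}_F^2$ and $\norm{\rho_{\frak{g}_j}}_F^2=\chi_j^{-1}\tr(\Pi_j\rho_n)^2(1-2^{-m})$. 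Here I would first verify that the identity $\norm{(\rho_m)_{i\frak{su}(2^m)}}_F^2=1-2^{-m}$ used in that lemma persists for an arbitrary pure $\rho_m$, which is immediate from $\norm{\rho_m-2^{-m}I_{2^m}}_F^2=\tr(\rho_m^2)-2^{-m}=1-2^{-m}$.

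The pivotal simplification is that the eigenspace dimensions $\chi_j=\operatorname{rank}(\Pi_j)$ cancel once the three factors are multiplied, so with $d_{\frak{g}_j}=4^m-1$ the sum collapses to
\begin{equation*}
    \Var_{\bm{\theta},\bm{\varphi}}[\partial_{t,\mu}\ell] = \frac{\norm{i\Omega_\mu}_K^2\,\norm{O_m}_F^2\,(1-2^{-m})}{(4^m-1)^2}\sum_{j=1}^r\tr(\Pi_j\rho_n)^2 = \frac{\norm{i\Omega_\mu}_K^2\,\norm{O_m}_F^2\,(1-2^{-m})}{(4^m-1)^2}\,R^2_H(\rho_n),
\end{equation*}
where the last equality recognizes the residual sum as the eigenspace overlap of Definition~\ref{def:joint_eigenspace_radius}. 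The cancellation of $\chi_j$ is exactly what makes the bound depend on $\rho_n$ only through $R_H(\rho_n)$, the quantity the hypotheses control.

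The remaining and most delicate step is the scaling bookkeeping. I would convert the Killing norm to the Frobenius norm through the index relation, using that for $\frak{su}(2^m)$ the Killing form is $2\cdot2^m$ times the trace form (equivalently $I_{\Op{Ad}}=\Theta(\sqrt{d_{\frak{g}}})=\Theta(2^m)$), so $\norm{i\Omega_\mu}_K^2=\Theta(2^m)\norm{\Omega_\mu}_F^2$ and the prefactor scales as $\Theta(2^m/4^{2m})=\Theta(2^{-3m})$. With $m=\cO(\log n)$ this is $1/\Op{poly}(n)$; combined with $\norm{\Omega_\mu}_F^2,\norm{O_m}_F^2=\Omega(1)$ and the assumed $R_H(\rho_n)=\Omega(1/\Op{poly}(n))$, the whole expression is bounded below by $\Omega(1/\Op{poly}(n))$. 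The main obstacle is precisely this scaling control: one must check that the exponential growth $d_{\frak{g}_j}^2\sim4^{2m}$ of the ideal dimension is not fully compensated by the Killing-to-Frobenius index factor, and that it is the polynomial lower bound on the eigenspace overlap that prevents the variance from decaying exponentially.
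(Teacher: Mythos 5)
Your proposal is correct and follows essentially the same route as the paper's proof: replace the parameter derivative by the abstracted gradient and invoke the vanishing-mean lemma for the first moment, then feed the decomposition of Proposition~\ref{prop:dla_fixed_H} into Theorem~\ref{them:compact_group_var}, evaluate the three factors via Lemma~\ref{lem:killing_norm_iomega_ox_H} and Lemma~\ref{lem:fnorm_0_state_Om_ox_In}, and finish with the same $m=\cO(\log n)$ scaling bookkeeping. The only minor differences are cosmetic and in your favor: you compute $\norm{(\rho_m)_{i\frak{su}(2^m)}}_F^2 = 1-2^{-m}$ directly for an arbitrary pure state (the paper instead absorbs $V_\psi$ into the first processing block by Haar invariance), and your exact cancellation of the $\chi_j$ factors is cleaner than the paper's intermediate expression, which carries $\chi_j^2$ inside the sum and then discards it via $\chi_j^2 \geq 1$, arriving at the same final bound $\Var \geq \Omega\bigl(2^{m}R_H^2(\rho_n)/(2^{2m}-1)^2\bigr) \geq \Omega(1/\Op{poly}(n))$.
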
    
\begin{proof}
    Let the tangent generator from the differentiation be $i\Omega_{\mu}\ox I_{2^n}$, where $i\Omega_{\mu} \in \frak{su}(2^m)$. From the sufficient-deep assumption, one can replace the parameter derivative with the abstract derivative where the differentiation operator is $i\Omega_{\mu} \ox I_{2^n}$. Lemma~\ref{lem:killing_norm_iomega_ox_H} gives the explicit value of the Killing norm, i.e., 
    \begin{equation*}
        \norm{(i\Omega_{\mu}\ox I_{2^n})_{\frak{g}_j}}_K^2 =  \frac{\tr(\Pi_j)^2}{\chi_j^2}\norm{i\Omega_{\mu}}_K^2 = 2^{m+1}\norm{i\Omega_{\mu}}_F^2 = 2^{m+1}\norm{\Omega_{\mu}}_F^2,
    \end{equation*}
    where $I_{\Op{Ad}} = 2d$ for $\frak{su}(d)$ and $I_{\phi} = 1$. Combine the fact with Eqs.~\ref{eq:fnorm_o_rho}, and apply Theorem~\ref{them:compact_group_var} to derive,
    \begin{equation}\label{eq:exact_var_trace_form_loss_grad}
    \begin{aligned}
        \Var[\partial_H\ell(\rho,O)] &=  \sum_{j=1}^{r} \frac{2^{m+1}\norm{\Omega_{\mu}}_F^2\norm{O_{\frak{g}_j}}_F^2\|\rho_{\frak{g}_j}\|_F^2}{d^2_{\frak{g}_j}}\\
        &= \frac{2^{m+1} \norm{\Omega_{\mu}}_F^2\norm{O_m}_F^2}{(2^{2m} - 1)^2} \sum_{j=1}^{r} \tr(\Pi_j O_n)^2\tr(\Pi_j\rho_n)^2 \norm{(\rho_m)_{i\frak{su}(2^m)}}_F^2
    \end{aligned}
    \end{equation}
    where $d_{\frak{g}_j} = 2^{2m} - 1$ from the isomorphism. Notice that for any pure state $\rho_m = \ketbra{\psi}{\psi} = V_{\psi}\ketbra{0_m}{0_m}V^{\dagger}_{\psi}$, the state generation unitary $V_{\psi}$ can be absorbed in the first data processing register due to the Haar invariance, and we can set $\rho_m$ to be the zero state w.l.o.g. Now we apply Lemma~\ref{lem:fnorm_0_state_Om_ox_In} to derive,
    \begin{equation*}
    \begin{aligned}
        \Var[\partial_H\ell(\rho,O)] &=  \frac{2^{m+1}(1 - 2^{-m}) \norm{\Omega_{\mu}}_F^2\norm{O_m}_F^2}{(2^{2m} - 1)^2} \sum_{j=1}^{r} \chi_j^2\tr(\Pi_j\rho_n)^2 \geq 
        \frac{2^{m+1} \norm{\Omega_{\mu}}_F^2\norm{O_m}_F^2}{(2^{2m} - 1)^2} \sum_{j=1}^{r} \tr(\Pi_j\rho_n)^2.
    \end{aligned}
    \end{equation*} 
    Now we take the assumption that $m$ scales as $\cO(\log(n))$ and set the generator $\Omega_{\mu}$ and $O_m$ to be some Hermitian traceless operator such that  $\norm{\Omega_{\mu}}^2_F, \norm{O_m}^2_F \sim \Omega(1)$. The polynomially lower-bounded assumption from $R^2_H(\rho_n)$ then gives,
    \begin{equation*}
        \Var[\partial_H\ell(\rho,O)] \geq   \Omega\left(\frac{2nR^2_{H}(\rho_n)}{(n^2-1)^2}\right) \geq \Omega(1/\Op{poly}(n)),
    \end{equation*}
    as required.
\end{proof}

The above proposition has demonstrated the trainability of all the parameters $\bm{\theta}$ within the data processing register regarding a fixed data generator. One can find that the initial overlap of the state $\rho_n$ with the data generator $H$  dominates the occurrence of BP. In particular, we have the following corollary.
\begin{corollary}\label{coro:bp_exponential_overlap_rho_n}
    Let $\{i\Omega_k\}_{k}$, $H$ and the other setups be the same as defined in Proposition~\ref{prop:variance_data_processing_layer}. Suppose $\norm{\Omega_{\mu}}_F^2$ and $\norm{O_m}_F^2$ scales as $\cO(2^{m})$, for instance the $m$-qubit Pauli operators. If $R_H(\rho_n)$ scales as $\cO(2^{-n})$ and $\chi_{\max}$ scales as $\cO(\Op{poly}(n))$ where $\chi_{\max}:=\max_j \chi_j$. Then, $\Var_{\bm{\theta}, \bm{\varphi}}[\partial_{t,\mu} \ell] \leq \cO(2^{-2n})$.
\end{corollary}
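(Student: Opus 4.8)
The plan is to read off the exact, closed-form variance already derived inside the proof of Proposition~\ref{prop:variance_data_processing_layer} and simply bound it from above in the regime fixed by the hypotheses. Under the inherited setup (a fixed generator $H$ with $r$ distinct eigenspaces, pure $\rho_m$ taken to be $\ket{0_m}$ without loss of generality, and $O_n = I_{2^n}$), the intermediate identity behind Eq.~\eqref{eq:exact_var_trace_form_loss_grad} reads
\begin{equation*}
    \Var_{\bm{\theta},\bm{\varphi}}[\partial_{t,\mu}\ell] = \frac{2^{m+1}(1 - 2^{-m})\,\norm{\Omega_{\mu}}_F^2\,\norm{O_m}_F^2}{(2^{2m}-1)^2}\sum_{j=1}^{r}\chi_j^2\,\tr(\Pi_j\rho_n)^2,
\end{equation*}
where I have used $\tr(\Pi_j O_n)=\tr(\Pi_j)=\chi_j$. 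So no new representation-theoretic input (Theorem~\ref{them:compact_group_var}, Lemma~\ref{lem:killing_norm_iomega_ox_H}) is needed; everything reduces to estimating the spectral sum and the prefactor.

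First I would control the spectral sum. Using $\chi_j\le\chi_{\max}$ for every $j$ together with the definition $R_H^2(\rho_n)=\sum_{j=1}^r\tr(\Pi_j\rho_n)^2$, one gets
\begin{equation*}
    \sum_{j=1}^{r}\chi_j^2\,\tr(\Pi_j\rho_n)^2 \le \chi_{\max}^2\sum_{j=1}^{r}\tr(\Pi_j\rho_n)^2 = \chi_{\max}^2\,R_H^2(\rho_n).
\end{equation*}
This is the exact counterpart of the lower-bounding step in Proposition~\ref{prop:variance_data_processing_layer}, where instead $\chi_j^2\ge 1$ was used; the upper bound naturally pulls in the largest multiplicity $\chi_{\max}$.

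Next I would bound the prefactor and insert the scalings. For $m\ge 1$ one has $(2^{2m}-1)^2\ge \tfrac14 2^{4m}$ and $1-2^{-m}\le 1$, so the prefactor is at most $2^{m+3}/2^{4m}=2^{3-3m}$. Feeding in $\norm{\Omega_{\mu}}_F^2,\norm{O_m}_F^2=\cO(2^m)$ (realized, e.g., by $m$-qubit Pauli operators, for which $\norm{P}_F^2=2^m$), together with $\chi_{\max}^2=\cO(\Op{poly}(n))$ and $R_H^2(\rho_n)=\cO(2^{-2n})$, yields
\begin{equation*}
    \Var_{\bm{\theta},\bm{\varphi}}[\partial_{t,\mu}\ell] \le 2^{3-3m}\cdot\cO(2^m)\cdot\cO(2^m)\cdot\cO(\Op{poly}(n))\cdot\cO(2^{-2n}) = \cO\!\left(2^{3-m}\,\Op{poly}(n)\,2^{-2n}\right).
\end{equation*}
Since $m=\cO(\log n)$ forces $2^{3-m}=\cO(1)$, the only surviving non-exponential factor is polynomial in $n$, and the $2^{-2n}$ governs the asymptotics.

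The computation is essentially routine once the exact formula is invoked; the one point that deserves care is the polynomial overhead $\chi_{\max}^2=\cO(\Op{poly}(n))$ inherited from the eigenspace multiplicities, which strictly gives $\cO(\Op{poly}(n)\,2^{-2n})$ rather than a clean $\cO(2^{-2n})$. I would therefore emphasize that this polynomial prefactor is subdominant to the exponential $2^{-2n}$, so that the variance still lies in the barren-plateau class $\cO(1/b^n)$ for some $b>1$; the conclusion—exponential suppression of the gradient, hence a barren plateau whenever the eigenspace overlap $R_H(\rho_n)$ is exponentially small—is thus unaffected.
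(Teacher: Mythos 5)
Your proof is correct and follows essentially the same route as the paper: it recalls the exact variance formula from the proof of Proposition~\ref{prop:variance_data_processing_layer}, bounds the spectral sum by $\chi_{\max}^2\,R_H^2(\rho_n)$, and inserts the assumed scalings of $\norm{\Omega_\mu}_F^2$, $\norm{O_m}_F^2$, $\chi_{\max}$, and $R_H(\rho_n)$. If anything you are more careful than the paper, which silently compresses the resulting $\cO(\Op{poly}(n)/2^{2n})$ into $\cO(1/2^{2n})$, whereas you explicitly flag the polynomial prefactor and correctly note that it leaves the exponential-suppression (barren plateau) conclusion intact.
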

\begin{proof}
    Recalling the steps from the proof of Proposition~\ref{prop:variance_data_processing_layer}, and we can derive the explicit form of the variance as,
    \begin{equation*}
    \begin{aligned}
        \Var[\partial_H\ell(\rho,O)] &=  \frac{2n(1 - n^{-1}) \norm{\Omega_{\mu}}_F^2\norm{O_m}_F^2}{(n^2 - 1)^2} \sum_{j=1}^{r} \chi_j^2\tr(\Pi_j\rho_n)^2 \\
        &\leq \frac{2n(1 - n^{-1}) \norm{\Omega_{\mu}}_F^2\norm{O_m}_F^2\chi^2_{\max}}{(n^2 - 1)^2} R_H^2(\rho_n) \\
        &\leq \cO(\Op{poly}(n) / 2^{2n})\leq \cO(1/2^{2n}).
    \end{aligned}
    \end{equation*}
\end{proof}

Corollary~\ref{coro:bp_exponential_overlap_rho_n} showcases the fact that if the initial state of the data embedding register has an exponentially small overlap onto the eigenspace of $H$, in other words, the interaction between the evolution $e^{iHt}$ and the $\rho_n$ is negligible, if the maximum eigenspace of $H$ has a dimension scales as $\cO(\Op{poly}(n))$, the parameters in the data processing register will experience BP. For example, taking $\rho_n = I_{2^n}/2^n$ for which our QRENN model can reduce to the famous \textit{one-clean-qubit} model (or DQC1)~\cite{Morimae2017power,Kim2024expressivity,Xuereb2023deterministic} which has been theoretically proven as classically hard-to-simulate~\cite{Fujii2018impossibility,Morimae2014hardness}. In this case,
\begin{equation*}
    R_H^2(I_{2^n}/2^n) = \frac{1}{2^{2n}}\sum_{j} \tr(\Pi_j)^2 \leq \frac{r\chi_{\max}^2}{2^{2n}} \leq \frac{\chi^2_{\max}}{2^n}.
\end{equation*}
If $\chi_{\max}$ scales as $\cO(\Op{poly}(n))$, the above corollary demonstrates the trainability issue of the model, making it non-scalable in practice. Despite that, the maximum dimension among all eigenspaces of $H$ can non-trivially affect the trainability as well. For instance, under the same assumption in the DQC1 scenario, if $H = I_{2^n}$, then $R_H^2(\rho_n) = \tr(I_{2^n} \rho_n) = 1$. The bound becomes trivial. In this extreme case, the data processing register and the data embedding register are fully decoupled, which our theory reduces to the original result in~\cite{Fontana2024characterizing,Ragone2024lie}.

\subsubsection{Training the data embedding register}
In this section, we will discuss the trainability of our QRENN model (fixed data) for the parameters $\bm{\varphi}$ within the data embedding register, even though the parametrised QRENN is not explicitly used in the learning tasks. Compared to the previous section, we now take the derivative with respect to the $\mu$-th parameter at the $t$-slot of the data embedding register, which brings down the generator $i\ketbra{1_m}{1_m}\ox H$. The same assumption of the circuit depth is applied, and we aim to prove that there is no BP in $\bm{\varphi}$ as well.

\begin{proposition}\label{prop:variance_data_reuploading_layer}
    For a sufficiently deep ($m+n$)-qubit QRENN model with a fixed Hermitian data generator $H$ of $r$ distinct eigenspaces, embedded via Eq.~\eqref{eq:encoding_map}, let $m$ scales as $\cO(\log(n))$, input state $\rho = \rho_m\ox\rho_n$, and observable $O = O_m\ox O_n$. Then $$\EE_{\bm{\theta},\bm{\varphi}}[\partial_{t,\mu}\ell] = 0,$$
    where the derivative is taken w.r.t. the ($t,\mu$)-th parameter within the data embedding register. Moreover, let $\rho_m$ be any pure $m$-qubit state, $O_n = I_{2^n}$, and $\norm{\Omega_{\mu}}^2_F, \norm{O_m}^2_F$ scales constantly. For $H$ of $r$ distinct eigenvalues, if $\tr(H\rho_n)^2$ scales as $\Omega(r/\Op{poly}(n))$. we have,
    $$\Var_{\bm{\theta}, \bm{\varphi}}[\partial_{t,\mu} \ell] \geq \Omega(1/\Op{poly}(n)).$$
\end{proposition}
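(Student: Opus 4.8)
The plan is to mirror the proof of Proposition~\ref{prop:variance_data_processing_layer} almost verbatim, the only genuine change being the Killing-norm factor coming from the new differentiation generator. First I note that differentiating $\ell$ with respect to a parameter in the data embedding register brings down the generator $i\ketbra{1_m}{1_m}\ox H\in\liedqnn$; since the circuit is deep enough to form a $2$-design on $e^{\liedqnn}$, the parameter gradient may be replaced by the abstracted gradient, so the vanishing mean $\EE_{\bm{\theta},\bm{\varphi}}[\partial_{t,\mu}\ell]=0$ follows directly from Lemma~\ref{lem:vanishing_mean}. For the variance I invoke Theorem~\ref{them:compact_group_var} together with the decomposition $\liedqnn\simeq\frak{c}\oplus\frak{su}(2^m)^{\oplus r}$ from Proposition~\ref{prop:dla_fixed_H}, so that $\Var[\partial_H\ell]=\sum_{j=1}^r \norm{(i\ketbra{1_m}{1_m}\ox H)_{\frak{g}_j}}_K^2\,\norm{O_{\frak{g}_j}}_F^2\,\norm{\rho_{\frak{g}_j}}_F^2/d_{\frak{g}_j}^2$, with $d_{\frak{g}_j}=2^{2m}-1$.

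The one new computation is the Killing norm of the projection of the embedding-register generator. The key trick is to split $\ketbra{1_m}{1_m}=2^{-m}I_{2^m}+\Omega_c$ into trace and traceless parts, where $\Omega_c:=\ketbra{1_m}{1_m}-2^{-m}I_{2^m}$. The identity part yields $2^{-m}(iI_{2^m}\ox H)\in\frak{c}$, which is central and therefore contributes nothing to the variance; the remainder $i\Omega_c\ox H$ is exactly of the form handled by Lemma~\ref{lem:killing_norm_iomega_ox_H}. Using $\tr(\Pi_j H)=\lambda_j\chi_j$ and $\norm{\Omega_c}_F^2=1-2^{-m}$, this gives $\norm{(i\ketbra{1_m}{1_m}\ox H)_{\frak{g}_j}}_K^2=\lambda_j^2\,2^{m+1}(1-2^{-m})$. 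Reusing the Frobenius-norm values for $O_{\frak{g}_j}$ and $\rho_{\frak{g}_j}$ from Lemma~\ref{lem:fnorm_0_state_Om_ox_In} (valid after absorbing the state-preparation unitary of the pure state $\rho_m$ into the processing register by Haar invariance, so WLOG $\rho_m=\ketbra{0_m}{0_m}$), the $\chi_j$ factors cancel and I obtain
\begin{equation*}
    \Var[\partial_H\ell] = \frac{2^{m+1}(1-2^{-m})^2\norm{O_m}_F^2}{(2^{2m}-1)^2}\sum_{j=1}^r \lambda_j^2\,\tr(\Pi_j\rho_n)^2.
\end{equation*}

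Compared with the processing-register case, the Killing norm now carries the eigenvalue weights $\lambda_j^2$ rather than $\tr(\Pi_j)^2=\chi_j^2$, so the spectral sum is $\sum_j\lambda_j^2\tr(\Pi_j\rho_n)^2$ rather than the bare overlap $R_H^2(\rho_n)=\sum_j\tr(\Pi_j\rho_n)^2$. To reach a bound in the physically accessible quantity $\tr(H\rho_n)$, I apply Cauchy--Schwarz: writing $\tr(H\rho_n)=\sum_j\lambda_j\tr(\Pi_j\rho_n)$ over the $r$ eigenspaces gives $\tr(H\rho_n)^2\leq r\sum_j\lambda_j^2\tr(\Pi_j\rho_n)^2$, hence $\sum_j\lambda_j^2\tr(\Pi_j\rho_n)^2\geq\tfrac{1}{r}\tr(H\rho_n)^2$. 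Substituting $2^m=\cO(\Op{poly}(n))$ from $m=\cO(\log n)$, together with $\norm{O_m}_F^2=\Omega(1)$ and the hypothesis $\tr(H\rho_n)^2=\Omega(r/\Op{poly}(n))$, the prefactor scales as $\Theta(1/\Op{poly}(n))$ and the spectral sum as $\Omega(1/\Op{poly}(n))$, yielding $\Var_{\bm{\theta},\bm{\varphi}}[\partial_{t,\mu}\ell]\geq\Omega(1/\Op{poly}(n))$.

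I expect the main subtlety --- not a deep obstacle, but the single place the argument genuinely departs from Proposition~\ref{prop:variance_data_processing_layer} --- to be the handling of the trace part of $\ketbra{1_m}{1_m}$ together with the ensuing Cauchy--Schwarz step: one must recognize that the central piece drops out of the variance, and that the eigenvalue-weighted overlap $\sum_j\lambda_j^2\tr(\Pi_j\rho_n)^2$ cannot be controlled by $R_H^2(\rho_n)$ alone, which is precisely why the hypothesis is stated with the extra factor of $r$ attached to $\tr(H\rho_n)^2$.
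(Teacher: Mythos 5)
Your proof is correct and follows essentially the same route as the paper's: replacing the parameter derivative by the abstracted gradient and invoking Lemma~\ref{lem:vanishing_mean} for the zero mean, computing the Killing norm of $i\ketbra{1_m}{1_m}\ox H$ via Lemma~\ref{lem:killing_norm_iomega_ox_H} after projecting out the trace part of $\ketbra{1_m}{1_m}$ (exactly as the paper does), reusing the Frobenius norms of Lemma~\ref{lem:fnorm_0_state_Om_ox_In} with $\rho_m$ reduced to $\ketbra{0_m}{0_m}$ by Haar invariance, and finishing with the same Cauchy--Schwarz step $\tr(H\rho_n)^2 \le r\sum_j \lambda_j^2\tr(\Pi_j\rho_n)^2$ and scaling assumptions. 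The only discrepancy is that the paper's intermediate variance expression carries an extra factor $\chi_j^2$ in the spectral sum (which it then discards via $\chi_j\ge 1$), whereas your version without it is the one actually consistent with Eq.~\eqref{eq:fnorm_o_rho}; since both are immediately lower-bounded by $\sum_j\lambda_j^2\tr(\Pi_j\rho_n)^2$, the final bound is identical.
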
    
\begin{proof}
Let the tangent generator from the differentiation be $i\ketbra{1_m}{1_m}\ox H$, where $H \in i\frak{u}(2^n)$ is the data generator. Again, we can replace the parameter derivative with the abstract derivative. Then we apply Lemma~\ref{lem:killing_norm_iomega_ox_H} to derive,
\begin{equation*}
    \norm{(i\ketbra{1_m}{1_m} \ox H)_{\frak{g}_j}}_K^2 = \frac{\tr(\Pi_j H)^2}{\chi_j^2}\norm{(i\ketbra{1_m}{1_m})_{\frak{su}(2^m)}}_K^2 = 2(2^m-1)\lambda_j^2
\end{equation*}
where the projection of $i\ketbra{1_m}{1_m}$ onto $\frak{su}(2^m)$ can be derived from the steps of Lemma~\ref{lem:fnorm_0_state_Om_ox_In} as,
\begin{equation*}
    (i\ketbra{1_m}{1_m})_{\frak{su}(2^m)} = i\ketbra{1_m}{1_m} - i2^{-m}I_{2^m},
\end{equation*}
and the Killing norm,
\begin{equation*}
    \norm{(i\ketbra{1_m}{1_m})_{\frak{su}(2^m)}}_K^2 = 2^{m+1} \norm{i\ketbra{1_m}{1_m} - i2^{-m}I_{2^m}}_F^2 = 2(2^m-1).
\end{equation*}
We then apply Theorem~\ref{them:compact_group_var} to derive,
\begin{equation}
\begin{aligned}
    \Var[\partial_H\ell(\rho,O)] &=  \frac{2(2^m-1)}{(2^{2m} - 1)^2}\sum_{j=1}^{r} \lambda_j^2\norm{O_{\frak{g}_j}}_F^2\|\rho_{\frak{g}_j}\|_F^2\\
    &= \frac{2(2^m-1)\norm{O_m}_F^2}{(2^{2m} - 1)^2}\sum_{j=1}^{r} \lambda_j^2 \tr(\Pi_j O_n)^2\tr(\Pi_j\rho_n)^2 \norm{(\rho_m)_{i\frak{su}(2^m)}}_F^2 \\
    &= \frac{2(2^m-1)(1-2^{-m})\norm{O_m}_F^2}{(2^{2m} - 1)^2} \sum_{j=1}^{r} \lambda_j^2\chi_j^2 \tr(\Pi_j\rho_n)^2,
\end{aligned}
\end{equation}
where we have taken $\rho_m = \ketbra{0_m}{0_m}$ as from previous sections. Now suppose  $\norm{O_m}_F^2\sim \Theta(1)$ and $m\sim\cO(\log(n))$. Since $\chi_j\geq 1$, by Cauchy-Schwarz inequality, one has,
\begin{equation*}
    \sum_{j=1}^{r} \lambda_j^2\chi_j^2 \tr(\Pi_j\rho_n)^2 \geq \sum_{j=1}^r \lambda_j^2 \tr(\Pi_j\rho_n)^2 \geq \frac{\tr(H\rho_n)^2}{r}.
\end{equation*}
Based on the assumption where $\tr(H\rho_n)^2$ scales as $\Omega(r/\Op{poly}(n))$, we can then complete the proof,
\begin{equation*}
    \Var[\partial_H\ell(\rho,O)] \geq \Omega\left(\frac{1}{n^2} \frac{r}{r\Op{poly}(n)}\right) \geq \Omega\left(1/\Op{poly}(n) \right).
\end{equation*}
\end{proof}

As we can see, the trainability of $\bm{\varphi}$ in the data embedding register is also dominated by the interactions between the data generator and the initial state $\rho_n$. Interestingly, the trainability of $\bm{\varphi}$, slightly different from the results of $\bm{\theta}$, is closely related to the expectation value of $\tr(H\rho_n)$. This has indicated the fact that a designed initial state $\rho_n$ from the QRENN model can magnify the `efficiency' of embedding information to the data processing register.  

We also perform numerical gradient statistics for the QRENN with different datasets embedded, as well as the rate of concentration for the networks regarding the transition towards the fully mixing phase in the represented unitary group by the growth of the number of slots in the model. To construct the data processing register, we choose the $R_Y$-$R_Z$ circuit template~\cite{Sim2019expressibility} with multiple layers to construct the processing unitary $W(\bm{\theta}_t)$, shown in Figure~\ref{fig:QRENN_circuit}(b). For each sampling experiment, the initial state is fixed to $\ket{0}^{\ox m+n}$ and the observable is chosen to be $O_m = Z^{\ox m}$. The objective function is defined in the trace form $\ell(\rho, O)$. The tunable parameters $\bm{\varphi}_t$ are attached within each data embedding register, making the embedding block a parameterized oracle.

\begin{figure}[t!]
    \centering
    \includegraphics[width=1.0\linewidth]{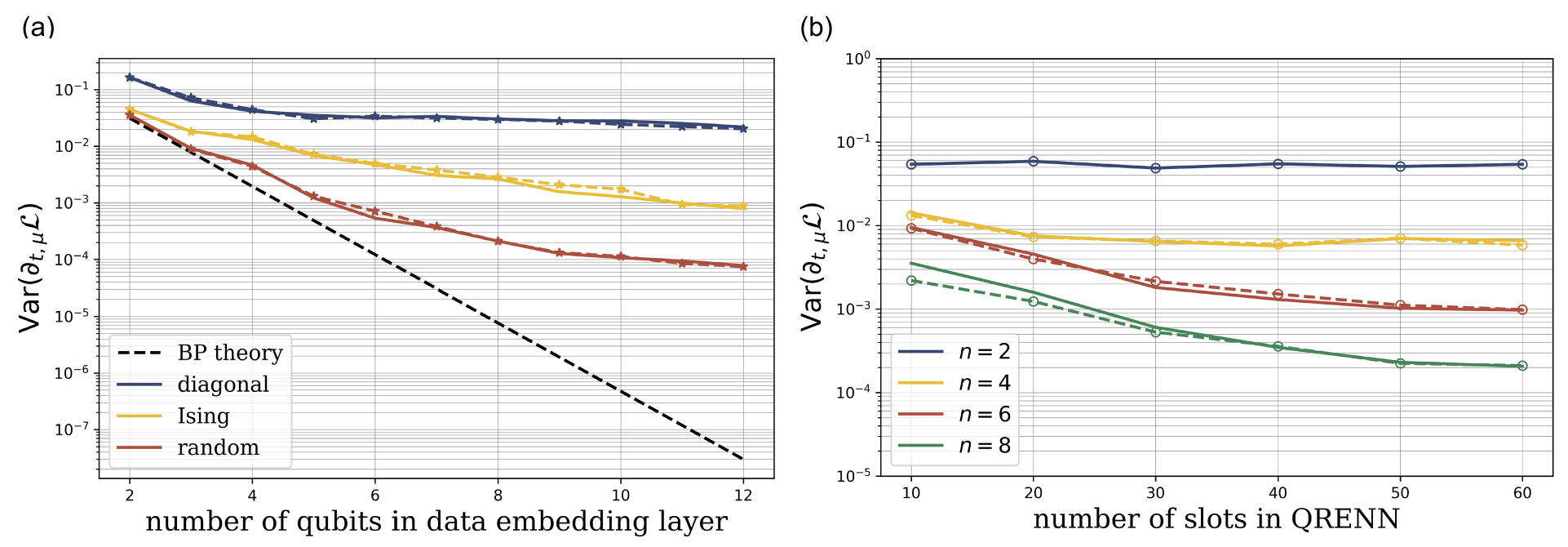}
    \caption{\textbf{Gradient statistics for QRENNs on various types of data embedded with respect to (a) the number of qubits in data embedding register and (b) the number of slots in the model regarding $\theta_{1,1}$ and $\varphi_1$.} In subfigure (a), the blue lines are derived from embedding random diagonal Hermitian data at each slot of the model, while the yellow and red lines represent repeatedly embedding one fixed random Ising model Hamiltonian and random Hermitian matrix, respectively, from each sampling iteration. The plot showcases our QRENN model experiences a polynomial decay in the gradient variance, diverging from BP; {In subfigure (b), our QRENN model is randomly embedded with one fixed Hermitian matrix for each \(n\)-qubit configuration, with each configuration distinguished by a unique color. This setup illustrates the rapid mixing towards \(e^{\liedqnn}\) within the polynomial slots of the network.}}
    \label{fig:gradient_test_dqnn_appendix}
\end{figure}

The gradient sampling experiments are performed based on  \textit{tensorcircuit} python quantum machine learning toolkits. For each experiment, both parameters $\bm{\theta}$ and $\bm{\varphi}$ are independently addressed, and the results are plotted in solid and dashed lines, respectively. We do sampling experiments with randomly initialized QRENNs where each tunable parameters are uniformly distributed in $[0,2\pi)$. The gradient statistics are illustrated in Figure~\ref{fig:gradient_test_dqnn_appendix}. Each point in the figure is derived by sampling $500$ randomized networks. From Figure~\ref{fig:gradient_test_dqnn_appendix} (a), the blue lines indicate the sampling results from embedding random diagonal unitary matrices $\Op{diag}(e^{i\phi_1}, e^{i\phi_2}, \cdots, e^{i\phi_{2^n}})$ at each slot of the QRENN model where $(\phi_j)_j$ is uniformly sampled from $[0,2\pi)$. For the other two colored curves, the random data are fixed for each number of qubits and repeatedly embedded into the networks throughout the slots. The yellow lines are derived from embedding random 1D Ising Hamiltonians defined as,
\begin{equation*}
    H_{\Op{Ising}} := \sum_{k=0}^{n-2} J_k Z_k Z_{k+1} + \sum_{k=0}^{n-1} h_k X_k,
\end{equation*}
where each $J_k, h_k$ are distributed uniformly in $[-1,1]$. The red lines represent the result by embedding the Haar random unitary. We can observe that for either cases of embedding commutative data or fixed data Hermitian matrices, the gradient statistics have showcased a divergence in the variance of gradient values from the exponential decay represented by the black dashed line. On the other hand, in Figure~\ref{fig:gradient_test_dqnn_appendix} (b), by fixing a random Hermitian data for each fixed number of qubits, we can observe a rapid convergence towards the concentration of variance values with respect to the number of slots in QRENN. This demonstrates the convergence to the second moment of the variance as a function of the network depth, and hence, clarifies the uses of the abstracted gradient.

\subsection{Proof of the main results on the gradient statistics of the total loss}\label{appendix_subsec:total_loss_gradient}
In this subsection, we demonstrate the lower bound on the gradient variance with respect to the total loss defined in Eq.~\eqref{eq:total_loss}. Here, we consider the QRENN model defined in Figure~\ref{fig:general_frame_dqnn} where the data $\bm{x}$ is carried by the Hamiltonian $H_t(\bm{x})$ at each slot of the network. In particular, let $\cT = \{(\bm{x}_q, y_q)\}_{q=1}^Q$ be the batch training set, and $\{M_{y_q}\}_q$ be the set of POVM. We consider all generators in $\{H_t(\bm{x}_q)\}_t$ for each distinct $q\in [Q]$ are commutative with each other. Focusing on solving quantum supervised learning tasks using the QRENN, we now recall the total loss defined in Eq.~\eqref{eq:total_loss},
\begin{equation}\label{eq:state_discrimination_loss}
    \cL(M_{y_q},\bm{x}_q;\bm{\theta}) = 1- \frac{1}{Q}\sum_{q = 1}^Q \tr(U(\bm{x}_q;\bm{\theta}) \rho_0 U(\bm{x}_q;\bm{\theta})^{\dagger} M_{y_q}).
\end{equation}
Notice that the subgroup $e^\liedqnn$ is data-dependent. Since the circuit is assumed to be sufficiently deep to form $\epsilon$-approximate $2$-design on $e^{\liedqnn}$ regarding any $\bm{x}_q$, we can replace the parameter derivative with respect to $\theta_{t,\mu}$ with the abstract gradient with respect to the Hermitian generator $\Omega = \Omega_{\mu} \ox I_{2^n}$, and derive,
\begin{equation}\label{eq:abs_gradient_total_loss}
    \partial_{\Omega} \cL = -\frac{1}{Q} \sum_{q=1}^Q \tr\left(U^{\dagger}_{g^-}(\bm{x}_q) i\rho_0 U_{g^-}(\bm{x}_q) [\Omega, U_{g^+}(\bm{x}_q) iM_{y_q} U^{\dagger}_{g^+}(\bm{x}_q)]\right).
\end{equation}
% Then the following lower bound on the gradient variance can be derived. 

% \begin{proposition}\label{prop:main_variance_lb_appendix}
%     Consider an ($m+n$)-qubit QRENN model with a set of Hermitian data generators $\cG = \{H_t(\bm{x}_q)\}_t$ embedded via Eq.~\eqref{eq:encoding_map} for each $q\in [Q]$, running on an input state $\rho$. %, and a tester set $\{M_{y_q}\}_q$. 
%     Assume the model circuit is sufficiently deep to form an $\epsilon$-approximate $2$-design on the corresponding dynamic subgroup. Then the total loss function defined in Eq.~\eqref{eq:total_loss} has 
%     \[
%         \EE_{\bm{\theta},\bm{\varphi}}\left[\frac{\partial \cL}{\partial \theta_{t,\mu}}\right] = 0,
%     \] 
%     for all $t\in [T]$ and $\mu\in [L]$. Moreover, denoting the derivative generator as $\Omega$, then
%     \begin{equation*}
%         \Var_{\bm{\theta}, \bm{\varphi}}\left[\frac{\partial \cL}{\partial \theta_{t,\mu}}\right] \geq \frac{1}{Q^2}\sum_{q}\sum_{\bm{\lambda}_q}   \frac{\norm{(M_{y_p})_{\frak{g}_{\bm{\lambda}_q}}}_F^2\norm{\rho_{\frak{g}_{\bm{\lambda}_q}}}_F^2 \norm{\Omega_{\frak{g}_{\bm{\lambda}_q}}}_K^2}{d^2_{\frak{g}_{\bm{\lambda}_q}}},
%     \end{equation*}
%     where for each $\bm{x}_q\in\cT$, the summation is taking over all distinct joint eigenspace $\frak{g}_{\bm{\lambda}_q}$ with respect to the corresponding DLA, and the center $\frak{c}_q$ does not contribute to the variance.
% \end{proposition}
% The above proposition indicates that the choices of the batch training sets can significantly determine the appearance of BP.
Before the formal proof of Proposition~\ref{prop:zero_mean_gradient_lb_grad_var}, we first claim and prove some useful lemmas.
\begin{lemma}\label{lem:adj_int_AoxB}
    Let $G$ be a compact and connected Lie group with Lie algebra $\frak{g}$. Suppose $V$ is a finite-dimensional inner product space $\phi:G\rightarrow \cU(V)$ is a unitary representation of $G$, and $U_g = \phi(g)$. In addition, $a,b\in \frak{g}, A = d\phi(a), B = d\phi(b)$. Then the following holds:
    \begin{equation*}
        \int_G U_g^{\ox 2} (A\ox B) (U_g^{\dagger})^{\ox 2} dg = \sum_{\alpha} \frac{\tr(A_{\frak{g}_{\alpha}}B_{\frak{g}_{\alpha}})}{d_{\frak{g}_\alpha}} K_{\frak{g}_{\alpha}} + A_{\frak{c}} \ox B_{\frak{c}},
    \end{equation*}
    where $A_{\frak{g}_{\alpha}}$ denotes the projection of $A$ onto $\frak{g}_{\alpha}$ and $K_{\frak{g}_{\alpha}}$ is the split Casimir in the subalgebra $\frak{g}_{\alpha}$.
\end{lemma}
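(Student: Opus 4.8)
The plan is to read the left-hand side as a Haar twirl and, using equivariance of $d\phi$, reduce it to a projection onto the $\Ad$-invariant subspace of $\frak{g}\ox\frak{g}$, which I then resolve block by block via the structure theorem and Schur's Lemma.

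First I would invoke the intertwining relation $U_g\, d\phi(x)\, U_g^{\dagger} = d\phi(\Ad_g x)$, valid for every $x\in\frak{g}$ and $g\in G$ because $\phi$ is a homomorphism and $d\phi$ its differential (for matrix groups, $\Ad_g x = g x g^{-1}$). Applying this to both tensor factors turns the integrand into $d\phi(\Ad_g a)\ox d\phi(\Ad_g b)$, and since $d\phi\ox d\phi$ is linear and $g$-independent I can pull it outside:
\[
\int_G U_g^{\ox 2}(A\ox B)(U_g^{\dagger})^{\ox 2}\,dg = (d\phi\ox d\phi)\!\left(\int_G (\Ad_g a)\ox(\Ad_g b)\,dg\right).
\]
Because $G$ is compact with normalized Haar measure, the inner integral is exactly the orthogonal projector $P$ onto the invariant subspace $(\frak{g}\ox\frak{g})^{G}$, where $G$ acts diagonally through $\Ad\ox\Ad$ and orthogonality is with respect to the $\Ad$-invariant trace form.

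Next I identify $(\frak{g}\ox\frak{g})^{G}$ using $\frak{g}=\frak{c}\oplus\bigoplus_\alpha\frak{g}_\alpha$ from the reductive decomposition quoted earlier. On $\frak{c}$ the adjoint action is trivial, since the center commutes with $\frak{g}$; on each simple ideal $\frak{g}_\alpha$ the adjoint representation is irreducible, because $\Ad$-invariant subspaces are ideals and a simple algebra has none that are nontrivial; and for $\alpha\neq\beta$ the summands $\frak{g}_\alpha$ and $\frak{g}_\beta$ are non-isomorphic $G$-representations, as $\frak{g}_\beta$ acts trivially on $\frak{g}_\alpha$ but nontrivially on itself. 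Schur's Lemma then forces $(\frak{g}_\alpha\ox\frak{g}_\beta)^{G}=0$ for $\alpha\neq\beta$ and $(\frak{c}\ox\frak{g}_\alpha)^{G}=(\frak{g}_\alpha\ox\frak{c})^{G}=0$, leaves all of $\frak{c}\ox\frak{c}$ invariant, and — since the Killing form makes each $\frak{g}_\alpha$ self-dual, so that $(\frak{g}_\alpha\ox\frak{g}_\alpha)^{G}$ is one-dimensional by Schur — produces a single invariant line in each $\frak{g}_\alpha\ox\frak{g}_\alpha$, spanned by the split Casimir $K_{\frak{g}_\alpha}=\sum_i e_i\ox e_i$ for an orthonormal basis $\{e_i\}$ of $\frak{g}_\alpha$.

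Finally I evaluate $P$ on each block. Writing $a=a_{\frak c}+\sum_\alpha a_\alpha$ and likewise for $b$, the center piece $a_{\frak c}\ox b_{\frak c}$ survives unchanged, while on $\frak{g}_\alpha\ox\frak{g}_\alpha$ the projection onto $\RR K_{\frak{g}_\alpha}$ is $\frac{\langle K_{\frak{g}_\alpha},\,a_\alpha\ox b_\alpha\rangle}{\|K_{\frak{g}_\alpha}\|^2}K_{\frak{g}_\alpha}=\frac{\langle a_\alpha,b_\alpha\rangle}{d_{\frak{g}_\alpha}}K_{\frak{g}_\alpha}$, using $\langle K_{\frak{g}_\alpha},x\ox y\rangle=\langle x,y\rangle$ and $\|K_{\frak{g}_\alpha}\|^2=d_{\frak{g}_\alpha}$. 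Reapplying $d\phi\ox d\phi$ carries $K_{\frak{g}_\alpha}$ to the split Casimir of the representation, the center term to $A_{\frak c}\ox B_{\frak c}$, and — upon identifying the inner product on $\frak{g}$ with the trace form of the representation so that $\langle a_\alpha,b_\alpha\rangle=\tr(A_{\frak{g}_\alpha}B_{\frak{g}_\alpha})$ — assembles into the claimed identity. I expect the crux to be pinning down $(\frak{g}\ox\frak{g})^{G}$ precisely: showing that distinct simple ideals yield non-isomorphic adjoint summands so the off-diagonal and center-crossed terms vanish, that each diagonal block contributes exactly the one-dimensional Casimir line, and that the normalizations of the trace form, the orthonormal basis, and the index of $\phi$ are kept mutually consistent so that the coefficient emerges as $\tr(A_{\frak{g}_\alpha}B_{\frak{g}_\alpha})/d_{\frak{g}_\alpha}$ rather than off by a representation-dependent constant.
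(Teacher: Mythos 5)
Your proposal is correct and follows essentially the same route as the paper's proof: both rest on the reductive decomposition $\frak{g} = \frak{c} \oplus \bigoplus_{\alpha}\frak{g}_{\alpha}$, the observation that the simple ideals are pairwise non-isomorphic $G$-modules (so Schur's Lemma kills all cross terms and the center passes through untouched), and the identification of the diagonal-block contribution with the split Casimir line. The only difference is presentational: you evaluate each diagonal block by characterizing the Haar average as the orthogonal projection onto $(\frak{g}_{\alpha}\ox\frak{g}_{\alpha})^{G} = \RR\, K_{\frak{g}_{\alpha}}$ and computing the projection coefficient directly, whereas the paper reaches the same coefficient via Schur orthogonality of the matrix entries of $\Op{Ad}_g$ — two equivalent formulations of the same computation, including the shared (implicit) normalization bookkeeping for the representation index $I_{\phi}$.
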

\begin{proof}
    Since $\frak{g}$ is reductive, let $\frak{g} = \frak{c} \oplus \bigoplus_\alpha \frak{g}_{\alpha}$, the adjoint representation can break into a direct sum of irreducible representations regarding $\frak{g}_\alpha$ and the center $\frak{c}$. Since $G$ is connected, the $\frak{g}_{\alpha}$ also correspond to non-isomorphic simple $G$-modules. By the Schur lemma, we have,
    \begin{equation*}
    \begin{aligned}
        \int_G U_g^{\ox 2} (A\ox B) (U_g^{\dagger})^{\ox 2} dg &= \left(\sum_{\alpha} \int_G \Adj{U_{\frak{g}_{\alpha}}}{(A_{\frak{g}_{\alpha}})} \ox \Adj{U_{\frak{g}_{\alpha}}}{(B_{\frak{g}_{\alpha}})} dg_{\alpha}\right)+ A_{\frak{c}} \ox B_{\frak{c}}\\
        &= \sum_{\alpha} \frac{\tr(A_{\frak{g}_{\alpha}}B_{\frak{g}_{\alpha}})}{d_{\frak{g}_\alpha}} K_{\frak{g}_{\alpha}} + A_{\frak{c}} \ox B_{\frak{c}},
    \end{aligned}
    \end{equation*}
    where the last equality holds due to the follows: let $a,b\in \frak{g}$ such that $a = \sum_j a_j e_j$ and $b = \sum_j b_j e_j$ where $\{e_j\}$ forms a basis for $\frak{g}$ and $E_j = d\phi(e_j)$ is skew-Hermitian forming a basis for $d\phi(\frak{g})$. Then the integration,
    \begin{equation*}
    \begin{aligned}
        \int_G \Adj{U_g}{(A)} \ox \Adj{U_g}{(B)} dg &= \frac{1}{I_{\phi}}\sum_{ii'jj'}a_i b_{i'}\int_G[\Op{Ad}_g]_{ij}[\Op{Ad}_g]_{i'j'}dg (E_j\ox E_{j'})\\
        &= \frac{1}{I_{\phi}d_{\frak{g}}}\sum_{ii'jj'} a_i b_{i'} \delta_{ii'} \delta_{jj'} E_j \ox E_{j'} = \frac{1}{I_{\phi}d_{\frak{g}}}\sum_{i} a_i b_{i} \sum_{j} E_j \ox E_{j}.
    \end{aligned}
    \end{equation*}
    The second equality holds due to the Schur orthogonality and by definition $\sum_j E_j \ox E_j = I_{\phi} K$ for $K$ the split Casimir operator.
\end{proof}

\begin{proposition}\label{prop:exp_gradient_appendix}
    Given the total loss defined in Eq.~\eqref{eq:state_discrimination_loss}, then for any $M_{y_q}, \rho_0\in \frak{gl}(V)$, we have,
    \begin{equation*}
        \EE_{g^{\pm} \sim \mu^{\ox 2}}[\partial_{\Omega} \cL] = 0,
    \end{equation*}
    where $\EE_{g^{\pm} \sim \mu^{\ox 2}}[\; \cdot\;] \coloneqq \EE_{g^{\pm}(\bm{x}_1) \sim \mu^{\ox 2}(\bm{x}_1)}\circ \cdots \circ \EE_{g^{\pm}(\bm{x}_Q) \sim \mu^{\ox 2}(\bm{x}_Q)}[\; \cdot\;]$.
\end{proposition}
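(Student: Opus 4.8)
The plan is to reduce the statement to the single-term vanishing-mean result of Lemma~\ref{lem:vanishing_mean}, using only linearity of the trace and of the (iterated) expectation together with the observation that the $q$-th summand of Eq.~\eqref{eq:abs_gradient_total_loss} depends on the group elements $g^\pm(\bm{x}_q)$ alone.

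First I would rewrite the abstract gradient of the total loss as a normalized sum of single-term gradients. Matching Eq.~\eqref{eq:abs_gradient_total_loss} against the definition $\partial_H\ell(A,O) = \tr(U^\dagger_{g^-} A U_{g^-}[H,U_{g^+} O U^\dagger_{g^+}])$ with $A = i\rho_0$, $O = iM_{y_q}$, and $H = \Omega$, one obtains
\[
\partial_\Omega\cL = -\frac{1}{Q}\sum_{q=1}^Q \partial_\Omega\ell(i\rho_0, iM_{y_q}),
\]
where the representation and the random elements $g^\pm$ entering the $q$-th term are those of the data-dependent subgroup $e^{\liedqnn}$ associated with $\bm{x}_q$.

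Next I would interchange expectation with the finite sum and exploit the factorized form of $\EE_{g^\pm\sim\mu^{\ox 2}}$. Since this expectation is by definition the composition $\EE_{g^\pm(\bm{x}_1)\sim\mu^{\ox 2}(\bm{x}_1)}\circ\cdots\circ\EE_{g^\pm(\bm{x}_Q)\sim\mu^{\ox 2}(\bm{x}_Q)}$, and the $q$-th summand is constant with respect to every $g^\pm(\bm{x}_{q'})$ with $q'\neq q$, all factors but the $q$-th act trivially, leaving
\[
\EE_{g^\pm\sim\mu^{\ox 2}}[\partial_\Omega\cL] = -\frac{1}{Q}\sum_{q=1}^Q \EE_{g^\pm(\bm{x}_q)\sim\mu^{\ox 2}(\bm{x}_q)}\bigl[\partial_\Omega\ell(i\rho_0, iM_{y_q})\bigr].
\]
For each fixed $q$ the relevant group $e^{\liedqnn}$ is compact and connected, as already noted for $\liedqnn\subset\frak{u}(2^{m+n})$, so Lemma~\ref{lem:vanishing_mean} applies with $A = i\rho_0$, $O = iM_{y_q}$, and $H = \Omega$, forcing each summand to vanish; the whole expression is then $0$.

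The one point needing care is the bookkeeping of the nested, data-dependent expectation: I must confirm that the $q$-th term is indeed a function of $g^\pm(\bm{x}_q)$ only, so that the iterated-expectation definition genuinely lets the remaining $Q-1$ factors drop out. Once this is secured, the argument is a clean term-by-term invocation of the vanishing-mean lemma and requires no second-moment ($2$-design) input — only the first-moment left-invariance that underlies Lemma~\ref{lem:vanishing_mean}.
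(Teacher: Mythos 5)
Your proposal is correct and follows essentially the same route as the paper's own proof: linearity of expectation to treat each summand separately, the observation that the $q$-th term depends only on $g^{\pm}(\bm{x}_q)$ so the other factors of the iterated expectation act trivially, and a term-by-term application of the vanishing-mean result (Lemma~\ref{lem:vanishing_mean}) with $A = i\rho_0$, $O = iM_{y_q}$, $H = \Omega$. The only cosmetic difference is that the paper invokes the lemma via its source (Lemma C.1 of the cited reference) rather than by its in-paper label, and your explicit check that $e^{\liedqnn}$ is compact and connected makes the hypothesis verification slightly more self-contained.
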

\begin{proof}
Based on the expression of Eq.~\eqref{eq:abs_gradient_total_loss} and the assumption of the circuit. We can directly compute the expectation value of the gradient by taking $g^{\pm}(\bm{x}_q) \sim \mu^{\ox 2}(\bm{x}_q)$ for any $\bm{x}_q$ in the batch training set. Due to the linearity of expectation, we have,
\begin{equation*}
\begin{aligned}
    \EE_{g^{\pm} \sim \mu^{\ox 2}}[\partial_{\Omega} \cL] &= -\frac{1}{Q} \sum_{q=1}^Q \EE_{g^{\pm} \sim \mu^{\ox 2}}\left[\tr\left(U^{\dagger}_{g^-}(\bm{x}_q) i\rho_0 U_{g^-}(\bm{x}_q) [\Omega, U_{g^+}(\bm{x}_q) iM_{y_q} U^{\dagger}_{g^+}(\bm{x}_q)]\right)\right].
\end{aligned}
\end{equation*}
We suppose $iM_{y_q}$ are some fixed operators in $\frak{gl}(V)$ that are pre-determined before training, therefore, satisfy the conditions of Lemma C.1 from~\cite{Fontana2024characterizing}; $i\rho_0 \in \frak{gl}(V)$ automatically. Besides, since each data $\bm{x}_q$ is independent and therefore, the order of integrations can be switched. As a result, each term in the summation vanishes, and we have $\EE_{g^{\pm} \sim \mu^{\ox 2}}[\partial_{\Omega} \cL] = 0$.
\end{proof}

\vspace{4pt}
\noindent\textbf{Proof of Proposition~\ref{prop:zero_mean_gradient_lb_grad_var} --.} We are now ready to prove the proposition. Firstly the expectation value of $\frac{\partial\cL}{\partial \theta_{t,\mu}}$ due to Proposition~\ref{prop:exp_gradient_appendix}. Thus, in practice, the variance of the abstract gradient values can be computed up to the non-vanishing second moment:
\begin{equation*}
    \Var[\partial_{\Omega} \cL] = \EE_{g^{\pm} \sim \mu^{\ox 2}}[(\partial_{\Omega} \cL)^2].
\end{equation*}
Let us first compute $(\partial_{\Omega} \cL)^2$ via direct calculations:
\begin{equation*}
\begin{aligned}
    Q^2(\partial_{\Omega} \cL)^2\Big|_{g^{\pm}} &= \sum_{pq} \tr(U_{g^-}(\bm{x}_p)^{\dagger} (i\rho_0) U_{g^-}(\bm{x}_p)[\Omega, U_{g^+}(\bm{x}_p)(iM_{y_p}) U_{g^+}(\bm{x}_p)^{\dagger}]) \\
    &\qquad \times \tr(U_{g^-}(\bm{x}_q)^{\dagger} (i\rho_0) U_{g^-}(\bm{x}_q)[\Omega, U_{g^+}(\bm{x}_q)(iM_{y_q}) U_{g^+}(\bm{x}_q)^{\dagger}])\\
    &= \sum_{pq} \tr\Big\{(i\rho_0)^{\ox 2} U_{g^-}(\bm{x}_p)\ox U_{g^-}(\bm{x}_q)\\
    &\qquad\quad\times 
    [\Omega, U_{g^+}(\bm{x}_p)(iM_{y_p}) U_{g^+}(\bm{x}_p)^{\dagger}] \ox [\Omega, U_{g^+}(\bm{x}_q)(iM_{y_q}) U_{g^+}(\bm{x}_q)^{\dagger}]\\
    &\qquad\quad\times U_{g^-}(\bm{x}_p)^{\dagger}\ox U_{g^-}(\bm{x}_q)^{\dagger}\Big\}.
\end{aligned}
\end{equation*}
For any fixed data vectors $\bm{x}_p$ and $\bm{x}_q$, the construction of QRENN ensures the existence of $V_{pq}$ such that, $U_{g^+}(\bm{x}_q) = U_{g^+}(\bm{x}_p)V_{pq,+}$ and $U_{g^-}(\bm{x}_q) = V_{pq,-}U_{g^-}(\bm{x}_p)$. Substitute it into the expression to derive,
\begin{equation*}
\begin{aligned}
    Q^2(\partial_{\Omega} \cL)^2\Big|_{g^{\pm}} &= \sum_{pq} \tr\Big\{(I\ox V^{\dagger}_{pq,-})(i\rho_0)^{\ox 2}(I\ox V_{pq,-}) U_{g^-}(\bm{x}_p)^{\ox 2}\\
    &\qquad\quad\times 
    [\Omega, U_{g^+}(\bm{x}_p)(iM_{y_p}) U_{g^+}(\bm{x}_p)^{\dagger}] \ox [\Omega, U_{g^+}(\bm{x}_q)(iM_{y_q}) U_{g^+}(\bm{x}_q)^{\dagger}]\\
    &\qquad\quad\times (U_{g^-}(\bm{x}_p)^{\dagger})^{\ox 2}\Big\}\\
    &= \sum_{pq} \tr\Big\{(I\ox V^{\dagger}_{pq,-})(i\rho_0)^{\ox 2}(I\ox V_{pq,-}) U_{g^-}(\bm{x}_p)^{\ox 2}\\
    &\qquad\quad\times 
    [\Omega, U_{g^+}(\bm{x}_p)(iM_{y_p}) U_{g^+}(\bm{x}_p)^{\dagger}] \ox [\Omega, U_{g^+}(\bm{x}_p)V_{pq,+}(iM_{y_q}) V^{\dagger}_{pq,+}U_{g^+}(\bm{x}_p)^{\dagger}]\\
    &\qquad\quad\times (U_{g^-}(\bm{x}_p)^{\dagger})^{\ox 2}\Big\}.
\end{aligned}
\end{equation*}
We then aim to compute the integration over the group $e^{\liedqnn}$ with respect to each $\bm{x}_p$. For the $p$-th data embedded in, let us ignore the trace, $\rho$, and expand the commutators into four terms, followed by the integration over the group. For convenience, we write $U_{g^{\pm}}(\bm{x}_p) = U_{g^{\pm}_p}$ using $g^{\pm}_p$ to distinguish different subgroup generated by $\bm{x}_p$. By denoting $X_{+,p} = \int_G U_{g^+_p}^{\ox 2} (iM_{y_p} \ox i\widetilde{M}_{y_q})(U_{g^+_p}^{\dagger})^{\ox 2} dg^+_p$ where $\widetilde{M}_{y_q} = V_{pq,+}(M_{y_q}) V^{\dagger}_{pq,+}$, we now apply Lie structure proposition where each $\liedqnn$ with respect to $\bm{x}_p$ being decomposed into simple ideals, denoted as  $\frak{g}_{\bm{\lambda}_p}$, using Lemma~\ref{lem:adj_int_AoxB}, we can first compute the inner integration for $X_{+,p}$ as,
\begin{equation*}
    X_{+,p} = \sum_{\bm{\lambda}} \frac{\tr((M_{y_p})_{\frak{g}_{\bm{\lambda}_p}}(\widetilde{M}_{y_q})_{\frak{g}_{\bm{\lambda}_p}})}{d_{\frak{g}_{\bm{\lambda}_p}}} K_{\frak{g}_{\bm{\lambda}_p}} + (M_{y_p})_{\frak{c}_p}\ox(\widetilde{M}_{y_q})_{\frak{c}_p}
\end{equation*}
Notice that the center of $\liedqnn$ would contribute nothing to the variance as $[\Omega, (M_{y_p})_{\frak{c}}] = [\Omega, (\widetilde{M}_{y_q})_{\frak{c}}] = 0$. Therefore, by ignoring the center terms, the only contribution comes from the four terms regarding the simple ideals. Let $\{E_j\}_j$ form an orthonormal basis with respect to the Hilbert-Schmidt norm of skew-Hermitian matrices for $\frak{g}_{\bm{\lambda}_p}$. We then rearrange the ordering of the terms and take the integral over $U_{g^-_p}$ to derive,
\begin{equation*}
\begin{aligned}
    &\frac{1}{I_{\phi}}\sum_{\bm{\lambda}_p} \frac{\tr((M_{y_p})_{\frak{g}_{\bm{\lambda}_p}}(\widetilde{M}_{y_q})_{\frak{g}_{\bm{\lambda}_p}})}{d_{\frak{g}_{\bm{\lambda}_p}}}\sum_{j=1}^{d_{\frak{g}_{\bm{\lambda}_p}}}\int_G U_{g_p} [\Omega, E_j]U_{g_p}^{\dagger} \ox U_{g_p} [\Omega,E_j]  U_{g_p}^{\dagger} \ dg_p \\
    &= \sum_{\bm{\lambda}_p} \frac{\tr((M_{y_p})_{\frak{g}_{\bm{\lambda}_p}}(\widetilde{M}_{y_q})_{\frak{g}_{\bm{\lambda}_p}}) \norm{\Omega_{\frak{g}_{\bm{\lambda}_p}}}_K^2}{d^2_{\frak{g}_{\bm{\lambda}_p}}} K_{\frak{g}_{\bm{\lambda}_p}}.
\end{aligned}
\end{equation*}
The results are still a summation over $\bm{\lambda}_p$ due to the closeness of the Lie bracket $[\Omega,E_j] \in \frak{g}_{\bm{\lambda}_p}$ on each simple ideal. At last, we reintroduce the $\rho$ terms in the original expression and use the property that, for any $A=d\phi(a), B = d\phi(b)$ where $a,b \in \frak{g}_{\bm{\lambda}}$ with  $d\phi: \frak{g}_{\bm{\lambda}} \rightarrow\frak{u}(V)$, we have,
\begin{equation}
\begin{aligned}
    &\tr((A\ox B)K_{\frak{g}_{\bm{\lambda}}}) = I_{\phi}^{-1}\sum_j \tr(AE_j)\tr(BE_j) = \tr(A_{\frak{g}_{\bm{\lambda}}}B_{\frak{g}_{\bm{\lambda}}}).
\end{aligned}
\end{equation}
The representation index $I_{\phi}$ is canceled out due to the fact that $\tr(E_i E_j) = I_{\phi} \delta_{ij}$. We therefore derive,
\begin{equation*}
\begin{aligned}
    \EE_{g^{\pm} \sim \mu^{\ox 2}}[(\partial_{\Omega}\cL)^2] &= \frac{1}{Q^2}\sum_{pq}\sum_{\bm{\lambda}_p} \frac{\tr((M_{y_p})_{\frak{g}_{\bm{\lambda}_p}}(\widetilde{M}_{y_q})_{\frak{g}_{\bm{\lambda}_p}}) \norm{\Omega_{\frak{g}_{\bm{\lambda}_p}}}_K^2}{d^2_{\frak{g}_{\bm{\lambda}_p}}} \tr((I\ox V_{pq,-})(i\rho)^{\ox 2}(I\ox V^{\dagger}_{pq,-})K_{\frak{g}_{\bm{\lambda}_p}})\\
    &= \frac{1}{Q^2}\sum_{pq}\sum_{\bm{\lambda}_p} \frac{\tr((M_{y_p})_{\frak{g}_{\bm{\lambda}_p}}(\widetilde{M}_{y_q})_{\frak{g}_{\bm{\lambda}_p}}) \tr(\rho_{\frak{g}_{\bm{\lambda}_p}} (V_{pq,-}\rho V^{\dagger}_{pq,-})_{\frak{g}_{\bm{\lambda}_p}}) \norm{\Omega_{\frak{g}_{\bm{\lambda}_p}}}_K^2}{d^2_{\frak{g}_{\bm{\lambda}_p}}}.
\end{aligned}
\end{equation*}
Now, since the $\tr(AB) \geq 0$ for any $A,B\geq 0$, we can ignore the cross terms of $p\neq q$ and lower bound the variance as,
\begin{equation*}
\begin{aligned}
    \EE_{g^{\pm} \sim \mu^{\ox 2}}[(\partial_{\Omega}\cL)^2] &\geq \frac{1}{Q^2}\sum_{p=q} \sum_{\bm{\lambda}_p} \frac{\tr((M_{y_p})_{\frak{g}_{\bm{\lambda}_p}}(\widetilde{M}_{y_q})_{\frak{g}_{\bm{\lambda}_p}}) \tr(\rho_{\frak{g}_{\bm{\lambda}_p}} (V_{pq,-}\rho V^{\dagger}_{pq,-})_{\frak{g}_{\bm{\lambda}_p}}) \norm{\Omega_{\frak{g}_{\bm{\lambda}_p}}}_K^2}{d^2_{\frak{g}_{\bm{\lambda}_p}}}\\
    &= \frac{1}{Q^2}\sum_{p}\sum_{\bm{\lambda}_p}   \frac{\norm{(M_{y_p})_{\frak{g}_{\bm{\lambda}_p}}}_F^2\norm{\rho_{\frak{g}_{\bm{\lambda}_p}}}_F^2 \norm{\Omega_{\frak{g}_{\bm{\lambda}_p}}}_K^2}{d^2_{\frak{g}_{\bm{\lambda}_p}}},
\end{aligned}
\end{equation*}
where the second equality holds due to the fact that when $p=q$, $V_{p=q, \pm} = I$ and we have $\widetilde{M}_{y_q} = M_{y_p}$. 

\subsection{Absence of barren plateaus with local labeling measurements}\label{appendix_subsec:proof_of_them_no_BP}
In this subsection, we apply the results from the lower bound of the gradient variance in Proposition~\ref{prop:zero_mean_gradient_lb_grad_var} to investigate the gradient statistics for the classification problems using QRENN. Here, we consider learning features of quantum Hamiltonians by setting $\cT:=\{(X_q, y_q)\}_q$
and treat $X_q$ as the data embedded into the network. In this case, we set $H_t(X_q) = X_q$ for $t\in [T]$ throughout the entire circuit, and specifically choose $M_{y_q}$ that are locally acting on the processing registers at the end. Then, the subalgebras are completely determined by the projections onto the eigenspaces of each $X_q$, and $\bm{\lambda}$ becomes a scale representing the distinct eigenvalues of $X_q$.
% \begin{theorem}
%     Under the assumptions in Proposition~\ref{prop:zero_mean_gradient_lb_grad_var}, let $m$ scale as $\cO(\log(n))$, input state $\rho = \ketbra{\psi}{\psi}\ox\rho_n$ for some $m$-qubit pure state $\ket{\psi}$, and $n$-qubit state $\rho_n$. Suppose that each processing Hermitian $\Omega_{\mu} \ox I_{2^n}$ and labeling measurement $M_{y_q} \ox I_{2^n}$ acting locally on the processing register, and $\norm{\Omega_{\mu}}^2_F, \norm{M_{y_q}}^2_F$ scales as $\Omega(1)$ for $q\in[Q]$. If the batch size $Q$ scales as $\cO(\Op{poly}(n))$ and at least one data $X_q$ offers $\Omega(1/\Op{poly}(n))$ scaled $R_{X_{q}}(\rho_n)$. Then,
%     $$\Var_{\bm{\theta}, \bm{\varphi}}[\partial_{t,\mu} \cL] \geq \Omega(1/\Op{poly}(n)).$$
% \end{theorem} 
\vspace{4pt}
\noindent\textbf{Proof of Proposition~\ref{prop:variance_dqnn_On_In} --.}
Recalling the results from the lower bound proposition, we have,
\begin{equation*}
\begin{aligned}
    \EE_{g^{\pm} \sim \mu^{\ox 2}}[(\partial_{\Omega}\cL)^2] 
    \geq \frac{1}{Q^2}\sum_{q}\sum_{\bm{\lambda}_q}\frac{\norm{(M_{y_q})_{\frak{g}_{\bm{\lambda}_q}}}_F^2\norm{\rho_{\frak{g}_{\bm{\lambda}_q}}}_F^2 \norm{\Omega_{\frak{g}_{\bm{\lambda}_q}}}_K^2}{d^2_{\frak{g}_{\bm{\lambda}_q}}}.
\end{aligned}
\end{equation*}
Since the labelling POVMs and the gradient operator are locally acting on the data processing register. Denote $\Omega = \Omega_{\mu} \ox I_{2^n}$, $M_{y_q} = M_{y_q} \ox I_{2^n}$ and $\rho = \rho_m \ox \rho_n$. 
By applying Eq.~\eqref{eq:exact_var_trace_form_loss_grad} in the proof of Proposition~\ref{prop:variance_data_processing_layer} to the inner summation with respect to the Lie algebraic structure regarding the $q$-th data $X_q$, the lower bound can be further reduced to,
\begin{equation*}
\begin{aligned}
    \EE_{g^{\pm} \sim \mu^{\ox 2}}[(\partial_{\Omega}\cL)^2] 
    &\geq \frac{1}{Q^2}\sum_{q}\sum_{\bm{\lambda}_{q}}\frac{\norm{(M_{y_{q}})_{\frak{g}_{\bm{\lambda}_{q}}}}_F^2\norm{\rho_{\frak{g}_{\bm{\lambda}_{q}}}}_F^2 \norm{\Omega_{\frak{g}_{\bm{\lambda}_{q}}}}_K^2}{d^2_{\frak{g}_{\bm{\lambda}_{q}}}}\\
    &\geq \frac{1}{Q^2}\frac{2^{m+1}\norm{\Omega_{\mu}}_F^2}{(2^{2m} - 1)^2}\sum_{q} \norm{M_{y_{q}}}_F^2 R^2_{X_{q}}(\rho_n).
\end{aligned}
\end{equation*}
Now, suppose $\norm{\Omega_{\mu}}^2_F$ and $\norm{M_{y_q}}^2_F$ scales constantly. Since at least one data point has a sufficiently large eigenspace overlap. Let $q'\in [Q]$ denote those data in $\cT$ contributing polynomially lower-bounded $R_{X_{q'}}(\rho_n)$. With polynomially growth of batch training size $Q$ and taking $m\sim \cO(\log(n))$, we derive,
\begin{equation*}
\begin{aligned}
    \EE_{g^{\pm} \sim \mu^{\ox 2}}[(\partial_{\Omega}\cL)^2] 
    \geq \Omega\left(\frac{2n}{(n^2-1)^2}\sum_{q'}R^2_{X_{q'}}(\rho_n)\right) \geq \Omega\left(1/\Op{poly}(n)\right),
\end{aligned}
\end{equation*}
and hence the proof.

\section{Joint eigenspace overlap for supervised learning}\label{app:joint_eig_overlap}

In the application of supervised learning on quantum Hamiltonian classification, we have fixed the embedded data the same throughout the entire QRENN model. In order to satisfy the preconditions of our trainability theorem for QRENN. An appropriate probe state $\rho_n$ should be selected. Start with the Pauli and involutory feature sets. Notice that for any $n$-qubit involutory Hermitian matrices $P$, there are only two projections $\{\Pi_+, \Pi_-\}$ which characterizes its joint eigenspaces,
\begin{equation*}
    \Pi_+ = \frac{I_{2^n} + P}{2}; \quad \Pi_- = \frac{I_{2^n} - P}{2}.
\end{equation*}
From the experimental setup, we have fixed the input probe state $\rho_n = \ketbra{+}{+}^{\ox n}$. The corresponding (joint) eigenspace overlap can be determined,
\begin{equation*}
\begin{aligned}
    R^2_P(\rho_n) = \tr(\Pi_+ \rho_n)^2 + \tr(\Pi_- \rho_n)^2 = \frac{1}{2}(1 + \tr(P\rho_n)^2)
\end{aligned}
\end{equation*}
The average eigenspace overlap among all involutory matrices is derived as,
\begin{equation*}
\begin{aligned}
    \EE_{P}[R_P^2(\rho_n)] &= \frac{1}{2} + \EE_{P}[\tr(P\rho_n)^2] \geq \frac{1}{2}.
\end{aligned}
\end{equation*}
In particular, any $n$-qubit Pauli matrix is automatically an involutory matrix. The above construction then leads to a constantly lower bounded joint eigenspace overlap. On the other hand, for the diagonal feature set, we have constructed a depolarizing state $\rho_n = \frac{1}{2}\ketbra{+}{+}^{\otimes n} + \frac{1}{2}I_{2^n} /2^n$. Thus, the overlap,
\begin{equation*}
    R^2_{D}(\rho_n) = \sum_j \tr(\Pi_j (\frac{1}{2}\ketbra{+}{+}^{\otimes n} + \frac{1}{2}I_{2^n} / 2^n))^2 = \frac{1}{4}\sum_j \tr(\Pi_j\ketbra{+}{+}^{\otimes n}) + \chi_j / 2^n)^2.
\end{equation*}
Notice that we consider the diagonal Hermitian matrices $D$'s with each diagonal entry uniformly distributed from $[0,\pi)$, the probability of getting $\tr(\Pi_j\ketbra{+}{+}^{\otimes n}) = 0$ for all $j$ is almost zero. This can then lead to an average overlap lower bounded,
\begin{equation*}
    \EE_{D}[R^2_{D}(\rho_n)] = \frac{1}{4}(1 + \chi_j /2^n)^2 \geq \frac{1}{4}.
\end{equation*}

\begin{figure}[t!]
    \centering
    \includegraphics[width=1\linewidth]{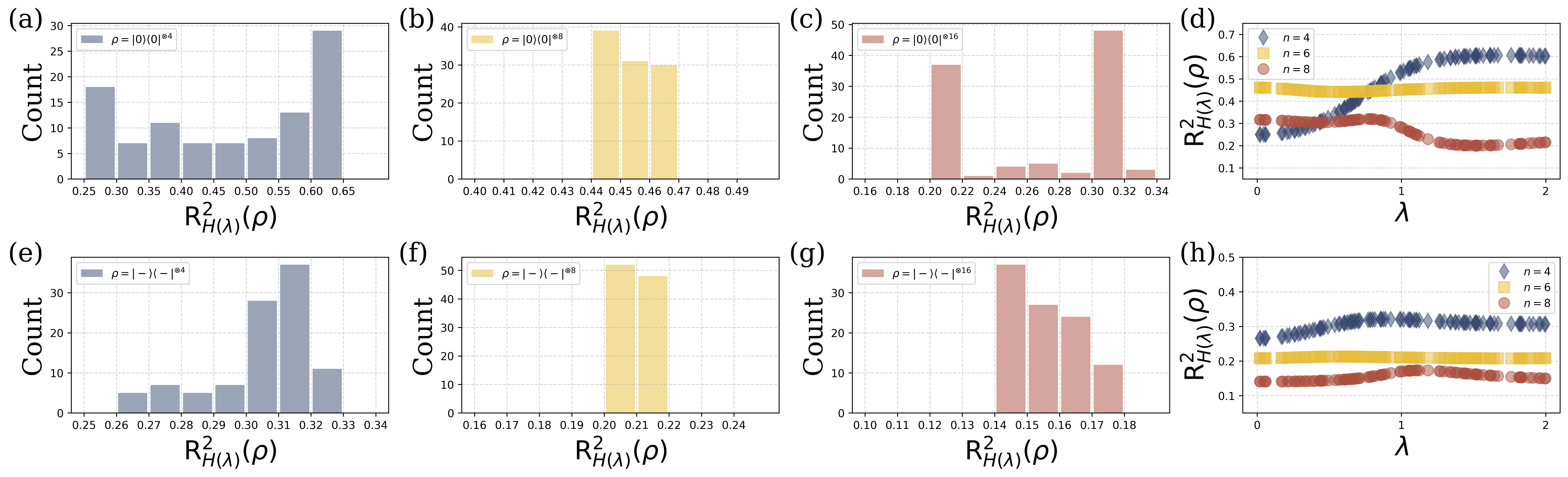}
    \caption{\textbf{Numerical demonstration of the trainability of the QRENN for detecting the SPT phase in a one-dimensional cluster-Ising model with periodic boundary conditions.} (a)-(d) correspond to the initial state $\rho_n = \ketbra{0}{0}^{\otimes n}$ and (e)-(h) show results for $\rho_n = \ketbra{-}{-}^{\otimes n}$. 
    In (a)-(c) and (e)-(g), we present histograms of $R_{H(\lambda)}^2(\rho_n)$ for $n=2,4,$ and $8$. 
    In(d) and (h), we  display the dependence of $R_{H(\lambda)}^2(\rho_n)$ on both the Hamiltonian parameter $\lambda$ and the system size ($n=4,6,8$). and observe that $R_{H(\lambda)}^2(\rho_n)$ decreases polynomially with $n$, confirming that the QRENN model remains trainable for detecting the SPT phase as $n$ grows.} \label{fig:trainability_of_detecting_spt_phase_for_minus}

\end{figure}

To demonstrate that our QRENN model can be effectively trained to detect the SPT phase, we perform a series of numerical experiments, as shown in Figure~\ref{fig:trainability_of_detecting_spt_phase_for_minus}, on a one-dimensional cluster-Ising Hamiltonian with periodic boundary conditions defined in Eq.~\eqref{cluster-Ising model with periodic boundary}. We observe that as the system size $n$ increases, $R_{H(\lambda)}^2(\rho)$ decreases polynomially, indicating that while the overlap of $\rho$ with the joint eigenspaces is large with $n$ increases.  We show the QRENN model remains trainable for detecting the SPT phase throughout the range $[4,8]$.

Specifically, for $\rho = \ketbra{0}{0}^{\otimes n}$  in (a)-(d), subfigures (a)-(c) show histograms of $R_{H(\lambda)}^2(\rho)$ for different system sizes $n = 2, 4,$ and $8$, while (d) displays how $R_{H(\lambda)}^2(\rho)$ varies with 
parameter $\lambda$ uniformly distributed from $0$ to $2$. Similarly, for $\rho = \ketbra{-}{-}^{\otimes n}$ in (e)-(h), subfigures (e)-(g) illustrate the corresponding histograms, and (h) captures the dependence on $\lambda$ for $n = 4, 6,$ and $8$. Notably, Figure~\ref{fig:trainability_of_detecting_spt_phase_for_minus}(h) reveals that $R_{H(\lambda)}^2(\ketbra{-}{-}^{\otimes n})$ is nearly independent of $\lambda$, which shows the robustness of this initial state $\rho = \ketbra{-}{-}^{\otimes n}$ in maintaining distinguishable overlaps across the entire parameter range. Overall, these results confirm that our QRENN approach remains trainable for detecting SPT phases for this many-body system in larger system sizes.

\end{document}